\journalname{Communications in Mathematical Physics}
\begin{document}

\title{Ergodicity of some open systems with particle-disk interactions.}
\author{Tatiana Yarmola}
\institute{Department of Mathematics, Mathematics Building, University of Maryland, College Park, MD 20742, USA \\
\email{yarmola@math.umd.edu}}


\maketitle
\begin{abstract}
We consider steady states for a class of mechanical systems with particle-disk interactions coupled to two, possibly unequal, heat baths. We show that any steady state that satisfies some natural assumptions is ergodic and absolutely continuous with respect to a Lebesgue-type reference measure and conclude that there exists at most one absolutely continuous steady state.
\end{abstract}

\section*{Introduction.}
Explaining macroscopic phenomena of open systems from the microscopic dynamics is an intriguing subject in statistical mechanics. By an open system we mean a deterministic and energy conserving system that exchanges energy and matter with multiple heat baths. Some examples of such systems have been studied \cite{Balint,Collet,Anharmonic1,Eckmann,Anharmonic2,Anharmonic3,Nonequilibrium,Klages_Nicolis_Rateitschak,Larralde_Leyvraz_MejiaMonasterio,Models,Anharmonic}, but a good understanding of non-equilibrium behavior, i.e. when the heat bath are \emph{not} equal, has not been reached yet.

In this paper we consider a mechanical system that attempts to reproduce the phenomenological laws of thermodynamic transport \cite{Larralde_Leyvraz_MejiaMonasterio}. It consists of a rectangular domain containing a chain of $N$ identical pinned down disk scatterers that are allowed to rotate freely. See Figure \ref{fig:rectangle}. The particles in the system bounce elastically from the walls and exchange energy with the disks through \textquotedblleft perfectly rough" collisions \cite{Klages_Nicolis_Rateitschak,Larralde_Leyvraz_MejiaMonasterio}. The system is coupled to two possibly unequal heat baths through the openings, which correspond to the "left" and the "right" sides of the rectangle. Once a particle reaches an opening, it leaves the system forever. New particles can be introduced to the system by the heat baths; they are emitted at random times according to some probability distributions that describe injection positions and velocities. Neither energy nor the number of particles is conserved, so the system can be loosely referred to as a \textquotedblleft grand-canonical ensemble". The detailed settings are available in section \ref{sect: settings}.

This paper demonstrates rigorous results regarding invariant densities and ergodicity of the steady states, which we refer to as invariant measures.  Theorem \ref{thm:main} provides conditions under which an invariant measure is guaranteed to be ergodic and absolutely continuous with respect to a natural reference measure. From Theorem \ref{thm:main} we immediately obtain uniqueness of the absolutely continuous measure as well as the ergodic decomposition. In the equilibrium situation, an absolutely continuous invariant measure can be written down explicitly \cite{Nonequilibrium}. By Theorem \ref{thm:main} we conclude that this invariant measure is ergodic.

The question of existence and uniqueness of the steady states for other types of open systems that attempt to model thermodynamic transport has been studied to some extent. The best studied examples are chains of anharmonic oscillators coupled
at both ends to heat reservoirs \cite{Anharmonic}. The existence and uniqueness results for such systems have been obtained in \cite{Anharmonic1,Anharmonic2,Anharmonic3} under rather restrictive assumptions on the potential in the chain and coupling to the reservoirs.

For open particle systems with indirect particle interactions, existence of the stationary states has only been demonstrated in equilibrium situations by providing them explicitly \cite{Balint,Nonequilibrium}. For non-equilibrium states, existence turns out to be a nontrivial mathematical problem which requires a detailed understanding of the dynamics and we leave it for future work.

Uniqueness results for open particle systems are scarce. Ergodicity and thus uniqueness of natural invariant measures for a $1$-dimensional particle model was obtained in \cite{Balint}; we borrow some of the ideas developed there in our proof. For planar geometries, an important first step towards showing ergodicity we rely on was done in \cite{Eckmann}, where the authors demonstrated that the action of the baths can drive the system from \textquotedblleft almost" any state to \textquotedblleft almost" any other state in a finite time. We chose simpler geometry in our model in order to focus on the essential properties of the system that lead to absolute continuity and ergodicity of the invariant measures leaving geometric complications aside. However, combined with the results in \cite{Eckmann}, only minor modifications are needed for our argument to carry through for the class of systems described there. Our argument might potentially be generalized to a wider class of geometries, e.g. presented in \cite{Nonequilibrium,Models}.

The organization of our paper is the following. Section \ref{sect: settings} contains a detailed description of the model. We state the main results in section \ref{sect: results}. The remaining part of the paper is devoted to the proof of Theorem \ref{thm:main}. We start with an outline of proof in section \ref{sect:outline of proof}, where we state three propositions and show how they imply Theorem \ref{thm:main}. The propositions are proven in sections \ref{sect:abs cont part}, \ref{sect: empty playground acquiring density}, \ref{sect: flush particles out}, and \ref{sect: proof flush particles out}.
\begin{center}
\begin{figure}
  \begin{center}
  \includegraphics[width=4in]{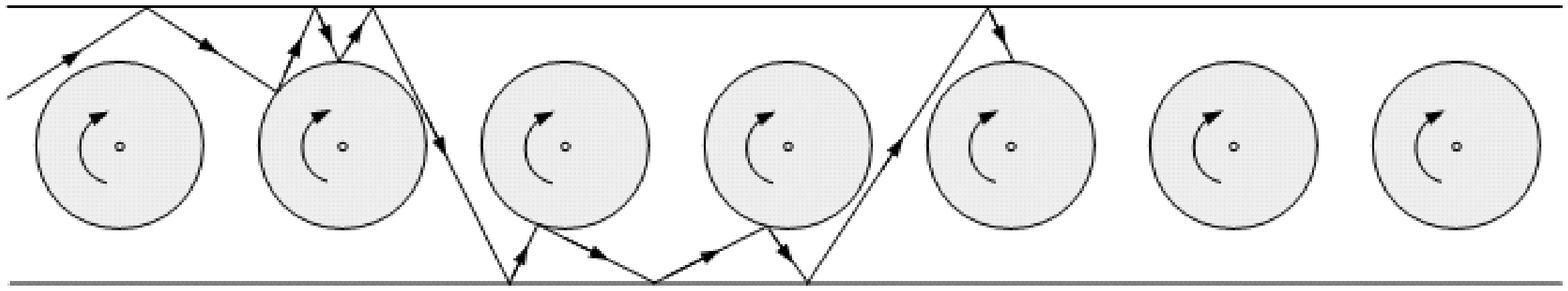} \\
  \end{center}
  \caption{1D array of disks in a rectangle}\label{fig:rectangle}
\end{figure}
\end{center}

\section{Model Description.} \label{sect: settings}
\subsection{Dynamics of closed systems} \label{subsect: closed playground}

Let $\Gamma_0$ be a rectangle bounded by $y=\pm 1$, $x=0$ and $x=2N$, where $N$ is an arbitrary positive integer. In the interior of $\Gamma_0$ lie $N$ disks $D_j$, $1 \leq j \leq N$, of equal radii $R<1$ centered at $(2j-1,0) \in \Gamma_0$, $1 \leq j \leq N$. The centers of the disks are fixed and the disks are allowed to rotate freely around the center, each carrying a finite amount of kinetic energy derived from its angular velocity. Denote the states of the disks by $(\varphi_j,\omega_j)$, $1 \leq j \leq N$, where $\varphi_j$ is the disk's angular position relative to a marked reference point and $\omega_j$ is the disk's angular velocity.

A number of particles move around in the playground $\Gamma = \overline{\Gamma_0 \setminus \cup_{j=1}^N D_j}$, with particle positions $q_i \in \Gamma$ and velocities $v_i \in \mathbb{R}^2$. Apart from collisions with the boundary of the playground $\partial \Gamma = \partial \Gamma_0 \cup (\cup_{j=1}^N \partial D_j)$ the particles are assumed to move freely with constant velocities; particles do not interact with each other. The collisions with $\partial \Gamma_0$ are specular, and upon collision of a particle with a disk, a certain energy exchange occurs \cite{Klages_Nicolis_Rateitschak,Larralde_Leyvraz_MejiaMonasterio}. More precisely:

The phase space of such system with $k$ particles is
$$\tilde{\Omega}_k=(\Gamma^k \times \partial D_1 \times \cdots \times \partial D_N \times \mathbb{R}^{2k+N})/\sim,$$ where \\
$\texttt{q} = (q_1, \cdots, q_k) \in \Gamma^k$ denotes the positions of $k$ particles, \\
$\varphi = (\varphi_1, \cdots, \varphi_N) \in \partial D_1 \times \cdots \times \partial D_N$ denotes the angular positions of the disks, \\
$\texttt{v}=(v_1, \cdots, v_k) \in \mathbb{R}^{2k}$ denotes the velocities of $k$ particles, \\
$\omega = (\omega_1, \cdots, \omega_N) \in \mathbb{R}^N$ denotes the angular velocities of the $N$ disks, \\
and $\sim$ is the relation identifying pairs of points on the collision manifold: $M_k=\{(\texttt{q},\varphi,\texttt{v},\omega): q_i \in \partial \Gamma$ for some $i\}$ with the rules of identification as follows:

Let $v_i=((v_i)_t,(v_i)_\perp)$ be the tangential and the normal components of $v_i$. If a particle collides with the boundary of the playground, $q_i \in \Gamma_0$, then the angle of reflection is equal to the angle of incidence, i.e.
$$(v_i)_\perp '=-(v_i)_\perp, \; \; \; (v_i)_t'=(v_i)_t$$
If collision with a disk occurs, $q_i \in \partial D_j$ for some $j$, then
$$(v_i)_\perp'=-(v_i)_\perp, \; \; \; (v_i)_t'=(v_i)_t - \frac{2\eta}{1+\eta}((v_i)_t-R \omega_j),$$
$$R \omega_j'=R \omega_j + \frac{2}{1+\eta}((v_i)_t-R \omega_j),$$
where $\eta=\frac{\Theta}{m R^2}$ is a dimensionless parameter relating
the moment of inertia of the disc $\Theta$, the mass of the particle $m$, and the radius of the disc $R$ \cite{Klages_Nicolis_Rateitschak,Larralde_Leyvraz_MejiaMonasterio}. Throughout the paper we assume that $0< \eta< \infty$.

The identification $\sim$ in the definition of $\tilde{\Omega}_k$ is as follows: $X \sim X'$, $X,X' \in M_k$, if the coordinates of $X$ and $X'$ are equal except for $v_i$'s such that $q_i \in \partial \Gamma$. If $q_i \in \Gamma_0$, we replace $v_i$ in $X$ by $v_i'$ in $X'$. If $q_i \in \partial D_j$, we replace $v_i$ and $\omega_j$ in $X$ by $v_i'$ and $\omega_j'$ in $X'$. Note that simultaneous collisions of several particles with the same disk are not defined, while there is no problem with simultaneous collisions with different disks and/or $\partial \Gamma_0$.

Define the discontinuous flow $\tilde{\Phi}_\tau$ on $\tilde{\Omega}_k$ by
$$\tilde{\Phi}_\tau(\texttt{q},\varphi,\texttt{v},\omega)=(\texttt{q}+\texttt{v}\tau,\varphi+\omega\tau, \texttt{v},\omega)$$
if no collisions are involved. When collisions occur, the rules of identification are given above. Then $m_k$ is an invariant measure for $\tilde{\Phi}_\tau$ on $\tilde{\Omega}_k$, where
$$\tilde{m}_k=(\lambda_2|_{\Gamma})^k \times \rho|_{\partial D_1} \times \cdots \times \rho|_{\partial D_N}  \times \lambda_{2k+N}.$$
Here $\lambda_d$ is $d$-dimensional Lebesgue measure and $\rho|_{\partial D_j}$ is the uniform measure on the circle $\partial D_j$ \cite{Nonequilibrium}.

\subsection{Dynamics of open systems} \label{subsect: coupling}
\subsubsection{Coupling to heat baths}

Suppose the rectangle $\Gamma_0$ has two openings, $\gamma_L=\{0\} \times [-1,1] \in \partial \Gamma_0$ and $\gamma_R=\{2N\} \times [-1,1] \in \partial \Gamma_0$, each connected to a heat bath that absorbs and emits particles. A particle absorbed by one of the baths leaves the system forever. The injection process for each bath is characterized by the following parameters: \\ \\
$\varrho_L$ and $\varrho_R$ - the injection rates of the baths. The injection processes are Poisson with rates $\varrho_L$ and $\varrho_R$ respectively. \\
$\Upsilon_L(\upsilon)$ and $\Upsilon_R(\upsilon)$ - the distributions of the positions of injection with values in $\gamma_L$ and $\gamma_R$. \\
$\Delta_L(\zeta)$ and $\Delta_L(\zeta)$ - the distributions of the angles of injection with values in $(-\frac{\pi}{2},\frac{\pi}{2})$. \\
$S_L(\varsigma)$ and $S_R(\varsigma)$ - the distributions of the injected particle speeds with values in $(0,\infty)$. \\

i.e. a particle is injected from the left bath at a random time $\tau \in (0,\infty)$ given by exponential distribution of rate $\varrho_L$, with random position $\xi \in \gamma_L$ drawn from $\Upsilon_L(\upsilon)$, at random angle $\delta$ drawn from $\Delta_L(\zeta)$ and with random speed $s$ drawn from $S_L(\varsigma)$. Similarly for the right bath.

We assume that each of the distributions $\Upsilon_L(\upsilon)$, $\Upsilon_R(\upsilon)$, $\Delta_L(\zeta)$, $\Delta_R(\zeta)$, $S_L(\varsigma)$, and $S_R(\varsigma)$ has positive density on the specified domains.

\subsubsection{The phase space} Since we are interested in the invariance properties, we would like to treat all particles in the system as identical and indistinguishable.
Let $\tilde{\Omega}_k$ be as in subsection \ref{subsect: closed playground}. Then the phase space of the system coupled to heat bath(s) is a disjoint union
$$\Omega = \sqcup_{k=0}^\infty \Omega_k,$$
where $\Omega_k$ is the quotient of $\tilde{\Omega}_k$ obtained by identifying the permutations of $k$ particles. If we denote unordered sets by $\{...\}$, under phase space, points in $\Omega$ are denoted by
$$X=(\{q_1, \cdots, q_k\},(\varphi_1, \cdots, \varphi_N), \{v_1, \cdots, v_k\}, (\omega_1, \cdots, \omega_N)),$$
with $v_j$ understood to be attached to $q_j$. Denote the quotient of the measure $\tilde{m}_k$ by $m_k$.

Let $\Phi_\tau$ be a continuous-time Markov process on $\Omega$ defined as follows:
\begin{itemize}
\item $\Phi_\tau$ is the quotient of $\tilde{\Phi}_\tau$ identifying the permutations while no particles enter or exit the system;
\item if $\Phi_\tau \in \Omega_k$ for all $0 \leq \tau \leq \tau_0$, and a particle exits the system at time $\tau_0$, then $\Phi_{\tau_0}(z)$ jumps to $\Omega_{k-1}$;
\item if $\Phi_\tau \in \Omega_k$ for all $0 \leq \tau \leq \tau_0$, and a particle is injected from one of the baths or point sources at time $\tau_0$, then $\Phi_{\tau_0}$ jumps to $\Omega_{k+1}$.
\end{itemize}

It is hard to write down the transition probabilities of $\Phi_\tau$ explicitly because particles may enter at any time. Denote the time-$\tau$ transition probability starting at state $X$ for $\Phi_\tau$ by $P^\tau_X$, if defined. $P^\tau_X$ if not defined for all $\tau$ and $X$: if, for example, $X$ is a state such that two particles will have their first collision at the same time $t$ with the same disk $D_j$, then $P^\tau_X$ is not defined for all $\tau \geq t$. Note that $P^\tau_X$ \emph{is} defined if the dynamics is defined with probability 1.

\begin{definition}
We will say that a measure $\mu$ is absolutely continuous with respect a measure $\nu$ if for any measurable set $A$, $\nu(A)=0 \; \Rightarrow \mu(A)=0$. We will say that two measures $\mu$ and $\nu$ defined on a set $C$ are singular if there exist sets $A$ and $B$, $A \cap B = \emptyset$, $A \cup B = C$ such that $\nu(A)=0$ and $\mu(B)=0$.
\end{definition}

When a measure $\mu$ is absolutely continuous with respect to $\nu$, we will denote it by $\mu \ll \nu$; we will also say that $\mu$ has a density with respect to $\nu$. If $\mu$ and $\nu$ are singular, we will denote it $\mu \perp \nu$.

\begin{definition} \label{defn: m}
Let $m$ be a measure on $\Omega$ such that for each nonnegative integer $k$, the conditional measure on $\Omega_k$ is $m_k$.
\end{definition}

For our purposes, $m$ is a natural reference measure: if $\mu$ is any measure on $\Omega$, we are interested whether $\mu$ is absolutely continuous with respect to $m$. In the rest of this paper, we will refer to $m$ as \emph{Lebesgue measure} and when we will say that $\mu$ is absolutely continuous without mentioning any reference measure, we would mean that $\mu$ is absolutely continuous with respect to $m$.

\subsection{Problems of interest}
Given a measure $\nu$, the push forward of $\nu$ under $\Phi_\tau$, if defined, is given by $(\Phi_\tau)_* \nu = \int_\Omega (P^\tau_X)(A)d\nu(X)$. Note that $P^\tau_X = (\Phi_\tau)_*\delta_X$ when defined.

\begin{definition}
A Borel probability measure $\mu$ on $\Omega$ is called \emph{invariant} under $\Phi_\tau$ if its push forward under $\Phi_\tau$ is defined for all $\tau$ and
$$\mu(A)=((\Phi_\tau)_* \mu)(A)=\int_\Omega (P^\tau_X)(A)d\mu(X).$$
\end{definition}

A classical way of analyzing flow $\Phi_\tau$ is studying the properties of its invariant measures such as existence, uniqueness, absolute continuity with respect to the Lebesgue measure $m$, and ergodicity. Proving the existence of invariant measures would require very technical arguments that involve dealing with tightness and discontinuities; we leave these problems for future work. In this paper we are going to focus on the issues of absolute continuity with respect to $m$ and ergodicity of the invariant measures provided they exist.

The main result of this paper claims that, for the class of systems in consideration, \emph{if} there exists an invariant measure $\mu$ such that the measure of the set of all states with \textquotedblleft trapped" particles is zero, then $\mu$ is both absolutely continuous and ergodic.

\section{Results.} \label{sect: results}

\begin{definition}
A state $X \in \Omega$ is said to contain a \emph{trapped particle} if either
\begin{itemize}
\item the velocity of the particle is zero, $v=0$, or
\item the $x$-component of the velocity of the particle is zero, $v_x=0$, and the position of the particle has its $x$-coordinate \textquotedblleft between the disks", i.e.
$$q \in ([0,1-R] \cup (\cup_{j=1}^N [2j-(1-R), 2j+(1-R)]) \cup [2N-(1-R),2N]) \times [-1,1].$$
\end{itemize}
Note that if we evolve the system starting from an initial state containing a trapped particle along any sample path, the system is going to contain a trapped particle at all times.
\end{definition}

Let $S_T \subset \Omega$ be the set of all states with trapped particles.

\begin{theorem} \label{thm:main}
Suppose there exists a probability measure $\mu$ invariant under $\Phi_\tau$ with $\mu(S_T)=0$. Then $\mu$ is absolutely continuous with respect to the Lebesgue measure $m$ and ergodic.
\end{theorem}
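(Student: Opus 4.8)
The plan is to prove the two conclusions by separate mechanisms that both exploit the single source of stochasticity in $\Phi_\tau$: between injection events the flow is deterministic, so all randomness enters through the Poisson injection times and the absolutely continuous injection data governed by $\Upsilon_{L,R},\Delta_{L,R},S_{L,R}$. Absolute continuity I would obtain by tracing particles backward to their injection; ergodicity I would obtain by driving an arbitrary state forward to the common empty configuration $\Omega_0$ and re-populating it with an absolutely continuous distribution. I expect these to correspond to the three propositions of the outline: one for absolute continuity, one for acquiring a density from the empty playground, and one for flushing particles out.

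For absolute continuity, I would work in the stationary process associated with $\mu$ and argue that, because $\mu(S_T)=0$, every particle present at time $0$ entered the system at some finite negative time: a particle present since $-\infty$ would have to be backward non-escaping, and the set of such configurations is negligible. At its moment of entry each particle's position, direction, and speed are drawn from the injection laws, which have positive densities, hence are absolutely continuous; the deterministic, piecewise-smooth evolution up to time $0$ then presents the current particle coordinates as a smooth image of a.c.\ injection data, so each $\mu|_{\Omega_k}\ll m_k$ in the particle variables. The delicate point---and where I expect the argument to require real work---is the disk data $(\varphi_j,\omega_j)$: the disks never leave and are never reset by a bath, so their current values are a.c.\ only because each disk has, in the past, been struck by an a.c.\ particle, and the energy-exchange rule is a diffeomorphism in $((v_i)_t,\omega_j)$ that transfers absolute continuity to $\omega_j$ while free rotation spreads $\varphi_j$. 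One must therefore show that in the stationary state every disk is a.s.\ hit by such a particle, which uses the positive injection rates $\varrho_{L,R}$.

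For ergodicity I would first show (the flushing proposition) that for $\mu$-a.e.\ $X$ the deterministic, injection-free trajectory empties the playground in a finite time $T_X$; since $X\notin S_T$ no particle is permanently stuck, and the non-escaping configurations form an $m$-null set that $\mu$ cannot charge, by invariance together with $\mu(S_T)=0$. Combined with the probability $e^{-(\varrho_L+\varrho_R)T_X}>0$ of no injection on $[0,T_X]$, this gives $P^{T_X}_X(\Omega_0)>0$. Next (the density proposition) I would show that from the empty playground a bounded number of injections, steered so as to collide with every disk and then exit, produces after a further fixed time a kernel that dominates a fixed measure $\nu\ll m$ with positive density on an open set $W$; the positive injection densities supply the required $2k+N$ absolutely continuous parameters. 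Chaining the two yields a minorization $P^{t_X}_X\ge c_X\,\nu$ for $\mu$-a.e.\ $X$. If $\mu$ were not ergodic it would decompose into mutually singular invariant a.c.\ components, but each component, by the minorization through the common $\nu$, must carry positive mass on $W$, contradicting mutual singularity; hence $\mu$ is ergodic, and the same minorization gives uniqueness of the a.c.\ invariant measure.

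The principal obstacle, I expect, is the flushing/escape estimate: proving that the open billiard with rough, energy-exchanging scatterers empties for all but an $m$-negligible set of non-trapped configurations, and that $\mu$ assigns this bad set zero measure. This is a genuine dynamical statement rather than soft measure theory, which is presumably why it warrants its own proof section; here I would lean on the escape mechanism for open dispersing billiards and on the controllability input of \cite{Eckmann}, together with the one-dimensional ideas of \cite{Balint}. A secondary difficulty is that the constants $c_X$ and times $t_X$ need not be uniform in $X$, since the particle number is unbounded on $\Omega$, so the Doeblin-type conclusion must be routed through the ergodic decomposition rather than a single global minorization.
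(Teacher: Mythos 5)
Your overall architecture (an absolute-continuity preservation step, a density-acquisition step from $\Omega_0$, and a flushing step) matches the paper's three propositions, but two of your mechanisms have genuine gaps. The most serious is the flushing step: you assert that for $\mu$-a.e.\ $X$ the \emph{injection-free} deterministic trajectory empties the playground in finite time, because $X\notin S_T$ and the non-escaping configurations form an $m$-null set that $\mu$ cannot charge. This is circular --- you invoke $m$-negligibility to get $\mu$-negligibility before $\mu\ll m$ has been established --- and the underlying dynamical claim is not available: a particle with $v_x\neq 0$ is not in $S_T$, yet the rotating disks can keep redirecting it indefinitely, so there is no reason the free dynamics ever sends it to $\gamma_L\cup\gamma_R$. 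The paper does not attempt a passive escape estimate at all; instead Proposition \ref{prop:flush particles out} \emph{actively steers} each particle out along a proper projected particle path by using the controllability lemma (Lemma \ref{lemma: controllability}) to reset disk angular velocities with auxiliary injected particles, handling simultaneous and tangential collisions separately. Without this steering, your minorization $P^{t_X}_X\ge c_X\nu$ has no foundation.

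The second gap is in your absolute-continuity argument. Tracing each particle back to its injection gives absolute continuity of the particle coordinates only conditionally on the disk history, and you correctly flag that the disk variables $(\varphi_j,\omega_j)$ are the hard part; but your proposed repair (show that in the stationary state every disk is a.s.\ struck by an a.c.\ particle, and that the collision rule transfers the density) is itself a nontrivial claim requiring the same controllability input, plus delicate conditional-independence bookkeeping that you do not supply. The paper avoids this entirely with a mass-growth contradiction: Proposition \ref{prop: abs cont part} shows $(\Phi_t)_*\mu_\ll$ stays absolutely continuous, Propositions \ref{prop: empty playground acquiring density} and \ref{prop:flush particles out} show $(\Phi_t)_*\mu_\perp$ acquires a nonzero a.c.\ component, so if $\mu_\perp\neq 0$ then $[(\Phi_t)_*\mu]_\ll(\Omega)>\mu_\ll(\Omega)$, contradicting invariance. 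Your ergodicity endgame (all admissible states reach a common a.c.\ target, so distinct ergodic a.c.\ measures are impossible) is essentially the paper's argument via the random ergodic theorem and the common set $A_0$, and would go through once the flushing step is repaired.
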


\begin{remark}
Since obviously $\mu(\Omega_k)\ne 0$ for all $k$, our assertion that $\mu$ is absolutely continuous with respect to $m$ means that on each $\Omega_k$, it has a density with respect to $m_k$.
\end{remark}

\begin{corollary} \label{corollary:main}
If $\nu$ is an ergodic measure for $\Phi_\tau$, then for some $k \geq 0$, $\nu$ is supported on $\cup_{j=k}^{\infty}\Omega_j$ and can be represented as a direct product of two measures, $\nu = \mu \times \pi$, such that $\mu$ is the unique absolutely continuous ergodic measure with $\mu(S_T)=0$ as in Theorem \ref{thm:main} and $\pi$ is a singular measure supported on the states in $\Omega_k$ traced by the trajectories of $k$ trapped particles.
\end{corollary}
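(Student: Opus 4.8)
The plan is to peel off the trapped particles as an autonomous deterministic factor, recognize the free part as the unique invariant measure supplied by Theorem \ref{thm:main}, and then use the stochasticity of the bath-driven dynamics to force the two factors to be independent.

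First I would show that the number of trapped particles $N_{\mathrm{tr}}(X)$ is $\nu$-a.e.\ equal to a constant $k$. A trapped particle stays trapped and can never reach an opening (if $v=0$ it does not move; if $v_x=0$ it oscillates vertically between the disks without touching them), so trapped particles are never destroyed; moreover injected particles have speed in $(0,\infty)$ and angle in $(-\tfrac\pi2,\tfrac\pi2)$ and so are never trapped, while a free particle can become trapped only through the $m$-null (tangential / exact-zero-velocity) configurations. Hence $N_{\mathrm{tr}}$ is non-decreasing along $\nu$-a.e.\ sample path, so each set $\{N_{\mathrm{tr}}\ge j\}$ is $\nu$-a.e.\ $\Phi_\tau$-invariant, and ergodicity forces $\nu(\{N_{\mathrm{tr}}=k\})=1$ for some $k\ge 0$. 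Since a state with $k$ trapped particles has at least $k$ particles, $\nu$ is supported on $\cup_{j\ge k}\Omega_j$. On $\{N_{\mathrm{tr}}=k\}$ I split each state into its $k$ trapped particles, recorded as a point $t$ of a space $T_k$ of $k$-particle trapped configurations sitting inside $\Omega_k$, and its free part $X^{\mathrm f}$ (the remaining particles together with the disks). Because particles never interact with one another and trapped particles never touch the disks, the two subsystems evolve independently: the flow is a product $\Phi_\tau\times\Psi_\tau$, where $\Psi_\tau$ is the autonomous (periodic or quasi-periodic) trapped dynamics on $T_k$ and $\Phi_\tau$ is exactly the free dynamics on states with no trapped particles.

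The first marginal $\mu$ of $\nu$ is $\Phi_\tau$-invariant and, being carried by configurations with no trapped particles, satisfies $\mu(S_T)=0$; by Theorem \ref{thm:main} it is absolutely continuous and ergodic, and by the attendant uniqueness it is \emph{the} unique such measure. Indeed, every $\Phi_\tau$-invariant probability measure supported on $\Omega\setminus S_T$ is a.c.\ and ergodic by Theorem \ref{thm:main}, hence equals $\mu$, so the free dynamics has a unique invariant measure there. The second marginal $\pi$ is $\Psi_\tau$-invariant, supported on the $\Psi_\tau$-orbits of $k$ trapped particles (a subset of $\Omega_k$), and singular with respect to $m$ because every trapped configuration lies in the $m$-null set $S_T$. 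It remains to prove $\nu=\mu\times\pi$. Disintegrating $\nu=\int_{T_k}\nu_t\,d\pi(t)$ over the trapped factor, invariance of $\nu$ under $\Phi_\tau\times\Psi_\tau$ gives the equivariance relation $\nu_{\Psi_\tau t}=(\Phi_\tau)_*\nu_t$ for $\pi$-a.e.\ $t$ and all $\tau$, where $(\Phi_\tau)_*$ is the push-forward under the free Markov process. The goal is to show $\nu_t=\mu$ for $\pi$-a.e.\ $t$. I would exploit that $\Phi_\tau$ is a genuinely stochastic process driven by the independent Poisson injections: this noise should render its tail $\sigma$-algebra trivial, making it disjoint, in Furstenberg's sense, from the zero-entropy deterministic system $(\Psi_\tau,\pi)$, so that the only invariant coupling with the given marginals is the product. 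Concretely, the same smearing mechanism that yields absolute continuity in Theorem \ref{thm:main} should give $(\Phi_\tau)_*\nu_t\to\mu$, and combining this with the recurrence of the trapped orbits $\Psi_\tau t$ and the equivariance relation forces $\nu_t\equiv\mu$, whence $\nu=\mu\times\pi$.

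The main obstacle is precisely this decoupling step. Unique ergodicity of the free factor alone does not rule out diagonal joinings (two equal rotations are a counterexample), so the argument must genuinely use the randomness of the injections --- the triviality of the tail field, or equivalently a mixing/convergence estimate for the Markov semigroup --- to defeat the quasi-periodic, non-mixing, zero-entropy trapped dynamics. Making this convergence strong enough to pass from orbit-averages to the pointwise identity $\nu_t=\mu$, while handling the possibly unbounded number of free particles and the $m$-null exceptional configurations where trapped particles could be created, is where the care lies; the remaining bookkeeping (the support statement and the singularity of $\pi$) is then routine.
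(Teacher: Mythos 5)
Your structural setup is sound and goes well beyond what the paper itself records: the paper states Corollary \ref{corollary:main} with no proof at all, asserting it as an immediate consequence of Theorem \ref{thm:main}. Your argument that the number of trapped particles is $\nu$-a.e.\ constant (trapped particles are never destroyed, so $\{N_{\mathrm{tr}}\ge j\}$ is a.e.\ invariant and ergodicity pins down $k$), the resulting support on $\cup_{j\ge k}\Omega_j$, the splitting of the flow into an autonomous trapped factor and the free Markov dynamics, and the identification of the free marginal with the unique measure of Theorem \ref{thm:main} are all correct and are exactly the content one would need to make the corollary precise. One small remark: your aside that free particles become trapped only on $m$-null configurations is not needed for the monotonicity of $N_{\mathrm{tr}}$ (creation only increases it), and since $\nu$ is not assumed absolutely continuous, $m$-nullity would not by itself give $\nu$-nullity anyway.

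The genuine gap is the one you yourself flag: the product structure $\nu=\mu\times\pi$. Everything you establish shows only that $\nu$ is a joining of $(\Phi_\tau,\mu)$ with $(\Psi_\tau,\pi)$ whose marginals are as claimed; to conclude it is the \emph{product} joining you need a disjointness input --- triviality of the tail field of the free Markov process, weak mixing of $\mu$, or a convergence $(\Phi_\tau)_*\nu_t\to\mu$ --- none of which is supplied by Theorem \ref{thm:main}, whose conclusion is only absolute continuity, ergodicity, and uniqueness. Your own counterexample (a diagonal joining of two equal rotations) shows unique ergodicity of the free factor cannot close this. The minorization established in the proof of Theorem \ref{thm:main} (positive density on a fixed set $A_0$ after time $T+T_0$ from every admissible state) is a plausible route to mixing of the free factor, and the trapped factor is distal (fixed points and periodic vertical oscillations), so Furstenberg's disjointness of weakly mixing from distal systems would finish the job; but this chain of implications is asserted, not proved, in your proposal, and it is also nowhere in the paper. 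As written, the proof is incomplete at precisely the one point where the corollary says more than Theorem \ref{thm:main} does.
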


\textbf{Equilibrium Case:}

Suppose the system is coupled to two \emph{equal} heat baths, characterized by temperature $T$ and injection rate $\varrho$, i.e. the injection process at each bath is Poisson with rate $\varrho$, the distributions for the positions of injections are uniform on $\gamma_L$ an $\gamma_R$, $|\gamma_L|=|\gamma_R|=|\gamma|$, and upon injection, a particle is assigned a random velocity $v$ sampled from the distribution
$$ce^{-m\beta|v|^2}|v| cos(\varphi)dv, \; \; \; c=\frac{2(m\beta)^{3/2}}{\sqrt{\pi}},$$
where $\beta=1/T$, $m$ is a particle's mass, and $\varphi \in (-\frac{\pi}{2},\frac{\pi}{2})$.

\begin{theorem} \label{thm:equilibrium}
The invariant probability measure $\mu$ characterized by the properties below is ergodic.
\begin{itemize}
\item[(i)] The number of particles in the cell is a Poisson random variable with mean
    $$\lambda=2 \sqrt{\pi} \frac{area(\Gamma)}{|\gamma|} \cdot \frac{\varrho \sqrt{m}}{\sqrt{T}},$$
    i.e. $\mu(\Omega_k)=\frac{\lambda^k e^{-\lambda}}{n!}$.
\item[(ii)] $\mu$ has conditional densities $c_k\sigma_k dm_k$ on $\Omega_k$ where $c_k$ is a normalizing constant and for $X \in \Omega_k$,
$$\sigma_k(X)=e^{-\beta(\Theta \sum_{j=1}^{N}\frac{\omega_j^2}{2} + m \sum_{i=1}^{k}\frac{|v_i|^2}{2})}.$$
\end{itemize}
\end{theorem}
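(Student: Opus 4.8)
The plan is to deduce ergodicity directly from Theorem~\ref{thm:main}: once we know that $\mu$ is invariant and that $\mu(S_T)=0$, ergodicity (and, redundantly, absolute continuity) follows immediately. Invariance of the explicit measure $\mu$ in the equilibrium setting is established in \cite{Nonequilibrium}; one can check as a consistency test that the perfectly rough collision rules conserve the total kinetic energy $\Theta\sum_j \omega_j^2/2 + m\sum_i |v_i|^2/2$, so that the Gibbs factor $\sigma_k$ is preserved by the closed dynamics, while the symmetric injection law $c e^{-m\beta|v|^2}|v|\cos\varphi\,dv$ together with the Poisson particle number are exactly what balance absorption against emission at the two baths. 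Taking invariance as given, the entire content of the proof reduces to the single estimate $\mu(S_T)=0$.

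To verify $\mu(S_T)=0$ I would argue on each $\Omega_k$ separately and then sum against the Poisson weights from property~(i). By definition, a state in $\Omega_k$ belongs to $S_T$ only if at least one of its particles is trapped, and in either clause of the definition of a trapped particle the $x$-component of that particle's velocity vanishes. Hence
$$S_T\cap\Omega_k\subseteq\bigcup_{i=1}^k\{X\in\Omega_k:(v_i)_x=0\}.$$
The conditional measure of $\mu$ on $\Omega_k$ is $c_k\sigma_k\,dm_k$, and $m_k$ is the quotient of $\tilde m_k$, whose velocity factor is the $2k$-dimensional Lebesgue measure on $\mathbb{R}^{2k}$. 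For each fixed $i$ the set $\{(v_i)_x=0\}$ is a hyperplane in this velocity space, hence has $\lambda_{2k}$-measure zero; since $\sigma_k$ is a bounded density against $m_k$, each set in the union above is $\mu$-null. A finite union of null sets is null, so the conditional measure of $S_T\cap\Omega_k$ is zero for every $k$, and therefore $\mu(S_T)=\sum_{k\ge 0}\mu(\Omega_k)\,\mu(S_T\cap\Omega_k\mid\Omega_k)=0$.

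With $\mu$ invariant and $\mu(S_T)=0$ in hand, Theorem~\ref{thm:main} applies verbatim and yields that $\mu$ is ergodic (and absolutely continuous with respect to $m$, which we already know by construction). I do not expect a genuine obstacle here: all the hard analytic work is carried by Theorem~\ref{thm:main}, and the only step demanding care is the measure-zero computation above, which is routine once one notices that trapping forces $(v_i)_x=0$ and that the velocity marginals of $\mu$ are absolutely continuous. The one point worth stating explicitly is that this reduction uses \emph{only} absolute continuity of the velocity distribution, so the same argument would show $\mu(S_T)=0$ for any steady state whose conditional laws on the $\Omega_k$ have densities in the velocity variables.
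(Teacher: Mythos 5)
Your proposal is correct and follows essentially the same route as the paper: cite \cite{Nonequilibrium} for invariance, observe $\mu \ll m$ and $\mu(S_T)=0$, and invoke Theorem \ref{thm:main}. The only difference is that you spell out the (routine) verification that $\mu(S_T)=0$ via the hyperplane $\{(v_i)_x=0\}$ having Lebesgue measure zero, which the paper leaves implicit.
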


\begin{proof}
For $\eta=\frac{\Theta}{m R^2}=1$, the invariance of $\mu$ is shown in \cite{Nonequilibrium}. The argument can be generalized for any $0<\eta<\infty$. Since $\mu \ll m$ and $\mu(S_T)=0$, ergodicity of $\mu$ follows from Theorem \ref{thm:main}. \qed
\end{proof}

The remaining part of the paper is devoted to the proof of Theorem \ref{thm:main}. In order to simplify the exposition, we chose to present the proofs of all the technical lemmas for the situation $\eta=1$ only; all proofs could be generalized for any $0 < \eta < \infty$. Aside from technical lemmas, the argument deals with any $\eta$, $0 < \eta < \infty$.

\section{Outline of Proof of Theorem \ref{thm:main}} \label{sect:outline of proof}

In this section we state three propositions and show that they imply Theorem \ref{thm:main}. Our proof uses some ideas from \cite{Balint}.

Idea of proof: Proposition \ref{prop: abs cont part} states that absolutely continuous measures stay absolutely continuous under $\Phi_\tau$. Propositions \ref{prop: empty playground acquiring density} and \ref{prop:flush particles out} imply that any singular measure eventually acquires an absolutely continuous component when evolved under $\Phi_\tau$. If follows that the singular component of any invariant probability measure with $\mu(S_T)=0$ must be zero. Propositions \ref{prop: empty playground acquiring density} and \ref{prop:flush particles out} also imply that $m$-almost all initial states must belong to the same ergodic component; ergodicity follows by the absolute continuity of $\mu$.

Denote by
\begin{itemize}
\item $S_{SC}$ the set of all states in $\Omega$ such that a simultaneous collision with same disk occurs under the evolution of the system with no particle injections, i.e. for some $t>0$, $q_i^t \in \partial D_k$ and $q_j^t \in \partial D_k$ for $i \ne j$. Note that the evolution of the system is not defined after such a collision.
\item $S_{TC}$ the set of all states in $\Omega$ such that a particle stops under the evolution of the system with no particle injections, i.e. $\exists t$ such that $\forall \tau>t$, $v_i^t=0$. When $\eta=1$, this situation occurs when a particle hits a stopped disk tangentially; when $\eta \ne 1$, it occurs when a particle hits a disk with angular velocity $\omega=\frac{(\eta-1)(v_i)_t}{(\eta+1)R}$ tangentially, where $(v_i)_t$ is a tangential component of the particle's velocity upon collision.
\end{itemize}

Let $S=S_T \cup S_{SC} \cup S_{TC}$.

\begin{definition}
A state $X \in \Omega$ is called admissible if $X \not \in S$.
\end{definition}

Given $t>0$, the probability that no particles are injected on time interval $[0,t]$ is positive. If $\mu$ is an invariant probability measure with $\mu(S_T)=0$, then it cannot give positive measure to either $S_{SC}$ or $S_{TC}$, i.e. $\mu(S)=\mu(S_T \cup S_{SC} \cup S_{TC})=0$.

It follows that if we start with any measure $\nu \ll \mu$, the push forward of $\nu$ under the Markov process $\Phi_\tau$, $(\Phi_\tau)_* \nu$, is well defined for all $\tau>0$.

For any measure $\nu$, denote by $\nu_{\ll}$ and $\nu_\perp$ the absolutely continuous and singular components of $\nu$ with respect to the Lebesgue measure $m$, i.e. $\nu = \nu_{\ll}+\nu_{\perp}$, $\nu_{\ll} \ll m$ and $\nu_\perp \perp m$.

\begin{proposition} \label{prop: abs cont part}
If $\nu \ll m$, then $(\Phi_t)_*\nu$ is well defined for any $t>0$ and $(\Phi_t)_*\nu \ll m$. In particular, $(\Phi_t)_*(\mu_{\ll}) \ll m$ for any $t>0$.
\end{proposition}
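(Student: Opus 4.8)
The plan is to show that absolute continuity is preserved under the flow by decomposing the evolution into two types of events — the deterministic billiard dynamics $\tilde\Phi_\tau$ and the stochastic injection/deletion of particles — and verifying that each preserves absolute continuity with respect to $m$. Since $\nu \ll m$ and $m(S) = 0$ (because $m$ gives no mass to the lower-dimensional sets $S_{SC}, S_{TC}$ and $m(S_T)=0$ as $S_T$ is a positive-codimension set), we have $\nu(S)=0$, so the push-forward $(\Phi_t)_*\nu$ is well defined. It then suffices to show: if $A$ is measurable with $m(A)=0$, then $((\Phi_t)_*\nu)(A)=0$.

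First I would condition on the random sequence of injection/exit times and data on $[0,t]$. Conditioned on a fixed realization $\theta$ of the Poisson injection process (the injection times, positions, angles, and speeds, together with the resulting exit times), the map $X \mapsto \Phi_t(X)$ becomes a \emph{deterministic} map $\Psi_\theta$ on the part of phase space where this realization is consistent. The overall push-forward is then an integral over realizations $\theta$ of the push-forwards $(\Psi_\theta)_*\nu$. By Fubini, it is enough to show that for $\nu$-almost every realization $\theta$, the deterministic map $\Psi_\theta$ sends $m$-null sets to $m$-null sets, i.e. $(\Psi_\theta)_*\nu \ll m$ whenever $\nu \ll m$.

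Next I would analyze $\Psi_\theta$ as a composition of finitely many (with probability one, only finitely many events occur in $[0,t]$) elementary maps of three kinds: (i) free-flight-plus-collision billiard evolution $\tilde\Phi_\tau$ between events, (ii) particle injection, and (iii) particle exit. For (i), the measure $\tilde m_k$, and hence $m_k$, is invariant under $\tilde\Phi_\tau$ (stated in Section~\ref{subsect: closed playground}), so billiard evolution trivially preserves absolute continuity. For (ii), injection maps $\Omega_k$ into $\Omega_{k+1}$ by adjoining a new particle whose position and velocity are drawn from distributions with positive densities on their domains (the standing assumption on $\Upsilon, \Delta, S$); the resulting map is a product of the identity on the old coordinates with a density-carrying factor in the new coordinates, so it carries $m_k \otimes (\text{density})$ to something absolutely continuous with respect to $m_{k+1}$. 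For (iii), exit is a projection $\Omega_k \to \Omega_{k-1}$ that simply forgets the exiting particle's coordinates; pushing forward an absolutely continuous measure under a coordinate projection (marginalization) yields an absolutely continuous measure. Composing finitely many maps each preserving absolute continuity preserves it.

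The main obstacle is controlling the billiard map across collisions so that the pointwise transformation is genuinely measure-preserving and well-defined $\nu$-a.e.: I must ensure that the exceptional set where the deterministic dynamics breaks down — simultaneous collisions, tangential stopping, grazing trajectories — is $m$-null and hence $\nu$-null, so that $\Psi_\theta$ is defined $\nu$-almost everywhere. This is precisely what the exclusion of $S = S_T \cup S_{SC} \cup S_{TC}$ buys us: the preceding discussion established $\mu(S)=0$ and $m(S)=0$, so $\nu(S)=0$ and the trajectories we integrate over avoid these pathologies with full measure. The remaining care is bookkeeping the change of the particle-number coordinate $k$ as injections and exits occur, which is handled by the disjoint-union structure $\Omega = \sqcup_k \Omega_k$ and the definition of $m$ as having conditional measure $m_k$ on each $\Omega_k$; the final statement $(\Phi_t)_*(\mu_\ll) \ll m$ follows immediately by applying the result to $\nu = \mu_\ll$.
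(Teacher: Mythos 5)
There is a genuine gap in your treatment of injections, and it is located exactly where the paper has to work hardest. In your second paragraph you condition on a fixed realization $\theta$ of the injection process --- fixed injection times, positions, angles, and speeds --- and reduce the problem to showing that the resulting deterministic map $\Psi_\theta$ pushes $\nu$ forward to something absolutely continuous. But that reduction is to a false statement: with $\theta$ fully fixed, the newly injected particle enters at a single deterministic point of its four-dimensional phase space $\Gamma\times\mathbb{R}^2$, so $(\Psi_\theta)_*\nu$ is a product of an absolutely continuous measure on the old coordinates with a point mass on the new ones, hence singular with respect to $m_{k+1}$. Your third paragraph quietly switches to integrating over the injection data ("drawn from distributions with positive densities"), which contradicts the conditioning you set up; and even if you fix only the injection time and integrate over position, angle, and speed, you get a three-dimensional density sitting inside a four-dimensional factor, which is still singular. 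The absolute continuity of the injected particle's state genuinely requires averaging over all four injection parameters \emph{including the random injection time}, and one must check that the map $(\tau,\xi,\delta,s)\mapsto(x^t,y^t,v_x^t,v_y^t)$ is nondegenerate. This is precisely the content of Lemma \ref{lemma: particle enters 4D} in the paper (the Jacobian determinant $-s^2\cos\delta\neq 0$), and it cannot be bypassed by the "composition of elementary a.c.-preserving maps" framing: the injection map is not a.c.-preserving conditionally on $\theta$; only its average over $\theta$ is.

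A secondary under-justified step: you assert that $\nu(S)=0$ makes $(\Phi_t)_*\nu$ well defined, but $S$ only excludes pathologies of the \emph{closed} dynamics started from $X$. After an injection, the combined system may suffer a simultaneous same-disk collision between an injected particle and a pre-existing one, so well-definedness must be argued for the pair $(X,c)$ of initial state and injection sequence. The paper does this with a Fubini and Lebesgue density point argument on the set $B=\{X: P^t_X \text{ not defined}\}$, reducing to Lemma \ref{lemma: prop 1 decomposition}; your proposal does not address this. Your handling of the closed dynamics (invariance of $m_k$) and of exits (marginalization over the departing particle's decoupled coordinates) does match the paper's Case 1 and is fine.
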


Consider a sequence of particle injections $c=(c_1, \cdots, c_n)$ at times $0 < t_1 < t_2 < \cdots < t_n < T$ such that at time $t_i$, a particle enters the system at location $\xi_i \in \gamma_j$, at angle $\delta_i \in (-\frac{\pi}{2},\frac{\pi}{2})$, and with speed $s_i \in (0,\infty)$. Then, assuming no simultaneous collisions with the same disks occur, one can generate a \emph{sample path} $\sigma$ defined on $[0,T]$ in which one starts from state $X \in \Omega$ and injects particles into the system according to $c$.

We call $C$ a \emph{canonical neighborhood} of $c$ if there are disjoint open neighborhoods $T_i$ of $t_i$ contained in $[0,T]$, $\Xi_i$ of $\xi_i$, $\Delta_i$ of $\delta_i$, and $S_i$ of $s_i$ such that for each sequence of injections $c' \in C$, exactly one particle is injected in each $T_i$, with position in $Q_i$, angle in $\Delta_i$, and speed in $S_i$. No other injections occur in the time interval $[0,T]$.

We call $\Sigma$ a \emph{canonical neighborhood} of $\sigma$ if there exist an open neighborhood $U$ of $X$ and a canonical neighborhood $C$ of $c$  such that each sample path in $\Sigma$ starts with an initial condition in $U$, is generated by a sequence of injections from $C$, and no simultaneous collisions with same disks occur. Note that for any $Y \in U$, if $\cal{Y}$ is a set of all sample paths on $[0,T]$ starting at $Y$, then the probability that a sample path from $\cal{Y}$  belongs to $\Sigma$ is positive.

Propositions \ref{prop: empty playground acquiring density} and \ref{prop:flush particles out} split the problem of acquiring density for singular measures in the following way: Proposition \ref{prop: empty playground acquiring density} deals with the simplest situation of acquiring density for a point measure supported on an particle-less initial state $Y_0 \in \Omega_0$; Proposition \ref{prop:flush particles out} provides a sample path from any admissible initial state to a particle-less state, from which a density can be acquired using Proposition \ref{prop: empty playground acquiring density}.

\begin{proposition} \label{prop: empty playground acquiring density}
Given a state $Y_0 \in \Omega_0$ with a condition that if $\eta=1$, all disks have nonzero angular velocities, there exist an open neighborhood $U_0$ of $Y_0$, time $T_0$, and a set $A_0 \subset \Omega_0$ with $m_0(A_0) > 0$, such that for any $Y \in U_0$, $[(\Phi_{T_0})_* \delta_Y]_\ll$ has strictly positive density on $A_0$. In particular, $[(\Phi_{T_0})_* \delta_Y]_\ll (\Omega) \ne 0$.
\end{proposition}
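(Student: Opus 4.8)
The plan is to exploit the randomness of the bath injections to smear the fixed initial disk configuration $Y_0=(\varphi(0),\omega(0))$ over an open set of disk configurations, and then invoke the fact that the pushforward of an absolutely continuous measure under a submersion is absolutely continuous. First I would construct a single reference sample path $\sigma$ on $[0,T_0]$: inject a finite sequence $c=(c_1,\dots,c_n)$ of particles from the baths so that (i) the collisions along $\sigma$ touch every disk, (ii) every injected particle exits through an opening strictly before $T_0$, so the terminal state lies in $\Omega_0$, and (iii) $\sigma$ is admissible throughout, i.e. it meets neither $S_{SC}$ nor $S_{TC}$ and has no tangential grazing. The hypothesis that all disks carry nonzero angular velocity when $\eta=1$ is exactly what rules out the $S_{TC}$ degeneracy at the first collisions, so such an admissible $\sigma$ exists. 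Let $\Sigma$ be a canonical neighborhood of $\sigma$ with base neighborhood $U_0$ of $Y_0$ and canonical neighborhood $C$ of $c$; by the remark following the definition of canonical neighborhoods, the probability that the path issued from any $Y\in U_0$ lies in $\Sigma$ is positive, and on $\Sigma$ exactly $n$ particles are injected, all exit before $T_0$, and no simultaneous same-disk collision occurs.

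Next I would consider the terminal-state map $F\colon U_0\times C\to\Omega_0$ sending $(Y,c')$ to the disk configuration produced at time $T_0$ along the corresponding path in $\Sigma$. Since $\sigma$ is admissible and the only flow discontinuities occur at simultaneous same-disk collisions and tangencies, none of which happen on $\Sigma$, the map $F$ is smooth there: each collision is a smooth function of the incoming data away from tangency, wall reflections are smooth, and free flight is affine. The goal is then to show that for each fixed $Y\in U_0$ the partial map $F(Y,\cdot)$ is a submersion onto a $2N$-dimensional image at the reference parameters; for then $F(Y,\cdot)$ carries the injection measure --- which has positive density on $C$ by the standing positivity of $\varrho_{L/R},\Upsilon,\Delta,S$ together with the positive density of the Poisson injection times --- to a measure with strictly positive density on an open set $A_0\subset\Omega_0$, and the positive probability that the path lies in $\Sigma$ upgrades this to $[(\Phi_{T_0})_*\delta_Y]_\ll>0$ on $A_0$.

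The crux, and the main obstacle, is establishing that $DF(Y,\cdot)$ has full rank $2N$. I would arrange $\sigma$ to be \emph{combinatorially triangular}: for each disk $j$ there is a designated particle whose last disk-collision is with $D_j$, and I take as the $2N$ control coordinates the speed $s_j$ and injection time $t_j$ of that particle. The collision law gives $\omega_j'=\frac{\eta-1}{\eta+1}\omega_j+\frac{2}{(\eta+1)R}(v_i)_t$, so the terminal $\omega_j$ depends on $s_j$ (through the tangential impact speed) with nonzero derivative, whereas a pure time-shift of the injection leaves every impact velocity unchanged, so $\partial\omega_j^{\mathrm{final}}/\partial t_j=0$. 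On the other hand $\varphi_j^{\mathrm{final}}=\varphi_j(0)+\int_0^{T_0}\omega_j(t)\,dt$ varies linearly in the last collision time with slope $\omega_j^{\mathrm{old}}-\omega_j^{\mathrm{new}}\neq0$ (the collision genuinely changes $\omega_j$ unless $(v_i)_t=R\omega_j$, which I exclude on $\sigma$), so $\partial\varphi_j^{\mathrm{final}}/\partial t_j\neq0$. Ordering coordinates as $(\omega_j,\varphi_j)$ against $(s_j,t_j)$ makes each $2\times2$ diagonal block lower-triangular with nonzero diagonal; if the designated particles are time-separated so that each exits before the next interaction, the full $2N\times2N$ Jacobian is block-triangular with these nonsingular blocks, hence invertible.

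The genuine difficulty hides in realizing this pattern geometrically: routing a particle from an opening to a prescribed interior disk $D_j$ without touching $D_1,\dots,D_{j-1}$ --- feasible because the gaps of width $1-R$ above and below the disks let a nearly-horizontal particle pass over the intermediate disks and descend onto the target --- and then having it exit without re-entering another disk's path. Since ``hits exactly $D_j$ then exits before $T_0$'', ``no tangency or simultaneous collision'', and ``Jacobian nonsingular'' are all open conditions, once one reference $\sigma$ with the triangular structure is exhibited they persist on a neighborhood, yielding a common $U_0,T_0,A_0$, with a uniform lower density bound over $Y\in U_0$ after shrinking by continuity. I would finish with the submersion/change-of-variables step: where $DF(Y,\cdot)$ is onto, $F(Y,\cdot)$ pushes the positive injection density to a density bounded below on $A_0$, which combined with $\Pr(\sigma'\in\Sigma)>0$ gives the asserted absolutely continuous component of $(\Phi_{T_0})_*\delta_Y$.
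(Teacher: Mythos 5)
Your proposal is correct and shares the architecture of the paper's proof --- inject one particle per disk so that each hits its designated disk exactly once and exits, show that the map from injection parameters to the terminal disk configuration is a submersion, and use continuity to extract a common $U_0$, $T_0$, $A_0$ --- but the submersion step is carried out differently. The paper first proves that an injected particle acquires a full four-dimensional absolutely continuous distribution on $\Gamma\times\mathbb{R}^2$ (Lemma \ref{lemma: particle enters 4D}: $(\tau,\xi,\delta,s)\mapsto(x^t,y^t,v_x^t,v_y^t)$ is a diffeomorphism) and then shows that the induced map onto the disk's $(\varphi^t,\omega^t)$ has rank $2$ everywhere off the Lebesgue-null degeneracy set $\{R\omega=v_t\}$ (Lemma \ref{lemma: 4D particle hits disk}); absolute continuity of the image then follows without having to steer the reference path away from that set. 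You instead isolate the two control parameters $(s_j,t_j)$ per disk and verify a triangular $2\times2$ block ($\partial\omega_j/\partial t_j=0$, $\partial\omega_j/\partial s_j\neq0$, $\partial\varphi_j/\partial t_j=\omega_j^{\mathrm{old}}-\omega_j^{\mathrm{new}}\neq0$), which is more economical and bypasses the explicit $2\times4$ Jacobian, at the cost of having to arrange the reference path so that the impact is non-radial and $(v_i)_t\neq R\omega_j$; these are open conditions, so this is harmless, and since each particle touches only its own disk your big Jacobian is in fact block-diagonal. One small correction: the hypothesis $\omega_j\neq0$ for $\eta=1$ is not there to rule out the $S_{TC}$ degeneracy at the first collision; it is what guarantees that after a single collision at the top of $D_j$ the outgoing tangential (horizontal) velocity $R\omega_j$ is nonzero, so the particle can clear the remaining disks and exit rather than rebounding vertically onto $D_j$ --- your construction uses this implicitly in the ``then exits without re-entering another disk's path'' step.
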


\begin{proposition} \label{prop:flush particles out}
Given an admissible state $X \in \Omega$, a state $Y_0 \in \Omega_0$, and a neighborhood $U_0 \subset \Omega_0$ of $Y_0$, there exist time $T$, a sample path $\sigma$ on $[0,T]$ that starts at $X$ and ends at $Y_0$, and a canonical neighborhood $\Sigma$ of $\sigma$, such that each sample path in $\Sigma$ ends in $U_0$.
\end{proposition}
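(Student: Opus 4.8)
The plan is to treat the injections as controls and build the sample path $\sigma$ in two stages: first drive all particles initially present in $X$ out through the openings, and then, starting from the resulting empty playground, adjust the disks so as to land exactly at $Y_0$. Throughout I will arrange every collision to be transversal (no tangencies, no simultaneous collisions with the same disk) and every exit to cross the open segments $\gamma_L,\gamma_R$ transversally; since the terminal state then depends continuously on the initial condition and on the injection parameters, openness of $U_0$ will automatically furnish the canonical neighborhood $\Sigma$. Because the admissible state $X$ lies outside $S_{SC}\cup S_{TC}$, the free evolution is defined for all time, and because the injection distributions have positive density on open domains, any injection parameters chosen in the interior of their ranges give rise to a canonical neighborhood $C$ of positive probability.

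The key mechanism is that a disk collision lets me redirect a particle while the disk's angular velocity acts as a free parameter that I can pre-set by a suitable auxiliary injection. For $\eta=1$ the collision rule reduces to an exchange of tangential surface velocities, $(v_i)_t'=R\omega_j$ and $R\omega_j'=(v_i)_t$, so by injecting a fast auxiliary particle from the nearest opening and aiming it to graze $D_j$ immediately before the particle of interest arrives, I can set $\omega_j$ to any prescribed value and thereby deflect the outgoing particle into a nearly horizontal trajectory lying in the gap region $|y|>R$ between the disk band and the walls. Such a trajectory passes over (or under) all remaining disks and reaches an opening, so the particle exits in finite time; the auxiliary particle, having served its purpose, is routed out the same way. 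Since the disks' final velocities are irrelevant at this stage---they will be reset in the second stage---I may use $\omega_j$ freely, handling the several particles that are scheduled to strike the same disk in their natural time order.

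In the second stage the playground is empty and I must reach a specific $Y_0\in\Omega_0$, i.e. match every pair $(\varphi_j,\omega_j)$. For each disk I inject one particle that strikes it and, via the same collision rule, sets $\omega_j$ to its target value (a radial hit sets $\omega_j=0$ when required); that particle is then flushed out as before. The angular positions are fixed by timing: while a disk still rotates I am free to choose the instant of the final, velocity-setting collision and to make small adjustments to the target $\omega_j$, which gives enough freedom to make all $\varphi_j$ arrive at their prescribed values at a common terminal time $T$. Choosing the parameters so that $\sigma$ lands exactly on $Y_0$, continuity of the terminal-state map in the initial data and the injection parameters then guarantees that every nearby sample path ends inside $U_0$.

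The main obstacle is the first stage: showing that an \emph{arbitrary} admissible particle can in fact be steered out in finite time with transversal collisions. The difficult cases are particles whose velocity is nearly vertical and which sit in, or drift slowly through, the ``between the disks'' bands where a vertical segment meets no disk; here one must verify that after finitely many controlled disk collisions the horizontal speed can be increased and the particle brought into a clean horizontal escape corridor, and that the auxiliary control particles can always be delivered to the required disk from an opening without premature deflection. Establishing this delivery-and-escape geometry uniformly, together with the transversality needed for the canonical neighborhood, is the technical heart of the argument and is where the reachability ideas of \cite{Eckmann} enter; it is carried out in Section \ref{sect: proof flush particles out}.
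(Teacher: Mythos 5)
Your overall strategy coincides with the paper's: flush the particles of $X$ out via controlled disk collisions, then steer the empty playground to $Y_0$, and obtain $\Sigma$ from continuity of the terminal-state map. But there is a genuine gap in the continuity step. You announce that you will ``arrange every collision to be transversal,'' yet the first disk collision of each particle already present in $X$ is entirely determined by $X$ and is not at your disposal: admissibility only excludes trapped particles, simultaneous same-disk collisions, and particles that stop ($S_T\cup S_{SC}\cup S_{TC}$), and explicitly permits a tangential first collision with $v_x\neq 0$. Near a tangential collision the terminal state does \emph{not} depend continuously on the data once the disk's surface velocity is allowed to differ from the particle's tangential velocity (which it must be, in general, when one perturbs angular velocities to avoid simultaneous same-disk collisions among several particles). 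So the blanket continuity claim that is supposed to ``automatically furnish'' $\Sigma$ fails exactly at these unavoidable tangencies; the paper spends subsections \ref{subsect:flush particles out: tang collisions} and \ref{subsect: proof prop 2 with tang collisions} (Lemmas \ref{lemma:canonical neighborhood Sigma_X}--\ref{lemma: open set of paths neighborhood tangential}) replacing continuity by a quantitative $\epsilon$-argument: one first makes the image of the flushing stage land in an $\epsilon$-ball around \emph{some} $X_0$, with $\epsilon$ chosen uniformly over $\Omega_0$ from $U_0$, and only then invokes continuity along tangency-free paths for the second stage.

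Two further points are asserted rather than proved and do not hold as stated. First, a single controlled collision cannot in general send a particle into a horizontal corridor with $|y|>R$: the collision point lies on $\partial D_j$, hence at height $|y|\le R$, and a horizontal segment from there meets the neighboring disks; the paper's Lemma \ref{lemma:proper projected particle path} instead routes the particle to the top of a disk through several wall reflections before the escape segment, with a separate construction for $|y|\le\delta$. Second, setting $\omega_j$ for an interior disk by ``injecting a fast auxiliary particle from the nearest opening'' ignores that such a particle must itself negotiate the disks $D_1,\dots,D_{j-1}$ and then be flushed out without disturbing the schedule; this is precisely the content of the inductive Lemma \ref{lemma: controllability}, where disk $D_{k+1}$ is reached by relaying off a freshly pre-set $D_k$, and where a tangential ``graze'' of a slow disk would leave the auxiliary particle nearly stopped. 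Since you defer all of this to ``the technical heart,'' the proposal as written establishes the easy reduction but not the proposition.
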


\begin{theopargself}
\begin{proof}[of Theorem \ref{thm:main} assuming Propositions \ref{prop: abs cont part}, \ref{prop:flush particles out}, and \ref{prop: empty playground acquiring density}]

Propositions \ref{prop: empty playground acquiring density} and \ref{prop:flush particles out} imply that for any admissible state $X \in \Omega$, $\exists$ a neighborhood $U$ of $X$, such that $\forall$ $Y \in U$,
$[(\Phi_{T+T_0})_* \delta_Y]_\ll$ has strictly positive density on $A_0$ and, in particular,
$[(\Phi_{T+T_0})_* \delta_Y]_\ll (\Omega) \ne 0$.

Assume $\mu_{\perp}(\Omega) \not = 0$. Since $\mu$ is invariant with $\mu(S_T)=0$, $\mu_\perp(S)=\mu(S)=0$. Therefore $[(\Phi_{T+T_0}) _* \mu_\perp]_\ll (\Omega) \ne 0$. Applying Proposition \ref{prop: abs cont part} we conclude that $\forall t>T+T_0$, $[(\Phi_t) _* \mu_\perp]_\ll (\Omega) \ne 0$.

Clearly $(\Phi_t)_*\mu=[(\Phi_t)_*(\mu_{\ll})]_{\ll}+[(\Phi_t)_*(\mu_{\ll})]_{\perp}+[(\Phi_t)_*(\mu_\perp)]_{\ll}+[(\Phi_t)_*(\mu_\perp)]_{\perp}$. By Proposition \ref{prop: abs cont part}, $\forall t>0$, $[(\Phi_t)_*(\mu_{\ll})]_{\perp}(\Omega)=0$.  Therefore, for $t>T+T_0$, $$[(\Phi_t)_*\mu]_{\ll}(\Omega)>\mu_{\ll}(\Omega),$$ which contradicts the invariance of $\mu$. This proves the absolute continuity of $\mu$ with respect to the Lebesgue measure $m$.

Assume $\mu_1$ and $\mu_2$ are ergodic measures with $\mu_1(S_T)=\mu_2(S_T)=0$. Then $\mu_i \ll m$, $i=1,2$. Suppose there exists a Borel function $\varphi$ such that $c_1 = \int_\Omega \varphi d\mu_1 \ne \int_\Omega \varphi d\mu_2 = c_2$. Then by the Random Ergodic Theorem, there exist $A_i \in \Omega$, $i=1,2$, with $\mu_i(A_i)=1$ such that for every $X \in A_i$ the ergodic averages are equal to $c_i$ for a.e. sample path starting from $X$; and $m(A_i)>0$, since $\mu_i \ll m$. Since for any admissible state $X \in \Omega$, $[(\Phi_{T+T_0})_* \delta_X]_\ll$ has strictly positive density on $A_0$, the random ergodic averages for all the admissible states are equal for measure $1$ sets of sample paths. Since $m(S)=0$, either $m(A_1)=0$ or $m(A_2)=0$, a contradiction. Therefore $\mu_1=\mu_2=\mu$. \qed
\end{proof}
\end{theopargself}

We prove Proposition \ref{prop: abs cont part} in section \ref{sect:abs cont part}, Proposition \ref{prop: empty playground acquiring density} in section \ref{sect: empty playground acquiring density}, and Proposition \ref{prop:flush particles out} in sections \ref{sect: flush particles out} and \ref{sect: proof flush particles out}.

\section{Proof of Proposition \ref{prop: abs cont part}} \label{sect:abs cont part}
In order to show that, given $\nu \ll m$ and $t >0$, $(\Phi_t)_*\nu$ is defined and $(\Phi_t)_*\nu \ll m$, we would like to consider a countable number of subcases depending on how many particles entered on time interval $(0,t]$ and show that for each subcase the time-$t$ push forward of $\nu$ is well defined and absolutely continuous.

Suppose $n$ particles enter on time interval $(0,t]$. For $n \geq 1$, let $C_n = [(0,t] \times (\gamma_L \cup \gamma_R) \times (-\frac{\pi}{2},\frac{\pi}{2}) \times (0,\infty)]^n /\sim$ be the set of all possible injection parameters for these $n$ particles modulo permutations and let $\rho_n$ be the natural probability measure on $C_n$, i.e. the product of appropriate injection distributions modulo permutations (see subsection \ref{subsect: coupling}). Given $t_1, \cdots, t_{n-1}$ with $0=t_0<t_1< \cdots < t_n=t$, denote by $C_{t_1, \cdots, t_{n-1}}$ the subset of $C_n$ such that $j^{th}$ particle is injected on $(t_{j-1},t_j]$.

For any $\tau > 0$, denote by $(\Phi^n_\tau)_* \nu$ the time-$\tau$ push forward of $\nu$ under the random dynamics assuming that $n$ particles entered on time interval $(0,\tau]$, if defined. Given a subset $C \subset C_n$ of injection parameters, denote by $(\Phi^{n,C}_\tau)_* \nu$ the time-$\tau$ push forward of $\nu$ under the random dynamics assuming that $n$ particles entered on time interval $(0,\tau]$ with injection parameters from $C$, if defined.

\begin{lemma} \label{lemma: prop 1 decomposition}
If $\nu \ll m$, $(\Phi^0_t)_*\nu$ is defined and absolutely continuous with respect to $m$. Moreover, for any $n \geq 1$ and a choice of $t_1, \cdots, t_{n-1}$ with $0=t_0<t_1< \cdots < t_n=t$, $(\Phi^{n,C_{t_1, \cdots, t_{n-1}}}_t)_* \nu$ is defined and absolutely continuous with respect to $m$.
\end{lemma}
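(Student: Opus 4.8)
The plan is to prove absolute continuity by reducing the random dynamics to the deterministic billiard flow $\tilde{\Phi}_\tau$ on the ordered phase space, whose invariant measure $\tilde m_k$ is known, and then to handle injections one at a time by a change-of-variables argument. The base case $(\Phi^0_t)_*\nu$ is the cleanest: with no injections on $(0,t]$, the open dynamics coincides (on the complement of $S$, which has $m$-measure zero and hence $\nu$-measure zero) with the deterministic flow $\Phi_t$ obtained as the quotient of $\tilde\Phi_t$, except that particles reaching $\gamma_L\cup\gamma_R$ are deleted. First I would argue that $\Phi_t$ is defined $\nu$-almost everywhere: by Proposition \ref{prop: abs cont part}'s preceding discussion the bad set $S$ (where simultaneous or tangential collisions occur) is $m$-null, so $\nu\ll m$ gives it $\nu$-measure zero. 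On the good set, $\Phi_t$ is a piecewise smooth map whose local pieces are compositions of free flights and the collision maps specified in subsection \ref{subsect: closed playground}; each such piece preserves $\tilde m_k$ (the collision maps are the standard measure-preserving billiard-type maps, and deletion at an opening simply restricts to a lower stratum $\Omega_{k-1}$ where the retained coordinates push forward under a measure-preserving map). Since each local piece carries $m$-null sets to $m$-null sets, and there are countably many pieces, $(\Phi^0_t)_*\nu\ll m$.

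Next I would treat the case of exactly $n\ge1$ injections with prescribed time-windows $(t_{j-1},t_j]$. The key idea is to enlarge the phase space by the injection parameters: the push-forward $(\Phi^{n,C_{t_1,\dots,t_{n-1}}}_t)_*\nu$ is the image of the product measure $\nu\times\rho_n|_{C_{t_1,\dots,t_{n-1}}}$ under the map that flows $X$ forward, inserts the $j$th particle at its injection time $\tau_j$ with position/angle/speed parameters drawn from the injection data, and continues flowing. By the assumption in subsection \ref{subsect: coupling} that each injection distribution $\Upsilon,\Delta,S$ has a positive density on its domain, the measure $\rho_n$ is itself absolutely continuous with respect to Lebesgue measure $d\tau_j\,d\xi_j\,d\delta_j\,ds_j$ on the parameter space. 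Hence $\nu\times\rho_n|_{C_{t_1,\dots,t_{n-1}}}\ll m\times(\text{Lebesgue on parameters})$, and it suffices to show that the insertion-and-flow map sends this product reference measure to a measure absolutely continuous with respect to $m$ on the target stratum.

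The crux is the injection map itself. At an injection time $\tau_j$, the new particle appears on the one-dimensional opening $\gamma$ with a two-dimensional velocity determined by $(\delta_j,s_j)$; the natural target reference measure $m_{k+1}$ asks for a \emph{two}-dimensional position density and a two-dimensional velocity density for that particle. I would resolve this by incorporating the injection time: the three injection parameters $(\tau_j,\xi_j,\delta_j,s_j)$ — four real coordinates — map via free flight after injection onto the four phase-space coordinates $(q,v)$ of the inserted particle (position in $\Gamma$ and velocity in $\mathbb R^2$). I would compute the Jacobian of the map $(\tau_j,\xi_j,\delta_j,s_j)\mapsto(q,v)$ (at the moment just after injection $q$ lies on $\gamma$, but evolving for an infinitesimal time sweeps out a two-dimensional position neighborhood transverse to $\gamma$, the transverse speed being $s_j\cos\delta_j>0$) and check that it is nonzero off a null set; this is precisely the standard transfer of one-dimensional boundary injection data into a two-dimensional absolutely continuous density via the flow, and the positivity of $s_j\cos\delta_j$ on $(-\tfrac\pi2,\tfrac\pi2)\times(0,\infty)$ guarantees nondegeneracy. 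Composing this with the measure-preserving interior dynamics between consecutive injections, and again discarding the $m$-null bad set $S$ where the trajectory is ill-defined, I obtain absolute continuity of the full push-forward. \textbf{The main obstacle} I anticipate is exactly this dimension-matching at injection: one must verify carefully that the four-dimensional injection parameter bundle maps with nonvanishing Jacobian onto the four phase-space coordinates of the new particle, so that a positive parameter density produces a genuine $m_{k+1}$-density rather than a measure concentrated on the lower-dimensional boundary $\gamma\times\mathbb R^2$; everything else is a countable decomposition and an application of the measure-preserving property of the closed billiard dynamics.
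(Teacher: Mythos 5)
Your proposal is correct and follows essentially the same route as the paper: the no-injection case is handled by the measure-preserving property of the closed billiard dynamics (with exiting particles dealt with by projecting off their now-decoupled coordinates), and the injection case reduces to the observation that the four-parameter map $(\tau,\xi,\delta,s)\mapsto(q,v)$ has nonvanishing Jacobian (the paper computes it as $-s^{2}\cos\delta$ in Lemma \ref{lemma: particle enters 4D}), iterated one injection per time window $(t_{j-1},t_j]$. The "main obstacle" you flag — dimension-matching at injection — is precisely the content of that lemma, so your anticipated difficulty is exactly where the paper puts its computation.
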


\begin{theopargself}
\begin{proof}[of Proposition \ref{prop: abs cont part} assuming Lemma \ref{lemma: prop 1 decomposition}]

Let $B=\{X \in \Omega: P^t_X$ is not defined $\}$. Then for each $X \in B$, there exists a positive measure set $C_X$ of injections such that, if we start at $X$ and follow any sequence of injections $c \in C_X$, the dynamics is not defined up to time $t$, i.e. a simultaneous collision with same disk occurs before time $t$ for each such injection.

Assume $m(B)>0$. Define $A=\{(X,c):X \in B; \; c \in C_X \}$. Then by Fubini's theorem $(m \times \rho_n) (A)>0$. Let $(X,c)$ be a Lebesgue density point of $A$ with $X$ admissible and $c$ having no simultaneous injections ($(X,c)$ exists since $m(S)=0$ and probability of simultaneous injections is zero). If $t_{c_1}, \cdots, t_{c_n}$ are the injection times for $c$, choose any $t_1, \cdots, t_{n-1}$ such that $0=t_0<t_{c_1}<t_1< \cdots <t_{n-1}<t_{c_n}<t_n=t$ ($c$ cannot be an empty sequence of injections since $(\Phi^0_t)_*\nu$ is defined by Lemma \ref{lemma: prop 1 decomposition}). Let $U$ be any neighborhood of $X$ such that any $Y \in U$ is admissible (the set of admissible states is open and dense in $\Omega$). Then $(m \times \rho_n) [A \cap (U \times C_{t_1, \cdots, t_{n-1}})]>0$, while by Lemma \ref{lemma: prop 1 decomposition} it must be zero. A contradiction. Therefore $m(B)=0$ and $(\Phi^n_t)_* \nu$ is defined.

Suppose $(\Phi^n_t)_* \nu$ has a nonzero singular component supported on a set $F \subset \Omega$. Then $m(F)=0$ and $m\{X \in \Omega: P^t_X(F)>0\}>0$. By an argument similar to the above we get a contradiction to Lemma \ref{lemma: prop 1 decomposition}.
Therefore $(\Phi^n_t)_* \nu$ is defined and absolutely continuous with respect to $m$, which completes the proof of Proposition \ref{prop: abs cont part}. \qed
\end{proof}
\end{theopargself}

\begin{theopargself}
\begin{proof}[of Lemma \ref{lemma: prop 1 decomposition}]

Since $C_{t_1, \cdots, t_{n-1}} \subset C_n$ is such that one particle is injected on each $(t_{j-1},t_j]$ and the probability of two particles entering at the same time is zero, to prove Lemma \ref{lemma: prop 1 decomposition} it is enough to treat two situations: $0$ particles are injected on $(0,t]$ and 1 particle is injected on $(0,t]$. \medbreak

\begin{case}[(0 particles injected)]
Since $m(S)=0$, where $S$ is a set of non-admissible states, and $\nu \ll m$, the time-$\tau$ push forward of $\nu$ is well defined in this situation.

The system behaves as a closed one until some particles exit. We would like to split the possible situations according to the number of particles that exit:

\begin{description}
\item [0:] \emph{no particles exit.} \\ Let $A_0 \subset \Omega$ be the set of states such that no particles exit on time interval $(0,t]$. Decompose $A_0=\cup A_0^k$ such that $A_0^k=A_0 \cap \Omega_k$. Then $\nu|_{A_0^k} \ll m_k$ and must stay so at all times $(0,t]$, i.e. $(\Phi_\tau^0)_*(\nu|_{A_0^k}) \ll m_k$ $\forall \tau \in (0,t]$. Indeed, if for some $\tau \leq t$ and a Borel set $B$, $[(\Phi_\tau^0)_*(\nu|_{A_0^k})]_\perp(B) \not = 0$ but $m_k(B)=0$, then $(\nu|_{A_0^k})(\Phi_{-\tau}B) \geq [(\Phi_\tau^0)_*(\nu|_{A_0^k})]_\perp(B)  \not = 0$, while $m_k(\Phi_{-\tau}B)=m_k(B)=0$, a contradiction to the absolute continuity of $\nu|_{A_0^k}$. Thus $(\Phi_t^0)_*(\nu|_{A_0}) \ll m$.

\item [1:] \emph{1 particle exits or several particles exit at the same time.} \\ Let $A^\tau_1$ be the set of states such that $1$ particle exits on the time interval $(\tau,t]$ and does not collide with $\partial \Gamma \setminus (\gamma_L \cup \gamma_R)$ on that time interval. If several particles exit at the same time, let $A^\tau_1$ be the set of states such that neither of these particles collides with $\partial \Gamma \setminus (\gamma_L \cup \gamma_R)$ on the time interval $(\tau,t]$.
    Then until time $\tau$, the dynamics is equivalent to the dynamics of the closed system and $(\Phi_\tau^0)_*(\nu|_{A_1^\tau}) \ll m$. Since after the time $\tau$ the particle(s) do not collide with the disk, the particle(s) coordinates are independent from the rest of the system and  $(\Phi_t^0)_* (\nu|_{A_1^\tau})=(\Phi_{t-\tau}^0)_*(P[(\Phi_\tau^0)_* (\nu|_{A_1^\tau})]) \ll m$, where $P$ denotes the projection of the measure on the remaining coordinates in the system. Since the statement is true for any $\tau \in [0,t)$, $(\Phi_t^0)_*(\nu|_{A_1}) \ll m$, where $A_1=\cup_{\tau \in [0,t)} A_1^\tau$.

\item [n:] \emph{particles exit at $n$ different times with possibly several exiting at each time.} \\  Let $A_n$ be the set of states such that $n$ particle exits occur on the time interval $(0,t]$, with possibly several particles exiting at the same time. Then $$A_n=\cup_{0=t_0< t_1 < \cdots < t_n=t} A_{t_1, \cdots, t_{n-1}}, $$ where $A_{t_1, \cdots, t_{n-1}}$ denotes the set of states such that $j^{th}$ particle exists on the time interval $(t_{j-1},t_j]$, $1 \leq j \leq n$. Each $A_{t_1, \cdots, t_{n-1}}$ can be treated as in the previous situation by applying the argument consequently to each time interval $(t_{j-1},t_j]$, $1 \leq j \leq n$. Thus $(\Phi_t^0)_*(\nu|_{A_n}) \ll m$.
\end{description}

We conclude that $(\Phi_t^0)_* \nu \ll m$.
\end{case}

\begin{case}[(1 particle injected)]
In order treat this case, we need to study the injection process in detail. We would like to start with a simplified situation when the injected particle does not collide with disks or walls on time interval $(0,t]$. We start by studying the injection process for the cell $\tilde{\Gamma}$ occupying the $x>0$ half plane of $xy$-plane bounded by the vertical wall along the $y$-axis with an opening $(-\gamma/2,\gamma/2)$ along the $y$-axis.

\begin{lemma}[(Particles are injected with 4-dimensional uncertainty)] \label{lemma: particle enters 4D}
Let $\tilde{\Gamma}$ be as above. Assume a particle is injected through the opening at a random entrance time $\tau \in (0,\infty)$, with a random position $\xi \in (\-\gamma/2,\gamma/2)$, at a random angle $\delta \in (-\pi/2,\pi/2)$ and with a random speed $s \in (0,\infty)$. Assume also that the distributions for entrance time, position, angle and speed are finite and positive on all the given intervals. Then the measure that describes the probability of finding the particle at certain position $(x^t,y^t)$ with certain velocity $(v_x^t,v_y^t)$ at any time $t>0$ is absolutely continuous with respect to the Lebesgue measure on $\tilde{\Gamma} \times \mathbb{R}^2$ and has positive density everywhere.
\end{lemma}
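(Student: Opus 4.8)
The plan is to exploit the fact that, in this simplified regime, the particle undergoes free flight after injection, so its time-$t$ state is an explicit elementary function of the four injection parameters, and the measure in the statement is simply the pushforward of the (product) injection law under that function. Fix $t>0$. A particle present at time $t$ must have entered at some $\tau\in(0,t)$; writing its injection data as position $\xi\in(-\gamma/2,\gamma/2)$, angle $\delta\in(-\pi/2,\pi/2)$, and speed $s\in(0,\infty)$, its entry velocity is $(s\cos\delta,s\sin\delta)$, and since it never meets the wall again (its $x$-velocity stays positive), at time $t$ it is in state
\[
x^t=s\cos\delta\,(t-\tau),\quad y^t=\xi+s\sin\delta\,(t-\tau),\quad v_x^t=s\cos\delta,\quad v_y^t=s\sin\delta.
\]
Call this map $F_t\colon(\tau,\xi,\delta,s)\mapsto(x^t,y^t,v_x^t,v_y^t)$. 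The four parameters are independent with positive finite densities on their intervals, so their joint law has a density $p=f_\tau f_\xi f_\delta f_s$ strictly positive on the open box $B=(0,t)\times(-\gamma/2,\gamma/2)\times(-\pi/2,\pi/2)\times(0,\infty)$, and the measure to be analyzed is exactly $(F_t)_*(p\,d\tau\,d\xi\,d\delta\,ds)$.

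Next I would verify that $F_t$ is a $C^\infty$ diffeomorphism from $B$ onto its image by exhibiting an explicit inverse. Writing $(u,w)=(v_x^t,v_y^t)$, one recovers speed and angle from the velocity via $s=\sqrt{u^2+w^2}$ and $\delta=\arctan(w/u)$ (well defined since $u=s\cos\delta>0$), and then $\tau=t-x^t/u$, $\xi=y^t-w\,x^t/u$. These are smooth and single-valued, so $F_t$ is injective with open image
\[
\mathcal O_t=\{(x,y,u,w):u>0,\ 0<x<ut,\ |y-wx/u|<\gamma/2\}\subset\tilde\Gamma\times\mathbb R^2.
\]
The one genuine computation is the Jacobian. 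In the variable split $\{\tau,\xi\}$ versus $\{\delta,s\}$ the derivative matrix is block upper triangular (the $(v_x,v_y)$ rows have vanishing derivatives in $\tau,\xi$), with diagonal blocks $\partial(x,y)/\partial(\tau,\xi)$ and $\partial(u,w)/\partial(\delta,s)$ of determinants $-s\cos\delta$ and $-s$; hence $\det DF_t=s^2\cos\delta$, which is strictly positive on $B$ since $s>0$ and $\delta\in(-\pi/2,\pi/2)$.

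With a nonvanishing Jacobian and an explicit inverse, the change-of-variables formula concludes the proof: $(F_t)_*(p\,d\tau\,d\xi\,d\delta\,ds)$ is absolutely continuous with respect to four-dimensional Lebesgue measure, with density
\[
\tilde p(x,y,u,w)=\frac{p(\tau,\xi,\delta,s)}{s^2\cos\delta},\qquad (\tau,\xi,\delta,s)=F_t^{-1}(x,y,u,w),
\]
finite and strictly positive at every point of $\mathcal O_t$. This delivers the asserted positivity of the density; I would add a clarifying remark that the support is the open set $\mathcal O_t$, not literally all of $\tilde\Gamma\times\mathbb R^2$, the constraint $v_x^t>0$ being forced by the absence of collisions in this simplified cell — in the full model the disk and wall collisions are what later populate the remaining velocity directions.

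Since the argument is a single change of variables, I do not expect a substantive obstacle; the only points requiring care are bookkeeping ones — restricting to $\{\tau<t\}$ so that a particle is actually present, checking global injectivity of $F_t$ (so that no summation over preimages is needed), and reconciling the phrase \emph{positive density everywhere} with the fact that the true support is the proper open subset $\mathcal O_t\subset\tilde\Gamma\times\mathbb R^2$.
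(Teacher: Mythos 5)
Your proof is correct and follows essentially the same route as the paper's: write the time-$t$ state as an explicit elementary function of the four injection parameters, check that the Jacobian determinant is $\pm s^2\cos\delta\neq 0$ on the domain, and exhibit the explicit inverse to get surjectivity onto the image. Your added remark that the image is only the proper open set $\mathcal{O}_t$ (so \emph{positive density everywhere} must be read as positive on the reachable states) is a fair clarification of an imprecision that the paper's own proof glosses over.
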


\begin{proof}[of Lemma \ref{lemma: particle enters 4D}]
The lemma holds if the mapping $f^t:(0,\infty) \times (\-\gamma/2,\gamma/2)\times (-\pi/2,\pi/2) \times (0,\infty) \to (0,\infty) \times (-\infty,\infty) \times (0,\infty) \times (-\infty, \infty)$ such that $(\tau,\xi,\delta,s) \to (x^t,y^t,v_x^t,v_y^t)$ is a diffeomorphism. This boils down to verifying that the Jacobian determinant
is nonzero everywhere and that the map is surjective.
Simple computations yield that:
$x^t=(t-\tau)s \cos(\delta)$, $y^t=-(t-\tau)s \sin(\delta)+\xi$, $v_x^t=s \cos(\delta)$, and $v_y^t=-s \sin(\delta)$ and thus the Jacobian matrix is:
$$\left(
    \begin{array}{cccc}
      -s\cos(\delta) & 0 & -(t-\tau)s\sin(\delta) & (t-\tau)\cos(\delta) \\
      s\sin(\delta) & 1 & -(t-\tau)s\cos(\delta) & -(t-\tau)\sin(\delta) \\
      0 & 0 & -s\sin(\delta) & \cos(\delta) \\
      0 & 0 & -s\cos(\delta) & -\sin(\delta) \\
    \end{array}
  \right)
$$
and its determinant is $-s^2\cos(\delta)$, which is nonzero on the domain of definition of $f^t$.

Given $(x^t,y^t,v_x^t,v_y^t)$, let
$$s=\sqrt{(v_x^t)^2+(v_y^t)^2}, \; \; \; \delta=\arctan(-v_y^t/v_x^t),$$ $$\xi=\frac{x^t\sin(\delta)+y^t\cos(\delta)}{\cos(\delta)}, \; \; \; \tau=\frac{y^t-\xi+ts\sin(\delta)}{s\sin(\delta)}.$$ Such $(\tau,\xi,\delta,s)$ clearly maps to $(x^t,y^t,v_x^t,v_y^t)$. \qed
\end{proof}

In Lemma \ref{lemma: particle enters 4D} the domain $\tilde{\Gamma}$ is such that an injected particle cannot collide with the boundary of $\tilde{\Gamma}$ at any time. In the real situation, when a particle is injected into the playground $\Gamma$, it might collide with $\partial \Gamma$ in an arbitrarily short time, depending on the injection parameters.

Let $C_1^\kappa$ be the set of all particle injection parameters $(\tau,\xi,\delta,s)$ such that one particle enters on time interval $(0,\kappa]$, $0 < \kappa \leq t$, and does not collide with $\partial \Gamma$ on time interval $(0, \kappa]$; no other particles are injected on time interval $(0,t]$. Since we assumed that all injection distributions have positive density everywhere and $C_1^\kappa$ clearly has a nonempty interior, by Lemma \ref{lemma: particle enters 4D} the measure that describes the particle location and velocity at time $\kappa$ is absolutely continuous with respect to the Lebesgue measure on $\Gamma \times \mathbb{R}^2$; denote this measure by $\pi_{C_1^\kappa}$.

Suppose we start with a measure $\nu \ll m$ and assume that one particle is injected on $(0,t]$ with injection parameters drawn from $C^1_\kappa$. Then until time $\kappa$ the coordinates associated with the injected particle are uncoupled from the coordinates of the rest of the system. Thus $\nu$ evolves as in Case 1, $(\Phi_\kappa^0)_* \nu$ is defined, and $(\Phi_\kappa^0)_* \nu \ll m$.

Therefore
$$(\Phi_\kappa^{1, C_1^\kappa})_*\nu = [\pi_{C_1^\kappa}] \times [(\Phi_\kappa^0)_* \nu] \ll m.$$
The injected particle have entered by time $\kappa$ and no other particles enter on time interval $(0,t]$. Thus on time interval $(\kappa,t]$ the system behaves as in Case $1$, implying that $(\Phi_t^{1,C_1^\kappa})_*\nu \ll m$. Since the above result is true for any $\kappa \in (0,t]$, $(\Phi_t^1)_* \nu \ll m$. \qed
\end{case}
\end{proof}
\end{theopargself}

\section{Proof of Proposition \ref{prop: empty playground acquiring density}} \label{sect: empty playground acquiring density}

We are going to \textquotedblleft acquire density" starting with an initial state $Y_0 \in \Omega_0$ by hitting each disk once with a particle possessing a \textquotedblleft 4-dimensional uncertainty". Upon each such collision, each disk would acquire a \textquotedblleft 2-dimensional uncertainty." That is, we will show that there exists an open set of injections $C$ of $N$ particles, such that the $j^{th}$ particle hits disk $D_j$ once and exits the system with no additional collisions by some uniformly selected time $T_0$. Moreover, $(\Phi^{N,C}_{T_0})_* \delta_{Y_0} \ll m$. It will turn out that same statement is true for nearby states $Y \in \Omega_0$ with $(\Phi^{N,C}_{T_0})_* \delta_Y$ varying continuously with $Y$.

In order to prove Proposition \ref{prop: empty playground acquiring density} we need the following lemmas:

\begin{lemma} [Continuity] \label{lemma:cts dependence of particle and disk's position and velocity}
Let $B \subset \Gamma \times \mathbb{R}^2$ be an open ball of positions and velocities such that for any $(x,y,v_x,v_y) \in B$ a particle with position $(x,y)$ and velocity $(v_x,v_y)$ at time $t=0$ it is going to hit disk $D_j$ within time $\tau_0$ while not meeting $\partial \Gamma$ again on the time interval $[0,\tau_1]$, $\tau_1>\tau_0$. Note that the collision with $D_j$ must be non-tangential since $B$ is open. Let $F_j$ be an open ball of angular positions and velocities of disk $D_j$ at time $0$. For any $t \in (\tau_0,\tau_1)$, define $f_t:B \times F_j \to \Gamma \times \mathbb{R}^2 \times \partial D_j \times \mathbb{R}$ such that $f_t(x,y,v_x,v_y,\varphi,\omega)= (x^t,y^t,v_x^t,v_y^t, \varphi^t,\omega^t)$ gives the positions and the velocities of the particle and disk $D_j$ at time $t$. Then $f_t$ is continuous for any $t \in (\tau_0,\tau_1)$.
\end{lemma}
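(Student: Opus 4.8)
The plan is to decompose the time-$t$ evolution map $f_t$ as a composition of three elementary maps — free flight up to the collision, the single rough collision with $D_j$, and free flight for the remaining time — and to argue that each piece is continuous, with the only delicate point being the continuous dependence of the collision time on the initial data.

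First I would observe that because the disk $D_j$ is pinned, the time $t_c = t_c(x,y,v_x,v_y)$ at which the particle first meets $\partial D_j$ depends only on the particle's initial position and velocity, not on $(\varphi,\omega)$. By hypothesis every point of $B$ produces exactly one collision, occurring at some $t_c \le \tau_0$, with no further meeting with $\partial \Gamma$ on $[0,\tau_1]$. I would show $t_c$ is continuous (in fact $C^1$) on $B$ via the implicit function theorem: writing $g(s;x,y,v_x,v_y) = |(x,y)+(v_x,v_y)s - c_j|^2 - R^2$, where $c_j$ is the center of $D_j$, the collision time is the smallest positive root of $g=0$. Non-tangency means $\partial_s g \ne 0$ at $s = t_c$, so the implicit function theorem yields a smooth local solution. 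Since the hypothesis guarantees this root is the \emph{unique} collision in $(0,\tau_1)$ for every point of $B$, the local solutions patch into a single continuous $t_c$ on all of $B$.

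Next I would write $f_t$ explicitly as the composition
$$f_t(x,y,v_x,v_y,\varphi,\omega) = \mathcal{F}_{t - t_c}\big(\mathcal{C}(\mathcal{F}_{t_c}(x,y,v_x,v_y),\varphi,\omega)\big),$$
where $\mathcal{F}_s$ denotes free flight for time $s$ (a smooth map, jointly continuous in $s$ and the state) and $\mathcal{C}$ denotes the rough-collision update. The angular position is a passive coordinate: since only $\omega$, not $\varphi$, enters the collision rule, one has $\varphi^t = \varphi + \omega t_c + \omega'(t-t_c)$, which is manifestly continuous. The collision map $\mathcal{C}$ is continuous because the tangential/normal decomposition at the collision is determined by the collision point $(x,y)+(v_x,v_y)t_c$ on $\partial D_j$, which depends continuously on the initial data through $t_c$, while the update formulas for $(v_\perp',v_t',\omega')$ are linear, hence continuous, in $(v_\perp,v_t,\omega)$.

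Finally, continuity of $f_t$ follows since it is a composition of continuous maps: $\mathcal{F}_{t_c}$ is continuous because $t_c$ is continuous and $\mathcal{F}$ is jointly continuous in time and state; $\mathcal{C}$ is continuous as just described; and $\mathcal{F}_{t - t_c}$ is continuous for the same reason, using $t_c < t < \tau_1$ so that the remaining free flight involves no further collision. The main obstacle is the continuity of the collision time $t_c$, and this is precisely where the non-tangency hypothesis is essential: a grazing collision would make $\partial_s g$ vanish, allowing $t_c$ to jump or the number of intersection points to change. The openness of $B$ forces the collision to be transversal throughout, so this difficulty does not arise.
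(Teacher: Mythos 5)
Your proposal is correct and follows essentially the same route as the paper: both reduce the lemma to continuity of the first-collision time under the non-tangency hypothesis, continuity of the rough-collision update, and composition with free flight. The only difference is cosmetic — the paper obtains the collision time and the tangential/normal components as explicit algebraic expressions in the initial data (e.g.\ $Rv_t = v_x y - v_y x$ and $Rv_\perp=\sqrt{v_x^2+v_y^2-v_t^2}$, which it then reuses to compute the derivative matrix in the next lemma), whereas you invoke the implicit function theorem for the root of the quadratic, which gives continuity without the formulas.
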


\begin{lemma} [Acquiring density for a disk] \label{lemma: 4D particle hits disk}
Let $B \subset \Gamma \times \mathbb{R}^2$ be as in Lemma \ref{lemma:cts dependence of particle and disk's position and velocity}. Assume at time $0$ disk $D_j$ has position $\varphi$ and angular velocity $\omega$. Define $\Psi_t^{\varphi,\omega}:B \to S^1 \times \mathbb{R}$, $(x,y,v_x,v_y) \to (\varphi^t,\omega^t)$ to be the mapping of the particle position and velocity at time zero to the disk position and velocity at time $t \in (\tau_0,\tau_1)$. Let $\nu$ be a measure on $B$ equivalent to the Lebesgue. Then the push forward measure $(\Psi_t^{\varphi,\omega})_*\nu$ is absolutely continuous with respect to the Lebesgue measure on $S^1 \times \mathbb{R}$ and has positive density on some open set.
\end{lemma}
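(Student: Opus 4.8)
The plan is to exploit that the target $S^1 \times \mathbb{R}$ is only two-dimensional while $B$ is four-dimensional, so it suffices to show that $\Psi_t^{\varphi,\omega}$ is a submersion on some open subset of $B$; pushing a measure equivalent to Lebesgue forward under a submersion produces an absolutely continuous measure with positive density on the image. Following the convention of the paper I would work with $\eta = 1$, where the collision rule reduces to the exchange $(v_i)_t' = R\omega$ and $R\omega' = (v_i)_t$, so that the \emph{post-collision} angular velocity of $D_j$ is simply $\omega' = (v_i)_t/R$, where $(v_i)_t$ is the tangential component of the particle's velocity at the moment of impact.

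First I would introduce \emph{collision coordinates}. For each $(x,y,v_x,v_y) \in B$ the particle meets $D_j$ transversally (the collision is non-tangential by hypothesis) at a unique first time $t_c$, at an angular position $\psi \in \partial D_j$, with tangential velocity $(v_i)_t$ and inward normal velocity $(v_i)_\perp$. Running the free flight backwards from the impact point shows that $(x,y,v_x,v_y) \mapsto (t_c,\psi,(v_i)_t,(v_i)_\perp)$ is a diffeomorphism onto an open set; smoothness of $t_c$ and of the impact data is exactly where transversality (non-tangency) is used, via the implicit function theorem, and is the analytic content already reflected in Lemma \ref{lemma:cts dependence of particle and disk's position and velocity}. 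Consequently $\nu$, being equivalent to Lebesgue on $B$, transports to a measure equivalent to Lebesgue in these coordinates.

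Next I would write the disk's state at time $t \in (\tau_0,\tau_1)$ explicitly. Since $D_j$ rotates freely up to $t_c$ and then with constant velocity $\omega' = (v_i)_t/R$ (no further collisions occur on $[0,\tau_1]$), we get
\[
\omega^t = \frac{(v_i)_t}{R}, \qquad \varphi^t = \varphi + \omega t_c + \frac{(v_i)_t}{R}\,(t - t_c).
\]
The crucial observation is that this depends only on the two quantities $t_c$ and $(v_i)_t$ — it is independent of $\psi$ and $(v_i)_\perp$ — so $\Psi_t^{\varphi,\omega}$ factors as the projection onto $(t_c,(v_i)_t)$ followed by the map $(t_c,(v_i)_t)\mapsto(\varphi^t,\omega^t)$. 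A direct computation gives the Jacobian determinant of the latter as $\tfrac{1}{R}\bigl(\omega - (v_i)_t/R\bigr)$, which is nonzero precisely when $(v_i)_t \ne R\omega$, i.e. when the impact actually changes the disk's angular velocity.

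Finally, since $B$ is open, $(v_i)_t$ ranges over an open interval of values, so the degenerate locus $\{(v_i)_t = R\omega\}$ is lower-dimensional and its complement contains an open set on which the $2\times 2$ Jacobian is invertible. There $\Psi_t^{\varphi,\omega}$ is the composition of a submersion (the projection) with a local diffeomorphism, hence itself a submersion; pushing the Lebesgue-equivalent measure forward — integrating out the $(\psi,(v_i)_\perp)$ fibres by Fubini and then applying the local diffeomorphism — yields a measure absolutely continuous with respect to Lebesgue on $S^1\times\mathbb{R}$ with strictly positive density on an open set. The main obstacle is the second step: setting up the collision coordinates rigorously and verifying that the impact time and impact velocity depend differentiably on the initial data, which rests entirely on the non-tangential nature of the collision guaranteed by the openness of $B$; once that smooth change of variables is in place, the rank computation and the submersion argument are routine.
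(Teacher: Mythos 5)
Your proof is correct and rests on the same essential mechanism as the paper's: $\Psi_t^{\varphi,\omega}$ is a submersion away from the locus $R\omega = v_t$, that locus is $\nu$-null, and pushing a Lebesgue-equivalent measure forward under a submersion gives a density that is positive on the (open) image. The difference is purely in how the rank computation is organized. The paper works directly in the Cartesian coordinates $(x,y,v_x,v_y)$, writes out the full $2\times 4$ derivative matrix of $(\varphi^t,\omega^t)$, and checks by hand that its two rows can only be proportional when $R\omega = v_t$ (or when $x=y=v_x=v_y=0$, which is excluded). You instead pass to collision coordinates $(t_c,\psi,(v_i)_t,(v_i)_\perp)$, observe that $(\varphi^t,\omega^t)$ factors through the projection onto $(t_c,(v_i)_t)$, and reduce everything to the $2\times 2$ determinant $\frac{1}{R}\bigl(\omega-(v_i)_t/R\bigr)$ — which recovers exactly the paper's degeneracy condition with far less algebra, at the cost of having to justify that the passage to collision coordinates is a diffeomorphism carrying $\nu$ to a Lebesgue-equivalent measure. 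That justification is genuine work (the Jacobian of the billiard change of variables is nonvanishing only at non-tangential collisions, which is where the openness of $B$ enters), but it is standard and is morally the content the paper already isolates in Lemma \ref{lemma:cts dependence of particle and disk's position and velocity}. Your factorization also makes transparent \emph{why} the map degenerates when $(v_i)_t=R\omega$: the collision then leaves the disk's angular velocity unchanged, so only the one-parameter family $t_c$ remains to move $(\varphi^t,\omega^t)$. One small point to tighten: you should note explicitly that $(v_i)_t = (v_xy-v_yx)/R$ is a nonconstant smooth function on the open set $B$, so that $\{(v_i)_t=R\omega\}$ really is a null hypersurface rather than possibly all of $B$.
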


\begin{theopargself}
\begin{proof}[of Proposition \ref{prop: empty playground acquiring density} assuming Lemmas \ref{lemma:cts dependence of particle and disk's position and velocity} and \ref{lemma: 4D particle hits disk}]

Denote the angular positions and angular velocities of the disks in state $Y_0$ by $(\varphi_1,\omega_1)$, $\cdots$, $(\varphi_N,\omega_N)$. By assumption, if $\eta=1$, $\omega_j \ne 0$, $1 \leq j \leq N$.

Suppose we inject a particle at time $\tau>0$ with some initial position $\xi \in \gamma_L$, initial angle $\varphi \in (-\frac{\pi}{2},\frac{\pi}{2})$, and speed $s \in (0,\infty)$ arranged in such a way that it first collides non-tangentially with $D_j$ at the top, $(2j-1,R)$, and exits the system with no additional collisions; this is possible because $\omega_j \ne 0$, $1 \leq j \leq N$, if $\eta=1$. Since the collision with disk in non-tangential, by Lemmas \ref{lemma: particle enters 4D} and \ref{lemma:cts dependence of particle and disk's position and velocity} there exist open neighborhoods $I_j$ of $\omega_j$ and $C_j$ of $(\tau,\xi,\varphi,s)$ such that for each $\omega' \in I_j$ and $(\tau,\xi,\varphi,s)' \in C_j$, the injected particle follows a nearby path in $\Gamma$, hits $D_j$ once, and exits the system with no additional collisions within some time $T_j$.

Suppose we subsequently hit each disk with a particle as described above. Define a neighborhood $U$ of $Y_0$ by $U = \partial D_1 \times \cdots \times \partial D_N \times I_1 \times \cdots \times I_N$.

Let $C=\Pi_{j=1}^N C_j$ and $T=\sum_{j=1}^{N} T_j$. Then by Lemmas \ref{lemma: particle enters 4D} and \ref{lemma: 4D particle hits disk}, for any $Y \in U$, $(\Phi^{N,C}_T)_*\delta_{Y} \ll m_0$. Here $(\Phi^{N,C}_T)_*\delta_{Y}$ denotes the time-$T$ push forward of $\delta_{Y}$, provided that exactly $N$ particles enter on time interval $(0,T]$ allowing only injections with parameters in $C$. Let $A_{Y}$ be the set of states on which $(\Phi^{N,C}_T)_*\delta_{Y}$ has strictly positive density. By Lemma \ref{lemma: 4D particle hits disk}, each $A_Y$ contains an open set and by Lemma \ref{lemma:cts dependence of particle and disk's position and velocity}, $A_Y$ vary continuously with $Y$. Therefore there exists an open neighborhood $U_0$ of $Y_0$, $U_0 \subset U$, such that $A_0=\cap_{Y \in U_0} A_Y$ contains an open set; clearly $m_0(A_0)>0$. This completes the proof of Proposition \ref{prop: empty playground acquiring density}. \qed
\end{proof}

\begin{proof}[of Lemmas \ref{lemma:cts dependence of particle and disk's position and velocity} and \ref{lemma: 4D particle hits disk} ($\eta=1$)]

Assume the disk is of radius $R$ and is centered at $(0,0)$ coordinate of the $xy$-plane and the particle and the disk have initial coordinates $(x,y,v_x,v_y)$ and $(\varphi,\omega)$ respectively.

Denote by $\tau$ the collision time with the disk, by $\theta$ the angular position of the collision point on the disk measured counterclockwise from the positive direction of the $x$-axis, and by $v_t$ and $v_\perp$ the tangential and the normal velocities of the particle upon collision, with $v_\perp$ representing the velocity after the collision, i.e. pointing outwards. Then,
$$v_t=-v_y\cos(\theta)+v_x\sin(\theta)$$
$$v_\perp=-v_y\sin(\theta)-v_x\cos(\theta)$$
$$\tau=\frac{R\cos(\theta)-x}{v_x}=\frac{R\sin(\theta)-y}{v_y}$$
and
$$v_t^2+v_\perp^2=v_x^2+v_y^2$$
One can rewrite
$$Rv_t=v_xy-v_yx$$
$$Rv_\perp=-\tau(v_x^2+v_y^2)-v_yy-v_xx=\sqrt{v_x^2+v_y^2-v_t^2}$$

Following the collision, at time $t \in (\tau_0,\tau_1)$, the coordinates for position and velocity of the particle are:
$$v_x^t=v_\perp(\tau v_x + x)/R + \omega (\tau v_y + y)$$
$$v_y^t=v_\perp(\tau v_y + y)/R - \omega (\tau v_x + x)$$
$$x^t=\tau v_x + x + v_x'(t-\tau)$$
$$y^t=\tau v_y + y + v_y'(t-\tau)$$
The angular velocity of the disk at time $t$ is $w^t=v_t/R$ and the angular position is $\varphi^t=[\varphi+\omega\tau+v_t(t-\tau)/R] \mod 2\pi R$. Clearly $f_t$ is continuous, which completes the proof of Lemma \ref{lemma:cts dependence of particle and disk's position and velocity}.

In order to prove Lemma \ref{lemma: 4D particle hits disk}, we would like to study the matrix of the derivatives of $\Psi^{\varphi,\omega}_t:B \to S^1 \times \mathbb{R}$, $(x,y,v_x,v_y) \to (\varphi^t,\omega^t)$, $t \in (\tau_0,\tau_1)$. It is:
$$\left(
    \begin{array}{cc}
      \frac{\partial \varphi^t}{\partial x} & \frac{\partial \omega^t}{\partial x} \\
      \frac{\partial \varphi^t}{\partial y} & \frac{\partial \omega^t}{\partial y} \\
      \frac{\partial \varphi^t}{\partial v_x} & \frac{\partial \omega^t}{\partial v_x} \\
      \frac{\partial \varphi^t}{\partial v_y} & \frac{\partial \omega^t}{\partial v_y}\\
    \end{array}
 \right)^T
=\left(
     \begin{array}{cc}
       \frac{(-\frac{v_t v_y}{v_\perp}-v_x)( \omega-\frac{v_t}{R})}{v_x^2+v_y^2}-\frac{v_y}{R^2}(t-\tau) & -\frac{v_y}{R^2} \\
       \frac{(\frac{v_t v_x}{v_\perp}-v_y)( \omega-\frac{v_t}{R})}{v_x^2+v_y^2}+\frac{v_x}{R^2}(t-\tau) & \frac{v_x}{R^2} \\
       \frac{(\frac{v_t y}{v_\perp}-2v_x\tau-x)( \omega-\frac{v_t}{R})}{v_x^2+v_y^2}+\frac{y}{R^2}(t-\tau) & \frac{y}{R^2} \\
       \frac{(-\frac{v_t x}{v_\perp}-2v_y \tau-y)( \omega-\frac{v_t}{R})}{v_x^2+v_y^2}-\frac{x}{R^2}(t-\tau) & -\frac{x}{R^2} \\
     \end{array}
   \right)^T,
$$
where $T$ denotes the transpose.

Suppose the two rows of the derivative matrix (i.e. columns of the non-transposed matrix) are linearly dependent, then there exists a constant $c$ such that:
$$c=
-\frac{R^2}{v_y}\frac{(-\frac{v_t v_y}{v_\perp}-v_x)( \omega-\frac{v_t}{R})}{v_x^2+v_y^2}+(t-\tau)=
\frac{R^2}{v_x}\frac{(\frac{v_t v_x}{v_\perp}-v_y)( \omega-\frac{v_t}{R})}{v_x^2+v_y^2}+(t-\tau)=$$
$$=\frac{R^2}{y}\frac{(\frac{v_t y}{v_\perp}-2v_x\tau-x)( \omega-\frac{v_t}{R})}{v_x^2+v_y^2}+(t-\tau)= -\frac{R^2}{x}\frac{(-\frac{v_t x}{v_\perp}-2v_y \tau-y)(\omega-\frac{v_t}{R})}{v_x^2+v_y^2}+(t-\tau)$$
Thus either $(R \omega - v_t)=0$ or
$$\frac{v_t}{v_\perp}+\frac{v_x}{v_y}=\frac{v_t}{v_\perp}-\frac{v_y}{v_x}=
\frac{v_t}{v_\perp}-\frac{2v_x\tau}{y}-\frac{x}{y}=\frac{v_t}{v_\perp}+\frac{2v_y\tau}{x}+\frac{y}{x}$$
$$\Leftrightarrow$$
$$x=y=v_x=v_y=0,$$
From our assumptions it follows that $x,y,R,v_x,v_y \not = 0$; so the derivative matrix has rank $2$ unless $R \omega = v_t$.

Let $A_\omega = \{(x,y,v_x,v_y): R \omega=v_t=\frac{1}{R}(v_xy-v_yx)\}$. Clearly $\nu(A_\omega)=0$. For each point $(x,y,v_x,v_y) \in B \setminus A_\omega$, the derivative matrix has rank 2. Therefore the push forward of $\nu$ under $\Psi^{\varphi, \omega}_t$ must be absolutely continuous with respect to the Lebesgue measure on $S^1 \times \mathbb{R}$ and has positive density on an open set $\Psi^{\varphi,\omega}_t(B \setminus A_\omega)$. This completes the proof of Lemma \ref{lemma: 4D particle hits disk}. \qed
\end{proof}
\end{theopargself}

\section{Flushing Particles Out} \label{sect: flush particles out}

This section is the $1^{st}$ step in the proof of Proposition \ref{prop:flush particles out}. Here we will show the following:

\begin{proposition} \label{prop:sample path flush particles out}
For any admissible state $X$, there exists a sample path $\sigma_X$ that starts at $X$ and ends at some $X_0 \in \Omega_0$.
\end{proposition}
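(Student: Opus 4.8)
The plan is to flush the particles out one at a time and induct on their number: since any $X$ lies in some $\Omega_k$ with $k$ finite, it suffices to show that from any admissible state with at least one particle there is a sample path after which one designated particle $P$ has left the system, every auxiliary injected particle has left, and the resulting state is again admissible. Because the non-admissible set $S=S_T\cup S_{SC}\cup S_{TC}$ has $m$-measure zero and its complement is open and dense, I would keep each intermediate state admissible by choosing the injection times, positions, angles, and speeds generically; this same genericity rules out the undefined (simultaneous same-disk) collisions along the constructed path.

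To remove a single particle $P$, first let the system evolve with no injections. If $P$ never meets a disk, then since $X\notin S_T$ its velocity is nonzero, and either $v_x\neq 0$ — in which case $P$ reaches an opening after finitely many specular wall bounces, which preserve $v_x$, and exits — or $v_x=0$ with the particle's $x$-coordinate lying over some disk, which (as $v_y\neq 0$) forces $P$ to strike that disk in finite time. So I may assume $P$ reaches a first, non-tangential disk collision with some $D_j$; tangential and stopping collisions are excluded by $X\notin S_{TC}$ and avoided by a slight earlier perturbation. The central mechanism is that, just before $P$ arrives at $D_j$, I inject an auxiliary particle $H$ that strikes $D_j$ and, via the $\eta=1$ rule $R\omega_j'=(H\text{'s tangential velocity at collision})$, presets $\omega_j$ to any value in an open range; by Lemmas \ref{lemma: particle enters 4D} and \ref{lemma: 4D particle hits disk} these values are realizable, and since particles do not interact, $H$ never disturbs the approach of $P$. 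Scheduling $H$'s collision to be the last disk-$j$ event before $P$ arrives guarantees that $\omega_j$ holds the chosen value at the instant of $P$'s collision.

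Having gained control of $\omega_j$ at each of $P$'s collisions, I would steer $P$ monotonically toward one end of the chain. Because the outgoing velocity of $P$ at a non-tangential collision sweeps out an open half-circle of outward directions as $\omega_j$ varies, at each collision I can send $P$ up over the top of the current disk — exploiting that the strip $R<y\le 1$, and symmetrically $-1\le y<-R$, contains no disk material — so that $P$ clears that disk and descends into the next gap toward the chosen opening, making at least one disk of progress. After at most $N$ such controlled collisions $P$ enters the disk-free corridor adjacent to an opening, for instance $[0,1-R]\times[-1,1]$ on the left, and exits. Each auxiliary particle $H$ is routed out by the same over-the-disk mechanism, and continuity (Lemma \ref{lemma:cts dependence of particle and disk's position and velocity}) ensures the relevant collisions remain non-tangential under small adjustments.

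The hard part will be the geometric routing together with its scheduling: proving that the one-parameter family of outgoing directions available at a given collision really does contain a choice carrying $P$ over the current disk into the next gap, rather than back toward the disk just left or into a disk on the far side, and organizing the finitely many auxiliary injections so that each disk retains its preset angular velocity exactly when $P$ reaches it, so that no two particles hit the same disk simultaneously, and so that every auxiliary particle has itself departed by time $T$. The positive measure of admissible injection parameters and the genericity furnished by $m(S)=0$ make all of this feasible, but verifying the over-the-disk step quantitatively is the technical heart of the argument.
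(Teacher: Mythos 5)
Your overall strategy (preset each disk's angular velocity with auxiliary injections, then route the designated particle out over the tops of the disks) is in the right spirit, but the step you treat as routine is precisely the one the paper must work hardest for, and your justification of it fails. You claim that a single injected particle $H$ can strike an interior disk $D_j$ and preset $\omega_j$ to \emph{any} value in an open range, citing Lemmas \ref{lemma: particle enters 4D} and \ref{lemma: 4D particle hits disk}; but those lemmas only give absolute continuity and positive density on \emph{some} open set of $(\varphi^t,\omega^t)$, not surjectivity onto all target angular velocities. For $\eta=1$ the new angular velocity is $R\omega_j'=(v_H)_t$ at impact, so to realize a small value of $|\omega_j'|$ the auxiliary particle must arrive with small tangential velocity; an interior disk is reachable from an opening only by shallow, fast trajectories skimming over (or under) the intervening disks, so the achievable tangential components at impact are constrained away from the values you need. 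And your steering step genuinely requires arbitrary $\omega_j$, since the outgoing direction of $P$ sweeps its half-circle only as $\omega_j$ ranges over all of $\mathbb{R}$. This is exactly why the paper proves Lemma \ref{lemma: controllability} by induction along the chain: to set $\omega_{k+1}$ to an arbitrary value, one first sets $\omega_k$ large, bounces an auxiliary particle from $D_k$ off the floor into $D_{k+1}$ at a tuned angular position, and then resets $D_k$ to fling the auxiliary particle out. Without this (or an equivalent) construction your argument does not close.

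Two further points. First, your one-particle-at-a-time induction leaves the other $k-1$ particles evolving uncontrolled while you steer $P$; their collisions with disks whose velocities you have just altered can stop them, trap them, or produce simultaneous same-disk collisions, and ``genericity'' of injection parameters does not obviously rule this out, because the bad events concern the passive particles' trajectories rather than the injected ones. The paper instead assigns a controlled exit path to \emph{every} particle of $X$ simultaneously and proves open-set lemmas (Lemma \ref{lemma: open set of paths}) showing that collision times can be desynchronized. Second, a tangential first collision of $P$ cannot be ``avoided by a slight earlier perturbation'': you have no control over $P$'s free flight before its first collision, and tangency is independent of $\omega_j$. Admissibility only excludes \emph{stopping} tangential collisions; the paper handles the remaining ones by setting $R\omega_j=v_t$ so the particle passes through unaffected, and shows this can happen only finitely often. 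Finally, your over-the-disk routing, which you yourself flag as unverified, is replaced in the paper by an explicit geometric construction (Lemma \ref{lemma:proper projected particle path}): climb to the top of a disk via wall bounces, then exit along a single shallow segment clearing all remaining disks.
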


In order to drive the system from state $X$ with possibly many particles to some particle-less state $X_0$, we have to ensure that each particle in $X$ traces a path in $\Gamma$ from its initial position to one of the exits. Following the ideas in \cite{Eckmann}, we will describe a class of projected particle paths traced in $\Gamma$ and show that each can be followed provided that disks have appropriate angular velocities upon collisions. Then we will establish that by injecting particles with appropriate initial conditions we can change the angular velocity of any disk to any given value in an arbitrarily short time. That will enable us to force a particle along a projected particle path  by setting the angular velocities of the disks to appropriate values before collisions.

\begin{definition}
A \emph{proper projected particle path} is a continuous curve $\gamma:[0,1] \to \Gamma$, $s \mapsto \gamma(s)$, such that
\begin{enumerate}
\item $\gamma$ consists of a finite sequence of straight segments meeting at $\partial \Gamma$.
\item The incoming and outgoing angles of two consecutive segments of $\gamma$ meeting $\partial \Gamma_0$ are equal.
\item Only $\gamma(0)$ and $\gamma(1)$ can be in the openings $\gamma_L$ and $\gamma_R$.
\item $\gamma$ is nowhere tangent to boundaries of the disks $\cup_{j=1}^{N} \partial D_j$.
\end{enumerate}
\end{definition}

\begin{remark}
Note that a proper projected particle path is allowed to have \emph{any} non-tangential \textquoteleft reflections' off the boundaries of the disks $\cup_{j=1}^{N} \partial D_j$. An example of a proper projected particle path is shown in Fig \ref{fig:rectangle}.
\end{remark}

\begin{lemma} [Existence of a proper projected particle path] \label{lemma:proper projected particle path}
There exits a proper projected particle path from any point $(x,y) \in \partial D_j$ to one of the exits $\gamma_L$ or $\gamma_R$.
\end{lemma}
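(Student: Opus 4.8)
The plan is to construct a proper projected particle path explicitly, starting from an arbitrary point $(x,y) \in \partial D_j$ and reaching an exit. The geometry is favorable: the disks sit in a horizontal row centered at heights $y=0$, and the openings $\gamma_L, \gamma_R$ are the full left and right edges. So the natural idea is to route the particle rightward (say) toward $\gamma_R$, bouncing off $\partial \Gamma_0$ (the top/bottom walls $y = \pm 1$) and off intervening disks as needed, always keeping the segments non-tangent to the disks and obeying equal-angle reflection at the flat walls.

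First I would dispose of the degenerate emission direction: since $(x,y) \in \partial D_j$ is a single boundary point and we get to \emph{choose} the outgoing segment, I can pick an initial direction that is non-tangent to $\partial D_j$ and points generally toward, say, $\gamma_R$. The key observation is that the set of ``bad'' directions (tangent to some disk, or hitting a disk exactly tangentially downstream, or aimed exactly at a disk center so the reflection is degenerate) is a measure-zero/closed nowhere-dense set of angles, so a generic nearby direction is admissible. Then I would propagate: follow a straight segment until it meets $\partial \Gamma$. If it meets a flat wall $y=\pm1$, reflect with equal angles (condition 2) and continue; if it meets a disk non-tangentially, reflect off the disk (which is unconstrained by the definition, per the Remark) and continue. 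The goal is to show this process can be steered to terminate at $\gamma_L$ or $\gamma_R$ in finitely many segments.

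The cleanest way to guarantee termination and finiteness is to choose a direction with a strictly positive (or strictly negative) horizontal component and arrange that reflections do not destroy horizontal progress. Reflections off the flat horizontal walls $y=\pm 1$ preserve $v_x$ entirely, so between disk collisions the particle drifts monotonically in $x$; I only need to ensure it does not get trapped bouncing vertically forever between the two flat walls inside a single ``cell'' and that each disk encountered is struck non-tangentially. Because $R < 1$ there is always vertical clearance (a gap of width $2(1-R)$ between each disk and the walls) through which a path with small enough $|v_y/v_x|$ can thread past a disk without hitting it at all; alternatively, by perturbing the initial angle within the open admissible set, I can ensure every disk actually contacted is hit transversally. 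Since the horizontal extent is finite ($0 \le x \le 2N$) and $x$ advances by a definite amount on each flat-wall-bounded leg, the particle reaches the vertical strip $x=2N$, i.e. $\gamma_R$, after finitely many segments.

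The main obstacle is ruling out the pathological reflection patterns: a generic straight segment is fine, but after reflecting off a disk the outgoing direction is determined by the geometry, and one must check it cannot force a tangency, a simultaneous/corner hit at $\partial \Gamma_0 \cap \partial D_j$, or an infinite regress of disk bounces. The honest way to handle this is a genericity/perturbation argument: the finitely many constraints (avoid tangency at each of finitely many potential disk encounters, avoid corner points, land inside $\gamma_R$ rather than on its endpoints) each exclude only a closed nowhere-dense set of initial angles, so their union is still nowhere dense, and an admissible initial direction exists in its complement. I expect the bulk of the work to be in setting up this perturbation argument carefully and bounding the number of segments, rather than in any single computation; the equal-angle and non-tangency conditions (2) and (4) are exactly what the flat-wall geometry and the open choice of initial angle are designed to supply.
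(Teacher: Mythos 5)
Your overall strategy---an explicit geometric construction that exploits the disk-free strips $R<|y|<1$ and perturbs away from tangencies---is in the same spirit as the paper's proof, but your termination mechanism has a genuine gap. You claim that by keeping $v_x>0$ the particle ``drifts monotonically in $x$'' and that ``$x$ advances by a definite amount on each flat-wall-bounded leg.'' This breaks down exactly when the path meets a disk on the hemisphere facing its direction of travel (e.g.\ travelling right and striking the left half of $D_{k+1}$). At such a point the outward directions $d$ satisfy $d\cdot n>0$ with $n$ pointing leftward: at the leftmost point of the disk \emph{every} admissible outgoing direction has $d_x<0$, so the sign of $v_x$ must reverse; and near that point, preserving $d_x>0$ forces a nearly vertical leg, whose horizontal advance is not bounded below. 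The resulting dynamics is a climb up the face of the disk by repeated bounces against the nearest flat wall, and your sketch neither identifies this process nor proves it terminates. Note also that this is not cured by genericity: the obstruction occurs for an open set of configurations, not a nowhere-dense set of initial angles, so the perturbation argument you lean on does not reach it. (Two smaller points: the clearance between a disk and a wall is $1-R$, not $2(1-R)$; and a nearly horizontal segment can only ``thread past'' the disks once the path is already inside the strip $R<|y|<1$, which a point of $\partial D_j$ with small $|y|$ cannot reach in one nearly horizontal leg.)

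The paper's proof is built precisely around this difficulty. For a starting point with $|y|>\delta$ it routes the particle to the pole $(2j-1,\pm R)$ of its own disk by finitely many bounces between $\partial D_j$ and the nearest wall $y=\pm1$ (each bounce moving the contact point a definite amount toward the pole), and then appends a single segment of small slope from the pole through the disk-free strip to $\gamma_L$ or $\gamma_R$; no intermediate disk is ever crossed. For $|y|\le\delta$, where the outward cone is pinned toward a neighbouring disk, it first transfers by one straight segment to that neighbour, using the tangent-line computation to show the transfer lands at height between $\delta$ and $\tfrac{2R\sqrt{1-R}}{2-R}$, i.e.\ in the region where the climbing construction applies. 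If you want to salvage your monotone-progress idea you would have to add exactly these two ingredients: a proof that the climb over a facing disk terminates, and a separate treatment of equatorial starting points where reversal of $v_x$ is unavoidable.
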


The statement in Lemma \ref{lemma:proper projected particle path} is rather obvious for the geometry we consider. We include a proof for completeness purposes.

\begin{theopargself}
\begin{proof}[of Lemma \ref{lemma:proper projected particle path}]
Let $\delta$ be such that $0<\delta<\frac{2R\sqrt{1-R}}{2-R}$. If $y$-coordinate of $(x,y)$ is not in $[-\delta,\delta]$, i.e. $|y|>\delta$, then by a simple geometric argument there exists a proper projected particle path from $(x,y)$ to the top  of the disk $D_j$, $(2j-1, R)$ (or the bottom $(2j-1, -R)$), which makes several collisions with the upper wall $[0,N] \times \{1\}$ (or the lower wall $[0,N] \times \{- 1\}$). By appending to the proper projected particle path above a segment in $\Gamma$ that connects $(2j-1, \pm k)$ to either $\gamma_L$ or $\gamma_R$ , we prove Lemma \ref{lemma:proper projected particle path} for the situation when $|y|>\delta$. If $y \in [-\delta, \delta]$, then there exits a segment in $\Gamma$ that connects $(x,y)$ to the point on the appropriate nearby disk or exit with $y$-coordinate between $\delta$ and $\frac{2R\sqrt{1-R}}{2-R}$. Indeed, by the system's geometry, a tangent line from $(2j-1 \pm R, 0)$ to the appropriate nearby disk (if applicable) intersects the disk with $y$-coordinate $\pm\frac{2R\sqrt{1-R}}{2-R}$. \qed
\end{proof}
\end{theopargself}

In order to force a particle to follow a proper projected particle path, we have to ensure that upon each (non-tangential) collision with a disk, the disk has appropriate angular velocity: the total velocity of the particle after collision is the vector sum of $\vec{v}_\perp'=-\vec{v}_\perp$ and $\vec{v}_t'=\vec{v}_t -\frac{2\eta}{1+\eta}(\vec{v}_t - R \vec{\omega})$ and must be parallel to the next segment of the proper projected particle path. We would like to establish that by injecting particles with appropriate initial conditions we can change the angular velocity of any disk to any given value in an arbitrarily short time.

\begin{lemma} [Controlling angular velocities of disks in arbitrarily short times] \label{lemma: controllability}
Suppose disk $D_j$ rotates with angular velocity $\omega$ and none of the particles inside the system will collide with any disk before time $\tau>0$. Given any $\omega'$, there exists a sequence of particle injections on time interval $(0,\tau)$ from the left bath such that:
\begin{itemize}
\item at time $\tau$ the disk $D_j$ has angular velocity $\omega'$,
\item at time $\tau$ all the injected particles have left the system, and
\item on time interval $(0,\tau)$ the injected particles follow admissible paths and only hit disks $D_1, \cdots, D_{j-1}$ with the exception of one collision of one particle with disk $D_j$.
\end{itemize}
The same holds for the right bath with appropriate disk renumbering.
\end{lemma}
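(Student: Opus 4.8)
The plan is to prove Lemma~\ref{lemma: controllability} by reducing the problem to a single controlled collision with disk $D_j$, and to handle that collision using the "4-dimensional uncertainty" established in Lemma~\ref{lemma: particle enters 4D}. The collision rule for $\eta$ gives $R\omega_j' = R\omega_j + \frac{2}{1+\eta}((v_i)_t - R\omega_j)$, so a single non-tangential collision of an injected particle carrying tangential velocity $(v_i)_t$ changes the disk's angular velocity by an amount proportional to $(v_i)_t - R\omega_j$. To achieve an \emph{arbitrary} target $\omega'$ in one hit, I would need $(v_i)_t = R\omega_j + \frac{1+\eta}{2}(R\omega' - R\omega_j)$, which is a single prescribed real number; since injected speeds range over $(0,\infty)$ and injection angles over $(-\frac\pi2,\frac\pi2)$, the achievable tangential velocity at the collision point covers all of $\mathbb{R}$, so any $\omega'$ is reachable by one suitable collision.

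\emph{First} I would fix a proper collision geometry: by Lemma~\ref{lemma:proper projected particle path} (applied in reverse, or by an explicit short construction) there is an admissible straight path from the left opening $\gamma_L$ to a non-tangential collision point on $D_j$ that passes only through the region occupied by disks $D_1,\dots,D_{j-1}$ without colliding with them, and such that after the single collision the particle exits the system with no further disk collisions. Because none of the particles already inside collide with any disk before time $\tau$, the injected particle's trajectory is uncoupled from the existing dynamics. \emph{Next}, using Lemma~\ref{lemma: particle enters 4D}, I would choose the injection parameters $(\tau_{\mathrm{inj}},\xi,\delta,s)$ so that the particle (i) is injected at a time close to $0$, (ii) reaches the prescribed point of $\partial D_j$ before time $\tau$, (iii) arrives with exactly the tangential velocity $(v_i)_t$ computed above, and (iv) leaves the system before time $\tau$. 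The speed $s$ provides the degree of freedom to set $(v_i)_t$, while the remaining parameters are adjusted to respect the timing and geometric constraints.

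\emph{To handle the constraint that the target may be large}, I would allow a \emph{sequence} rather than a single injection: if the requisite tangential velocity for a one-shot solution would force a collision that is too grazing (near-tangential) or otherwise violates admissibility, I would instead split the change $R\omega' - R\omega_j$ into finitely many increments, each realized by one particle making one admissible non-tangential collision with $D_j$ and then exiting, with the particles injected and flushed out in disjoint short subintervals of $(0,\tau)$. Since $\tau$ is fixed but arbitrary, and each injected particle can be made to enter, collide once, and exit in an arbitrarily short window (speeds are unbounded above, so transit times can be made as small as desired), finitely many such increments fit into $(0,\tau)$. After the last increment the disk has angular velocity exactly $\omega'$, all injected particles have exited, and every injected trajectory only touched $D_1,\dots,D_{j-1}$ en route plus its one collision with $D_j$.

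\emph{The hard part} will be verifying that all four bulleted requirements can be met \emph{simultaneously}: that the prescribed tangential velocity at the collision point, the requirement of a non-tangential (hence admissible) collision, the routing through $D_1,\dots,D_{j-1}$ without collisions, and the entry-collide-exit timing within $(0,\tau)$ are mutually compatible for an \emph{open} set of parameters. The key observation that makes this work is that Lemma~\ref{lemma: particle enters 4D} gives a diffeomorphism between injection parameters and the particle's phase-space state at any positive time, so the set of injection parameters producing a collision at the desired point with the desired velocity is nonempty and open; admissibility (non-tangency, avoidance of $D_1,\dots,D_{j-1}$) is an open condition and can be guaranteed by choosing the collision point and incoming direction generically, and the timing constraints are satisfied by exploiting the unboundedness of the speed. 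Renumbering the disks handles the symmetric statement for the right bath. \qed
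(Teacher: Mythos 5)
Your reduction to achieving a prescribed tangential velocity $(v_i)_t = R\omega_j + \frac{1+\eta}{2}(R\omega'-R\omega_j)$ at a single collision is correct, but the central claim that ``the achievable tangential velocity at the collision point covers all of $\mathbb{R}$'' fails for $j\ge 2$, and this is exactly the difficulty the lemma exists to overcome. A particle injected from $\gamma_L$ that is to reach $D_j$ in a straight line without touching $D_1,\dots,D_{j-1}$ must thread the corridors $R<|y|<1$, so it can only strike $\partial D_j$ near the top or bottom, arriving with a rightward, nearly horizontal velocity whose slope is bounded by the geometry; at such a point $v_t=-v_y\cos\theta+v_x\sin\theta\approx v_x>0$ is bounded below by a fixed fraction of the speed. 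In particular a radial hit ($v_t=0$, needed for $\omega'=0$ when $\eta=1$) or a hit with $v_t<0$ is unreachable along any directly injected trajectory. Your fallback of splitting $R\omega'-R\omega_j$ into increments does not repair this: for $\eta=1$ each collision sets $R\omega_j$ \emph{equal} to the incoming tangential velocity (the two quantities swap), so the reachable set of angular velocities is exactly the set of achievable tangential velocities, no matter how many particles you use. A second, related gap is the exit requirement: after the collision the outgoing tangential component is $R\omega_j$, the disk's \emph{prior}, uncontrolled angular velocity, so to guarantee the particle leaves one must make the incoming normal component huge so that the outgoing velocity is dominated by the reversed normal --- but with the arrival direction pinned to be nearly horizontal and $v_t$ pinned to equal $R\omega'$, the normal component is bounded in terms of $\omega'$, and in any case the reversed inward normal at the accessible points of $\partial D_j$, $j\ge 2$, does not point out of the system.

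The paper resolves both problems by induction on $j$, which is the idea your proposal is missing. For $D_1$ the inward normal at $(1-R,0)$ points straight back to $\gamma_L$, so one hits $D_1$ there with tangential component exactly $R\omega'$ and an arbitrarily large normal component, and the particle essentially retraces its path out. For $D_{k+1}$ one first uses the induction hypothesis to set $\omega_k$, bounces the injected particle off the bottom of $D_k$ (the freely choosable $\omega_k$ supplies the missing degree of freedom in the outgoing tangential velocity), reflects once off the lower wall into a nearly radial hit on $D_{k+1}$ at the prescribed point with the prescribed tangential component, and then resets $\omega_k$ to a large value during the particle's flight so that the returning particle is kicked out of the system on its second collision with $D_k$. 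Some such intermediate controllable ``mirror'' is indispensable before your argument can be completed.
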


We prove the Lemma \ref{lemma: controllability} in subsection \ref{subsect:main lemma proof}.

\subsection{Proof of Proposition \ref{prop:sample path flush particles out}: no tangential collisions} \label{subsect: flush particles out: no tang collisions}

By definition of an admissible state, one of the following holds for each particle in $X$ under the evolution of the system with no particle injections:
\begin{enumerate}
\item there exists a finite time $t>0$ such that the particle exits the system at time $t$ and does not collide with any disks on time interval $[0,t]$, i.e. $q^{t} \in \gamma_L \cup \gamma_R$ and $q^{\tau} \not \in \cup_{j=1}^N \partial D_j$ $ \forall \tau \in [0,t]$;
\item there exists a finite time $t>0$ such that at time $t$ the particle collides with a disk non-tangentially and no other disk collisions occur on time interval $[0,t]$, i.e. $q^{t} \in \cup_{j=1}^N \partial D_j$ with $v^{t}_\perp \not = 0$ and $q^{\tau} \not \in \cup_{j=1}^N \partial D_j$ $\forall \tau \in [0,t]$;
\item there exists a finite time $t>0$ such that at time $t$ the particle collides with a disk tangentially with $v_x \ne 0$ and no other disk collisions occur on time interval $[0,t]$, i.e. $q^{t} \in \cup_{j=1}^N \partial D_j$ with $v^{t}_\perp = 0$ and $v_x \ne 0$ and $q^{\tau} \not \in \cup_{j=1}^N \partial D_j$ $\forall \tau \in [0,t]$.
\end{enumerate}

In this subsection we are going to assume that for each particle in $X$ either 1 or 2 holds, i.e. under the evolution of the system with no particle injections each particle in $X$ either exits the system or collides with a disk non-tangentially. We will treat the situation with tangential collisions in subsection \ref{subsect:flush particles out: tang collisions}.

Suppose $X$ contains $k$ particles and the above assumption is satisfied. Then using Lemma \ref{lemma:proper projected particle path} for the $j^{th}$ particle in $X$, $1 \leq j \leq k$, we can assign a proper projected particle path $\gamma_j$ from the particle's initial position $(q_j^0,v_j^0)$ to one of the exits. If we force the particle to follow this path, the times of all collisions are fixed. Indeed, the unique angular velocity of the disk keeps the particle on the path for each (non-tangential) collision, implying that the speeds with which the particle traces segments of the path and therefore the collision times are uniquely determined from the path. If $j^{th}$ particle follows $\gamma_j$, let $\tau^j_1, \cdots, \tau^j_{n(j)}$ be the times of collisions with disks, $D_{k(j,1)}, \cdots, D_{k(j,n(j))}$ and let $\omega^j_1, \cdots, \omega^j_{n(j)}$ be the required angular velocities.

Assume first that all $\tau^j_i$ (for all particles in $X$ and all collisions) are different. Then direct application of the Lemma \ref{lemma: controllability} between collisions guarantees the existence of a sample path $\sigma_X$ from state $X$ to some state $X_0 \in \Omega_0$.

If some $\tau^j_i$ happen to coincide, simultaneous collisions with same disks might occur, making us unable to construct $\sigma_X$ with our choice of paths. We would like to show that we can always choose a collection of nearby paths from the particles' initial positions to the exits such that no simultaneous collisions with same disks occur. Note that we cannot, in general, avoid all simultaneous disk collisions since we have no control over the times of the first disk collisions of the particles in $X$. First collisions, however, cannot occur simultaneously with same disks since $X$ is admissible.

We treat the possibility of simultaneous collisions with different disks as follows: If several, say $n$, collisions are about to occur in time $\tau$ after the previous collision time, all with different disks, we can set the angular velocities of the disks by the Lemma \ref{lemma: controllability} in the order of decreasing disk index, in a fraction of time $\tau$, $\frac{\tau}{n}$, each: the Lemma \ref{lemma: controllability}  guarantees that the disks with indexes larger than the one we set the angular velocity for are left untouched.

In the remaining part of the proof we show that we can choose a collection of paths for particles in $X$ from their initial positions to the exits such that no simultaneous collisions occur. We start with a description of a set of paths to choose from for each particle in $X$. In the following Lemma, we assume that at time $0$ the $j^{th}$ particle is the only particle in the system in order to ensure that the system is defined at all times.

\begin{lemma} \label{lemma: open set of paths}
Assume that under the evolution of the system with no particle injections the $j^{th}$ particle collides with a disk non-tangentially in some finite time $t>0$. Then there exist open neighborhoods $I^j_1$ of $\omega^j_1$, $\cdots$, $I^j_{n(j)}$ of $\omega^j_{n(j)}$ such that for any choice of angular velocities $(\omega^j_1)' \in I^j_1$, $\cdots$, $(\omega^j_{n(j)})' \in I^j_{n(j)}$, if a particle starts with its initial position $(q^j_0,v^j_0)$, it collides with $D^j_{k(j,1)}, \cdots, D^j_{k(j,n(j))}$ set to $(\omega^j_1)'$, $\cdots$, $(\omega^j_{n(j)})'$, and exits the system.
\end{lemma}

Lemma \ref{lemma: open set of paths} follows from Lemma \ref{lemma:cts dependence of particle and disk's position and velocity} and the continuity of the billiard flow.

Lemma \ref{lemma: open set of paths} guarantees that the $j^{th}$ particle exits the system for any choice of angular velocities from $I^j_1$, $\cdots$, $I^j_{n(j)}$. If a particle starts at $(q^j_0,v^j_0)$, for different choices of angular velocities in $I^j_1$ for disk $D_{k(j,1)}$, the particle collides with disk $D_{k(j,2)}$ at different times. In fact, one gets an open set of possible collision times since the set of possible positions of the particle at any time $\tau$ from the collision with disk $D_{k(j,1)}$ forms a broken line: before any collisions, the particle's positions are $$\tau \vec{v}'=\tau (\vec{v}_\perp'+ \vec{v}_t')=
\tau (\vec{v}_\perp'+[\vec{v}_t-\frac{2\eta}{1+\eta}(\vec{v}_t - R (\vec{\omega}^j_1)')])=$$
$$=\tau (\vec{v}_\perp'+[\vec{v}_t-\frac{2\eta}{1+\eta}(\vec{v}_t - R \vec{\omega}^j_1)])+
\frac{2\eta R \tau }{1+\eta}(\vec{\omega}^j_1-(\vec{\omega}^j_1)'),$$
where $(\omega^j_1)'$ varies through $I^j_1$; and upon reflections from straight walls the straight line of positions becomes a broken line.  If we fix a specific $(\omega^j_1)' \in I^j_1$, similar argument shows that different $(\omega^j_2)' \in I^j_2$ for disk $D_{k(j,2)}$ yield a range of collision times with $D^j_{k(j,3)}$; and so on.

Therefore, can always pick a collection of $(\omega^j_i)' \in I^j_i$ such that no simultaneous collisions occur for all particles in the system. Given such a choice, each particle in $X$ follows a path from its initial position to an exit. Setting the angular velocities of the disks to $(\omega^j_i)'$ at appropriate times by following appropriate sequence of injections provided by the Lemma \ref{lemma: controllability}, we construct a sample path $\sigma_X$ from state $X$ to some state $X_0$ in $\Omega_0$. This completes the proof of Proposition \ref{prop:sample path flush particles out} for the initial states $X$ in which particles either hit a disk non-tangentially or exit the system. \qed

\subsection{Tangential Collisions} \label{subsect:flush particles out: tang collisions}
In this section we will prove Proposition \ref{prop:sample path flush particles out} for the general case, i.e. when for some particles in $X$ the first collision with a disk might be tangential. Since $X$ is admissible, such collisions must occur with $v_x \ne 0$ and without simultaneous collisions with same disks.

Suppose the first collision of the $j^{th}$ particle is tangential with $v_x \ne 0$. By the Lemma \ref{lemma: controllability} we can ensure that, upon collision, $R$ times the angular velocity of the disk is equal to the velocity of the $j^{th}$ particle, $R \omega=v=v_t$. After such a collision the particle continues along the straight line with the same velocity as if no collision has occurred. From that point, the particle will either exit the system, hit a disk non-tangentially, or hit a disk tangentially again. In the third situation we would like to set the angular velocity of the disk to be equal to the velocity of the particle again and continue the process. Since $v_x \ne 0$ and $|v|$ are kept constant under subsequent iterations of the third situation and upon collisions with straight walls, the third situation can occur at most finite number of times; then either the particle will either exit the system or will collide with a disk non-tangentially. In the later case, by Lemma \ref{lemma:proper projected particle path}, there exists a proper projected particle path from the non-tangential collision point to an exit. Denote the path traced $\Gamma$ by the $j^{th}$ particle in this construction by $\gamma_j$.

Again, the times of all collisions of $j^{th}$ particle along $\gamma_j$ are fixed; let $\tau^j_1, \cdots, \tau^j_{n(j)}$ be the times of collisions with disks, $D_{k(j,1)}, \cdots, D_{k(j,n(j))}$ and let $\omega^j_1, \cdots, \omega^j_{n(j)}$ be the required angular velocities (with $R \omega^j_i = v=v_t$ at tangential collisions). In the following Lemma, we again assume that at time $0$, the $j^{th}$ particle is the only particle in the system in order to ensure that the system is defined at all times.

\begin{lemma} \label{lemma: open set of paths tangential}
Assume the $j^{th}$ particle has $m(j) \geq 0$ tangential collisions before it has a non-tangential collision or exits the system. Then there exist open neighborhoods $I^j_1$ of $\omega^j_1$, $\cdots$, $I^j_{n(j)}$ of $\omega^j_{n(j)}$ such that for any choice of angular velocities $(\omega^j_1)' \in I^j_1$, $\cdots$, $(\omega^j_{n(j)})' \in I^j_{n(j)}$, if a particle starts with its initial position and velocity $(q^0_j,v^0_j)$, it possibly collides with $D_{k(j,1)}, \cdots, D_{k(j,m(j))}$ set to $(\omega^j_1)'$, $\cdots$, $(\omega^j_{m(j)})'$, collides with $D^j_{k(j,m(j)+1)}, \cdots, D^j_{k(j,n(j))}$ set to $(\omega^j_{m(j)+1})'$, $\cdots$, $(\omega^j_{n(j)})'$, and exits the system.
\end{lemma}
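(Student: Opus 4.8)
The plan is to prove Lemma~\ref{lemma: open set of paths tangential} as a continuity/openness statement, paralleling the non-tangential case treated in Lemma~\ref{lemma: open set of paths}, but with extra care at the tangential collisions. The key observation is the one already exploited in subsection~\ref{subsect:flush particles out: tang collisions}: when the required angular velocity is $R\omega^j_i = v_t$ (the tangential-matching value), the collision is \emph{transparent}, i.e.\ the particle continues along the same straight line with unchanged velocity, exactly as if no disk were present. So at the matched tangential collisions the dynamics coincides locally with free billiard motion in the rectangle ignoring that particular disk.

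First I would split the trajectory of the $j^{\text{th}}$ particle into its two phases: the initial run of $m(j)$ tangential collisions with disks $D_{k(j,1)},\dots,D_{k(j,m(j))}$, followed by the non-tangential collision (or exit) and the proper projected particle path to an exit. For the tangent phase, I would argue that because each matched tangential collision leaves the particle's velocity unchanged, the projected path over this phase is literally a piece of billiard trajectory in $\Gamma_0$ (straight segments with specular reflections off the walls) that happens to graze the disks. Continuity of this phase in the angular-velocity parameters $(\omega^j_1)',\dots,(\omega^j_{m(j)})'$ near the matched values is then just continuity of the billiard flow away from tangencies, combined with the explicit collision formulas: for $R\omega'$ close to $v_t$ the outgoing velocity $\vec v_t' = \vec v_t - \frac{2\eta}{1+\eta}(\vec v_t - R\vec\omega')$ and $\vec v_\perp' = -\vec v_\perp$ vary continuously and stay close to the transparent outcome, so the particle still makes its subsequent grazing contacts and reaches the same non-tangential collision point up to a small perturbation.

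For the second phase, I would simply invoke Lemma~\ref{lemma:cts dependence of particle and disk's position and velocity} and continuity of the billiard flow exactly as in Lemma~\ref{lemma: open set of paths}: once the particle arrives (with slightly perturbed position and velocity) at a genuinely non-tangential collision, the non-tangentiality is an open condition, so there are open neighborhoods $I^j_{m(j)+1},\dots,I^j_{n(j)}$ of the required angular velocities for which the particle collides non-tangentially with the prescribed disks in the prescribed order and exits. Composing the two phases — the tangent phase is an open condition in $I^j_1,\dots,I^j_{m(j)}$ and the outgoing data depends continuously on these parameters, and the non-tangent phase is an open condition in $I^j_{m(j)+1},\dots,I^j_{n(j)}$ — yields the product of open neighborhoods $I^j_1\times\cdots\times I^j_{n(j)}$ claimed in the statement.

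The main obstacle is making the tangent phase genuinely \emph{open} rather than a measure-zero coincidence. A tangency is a codimension-one condition, so one must check that a small perturbation of $(\omega^j_i)'$ away from the matched value $R^{-1}v_t$ does not throw the particle into a transversal collision with that disk that redirects it off the intended path, nor miss the disk entirely in a way that destroys the continuity of the exit map. The resolution is exactly the remark made in the text: when $R\omega'=v_t$ the disk is transparent and the particle's position and velocity are identical to the free-flight values, so the map from $(\omega^j_1)',\dots,(\omega^j_{m(j)})'$ to the particle's state at the later non-tangential collision is continuous \emph{through} the matched value, and I only need openness of the final non-tangential and exit conditions, which I already have. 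I would therefore emphasize that the neighborhoods $I^j_i$ for $i\le m(j)$ are chosen small enough that the trajectory never develops a new tangency or a transversal disk hit before reaching the designated non-tangential collision, which is possible since the intervening segments meet the disks only at the isolated grazing points and these non-grazing-contact conditions are open.
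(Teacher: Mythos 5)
Your overall architecture matches the paper's: handle the non-tangential tail by Lemma \ref{lemma: open set of paths} and treat the initial run of tangential collisions separately. But the way you close the tangential phase has a real gap. You argue that for $(\omega^j_i)'$ near the matched value the outgoing velocity ``stays close to the transparent outcome,'' and conclude that the particle ``still makes its subsequent grazing contacts'' and reaches the same non-tangential collision point ``up to a small perturbation.'' Closeness is not enough here: a trajectory that is merely close to one grazing a disk generically either misses that disk or hits it transversally, and tangency is a codimension-one, non-open condition --- which is exactly the obstacle you name yourself. Your proposed resolution does not remove it: transparency at the single matched parameter value says nothing about nearby values, and your final claim that ``these non-grazing-contact conditions are open'' cannot apply to the $m(j)$ intermediate contacts, which are required to be exactly grazing. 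If the direction of motion actually changed with $(\omega^j_i)'$, no choice of small neighborhoods would preserve those tangencies.

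What saves the lemma --- and is the entire content of the proof given in the paper --- is that the direction does not change at all. At a tangential collision $v_\perp=0$, hence $v_\perp'=-v_\perp=0$, so the outgoing velocity $v_t'\,\hat t=\bigl(v_t-\tfrac{2\eta}{1+\eta}(v_t-R\omega')\bigr)\hat t$ is an exact scalar multiple of the incoming velocity for \emph{every} $\omega'$, not just the matched one. As long as the sign of $v_t'$ is preserved (an open condition on $\omega'$ containing the matched value, where $v_t'=v_t$), the particle retraces exactly the same ray, hence exactly the same broken line in $\Gamma$, with the same grazing points and the same arrival point and direction at the first non-tangential collision; only its speed, a continuous function of $(\omega^j_1)',\dots,(\omega^j_{m(j)})'$, changes. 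Since the initial condition $(q^0_j,v^0_j)$ is fixed in this lemma, the whole tangent phase is geometrically rigid, and the continuously varying arrival speed is then absorbed by the open conditions supplied by Lemma \ref{lemma: open set of paths} for the non-tangential tail, after shrinking the $I^j_i$, $i\le m(j)$, if necessary. You have the collision formula in hand; you need only evaluate it at $v_\perp=0$ rather than appeal to continuity.
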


\begin{proof}
By Lemma \ref{lemma: open set of paths} there exist open neighborhoods $I^j_{m(j)+1,0}$ of $\omega^j_{m(j)+1}$, $\cdots$, $I^j_{n(j),0}$ of $\omega^j_{n(j)}$ such that for any choice of angular velocities $(\omega^j_{m(j)+1})' \in I^j_{m(j)+1}$, $\cdots$, $(\omega^j_{n(j)})' \in I^j_{n(j)}$ if a particle starts with $(q^0_j,v^0_j)$, it possibly collides with $D_{k(j,1)}$, $\cdots$, $D_{k(j,m(j))}$ set to $\omega^j_1, \cdots, \omega^j_{m(j)}$, collides with $D_{k(j,m(j)+1)}$, $\cdots$, $D_{k(j,n(j))}$ set to $(\omega^j_{m(j)+1})' \in I^j_{m(j)+1,0}$, $\cdots$, $(\omega^j_{n(j)})' \in I^j_{n(j),0}$, and exits the system.

At tangential collisions, we chose to set $R$ times the angular velocities of the disks to be equal to the velocities of the colliding particles, $R \omega=v=v_t$. If we vary the angular velocities of the disks at tangential collisions keeping sign the same, the particle still follows the same path and exit the system. \qed

If a particle starts at $(q^j_0,v^j_0)$ and the collision with disk $D_{k(j,1)}$ is tangential, letting the angular velocity of $D_{k(j,1)}$ to vary through $I^j_1$, we get an open set of possible collision times with $D_{k(j,2)}$ since the particle follows the same path, only with different speeds. The remaining tangential collisions are treated analogously and non-tangential collisions as in subsection \ref{subsect: flush particles out: no tang collisions}.

Therefore we can always pick a collection of $(\omega^j_i)' \in I^j_i$ such that no simultaneous collisions with same disks occur for all particles in the system. Given such a choice, each particle in $X$ follows a path from its initial position to an exit. Setting the angular velocities of the disks to $(\omega^j_i)'$ at appropriate times using the Lemma \ref{lemma: controllability}, we construct a sample path $\sigma_X$ from state $X$ to some state $X_0$ in $\Omega_0$.This completes the proof of Proposition \ref{prop:sample path flush particles out}. \qed
\end{proof}

\subsection{Proof of the Lemma \ref{lemma: controllability} ($\eta=1$)} \label{subsect:main lemma proof}

The proof is by induction on $j$, $1 \leq j \leq N$.

\textbf{j=1:} There are many ways to treat this case, but we choose a very specific one that will be a useful step in treating the induction step. Our method here is identical to the one in proof of Lemma 4.3 in \cite{Eckmann}.

Assume first $w'=0$, i.e. we want to hit $D_1$ radially. Send a particle parallel to the $x$-axis to hit the disk $D_1$ at $(1,0)$. If the initial velocity $v=(v_x,0)$ is big enough compared to $R \omega$ and the distance from $\gamma_L$ to $D_1$, the particle will be able to exit the playground without hitting $\partial \Gamma \setminus (\gamma_L \cup \gamma_R)$ again. Note that the larger $v_x$ is, the smaller is the angle of reflection; we can introduce any bound on the angle by choosing $v_x$ large enough. When $w' \not = 0$, we send a particle such that it also hits the disk at $(1,0)$. We can introduce any bound on the angle of incidence in a similar way. Clearly, by making $v_x$ sufficiently large and large enough compared to $R w'$, we can complete the above procedure in an arbitrarily short time.

\begin{center}
\begin{figure}
  \begin{center}
  \includegraphics[width=4in]{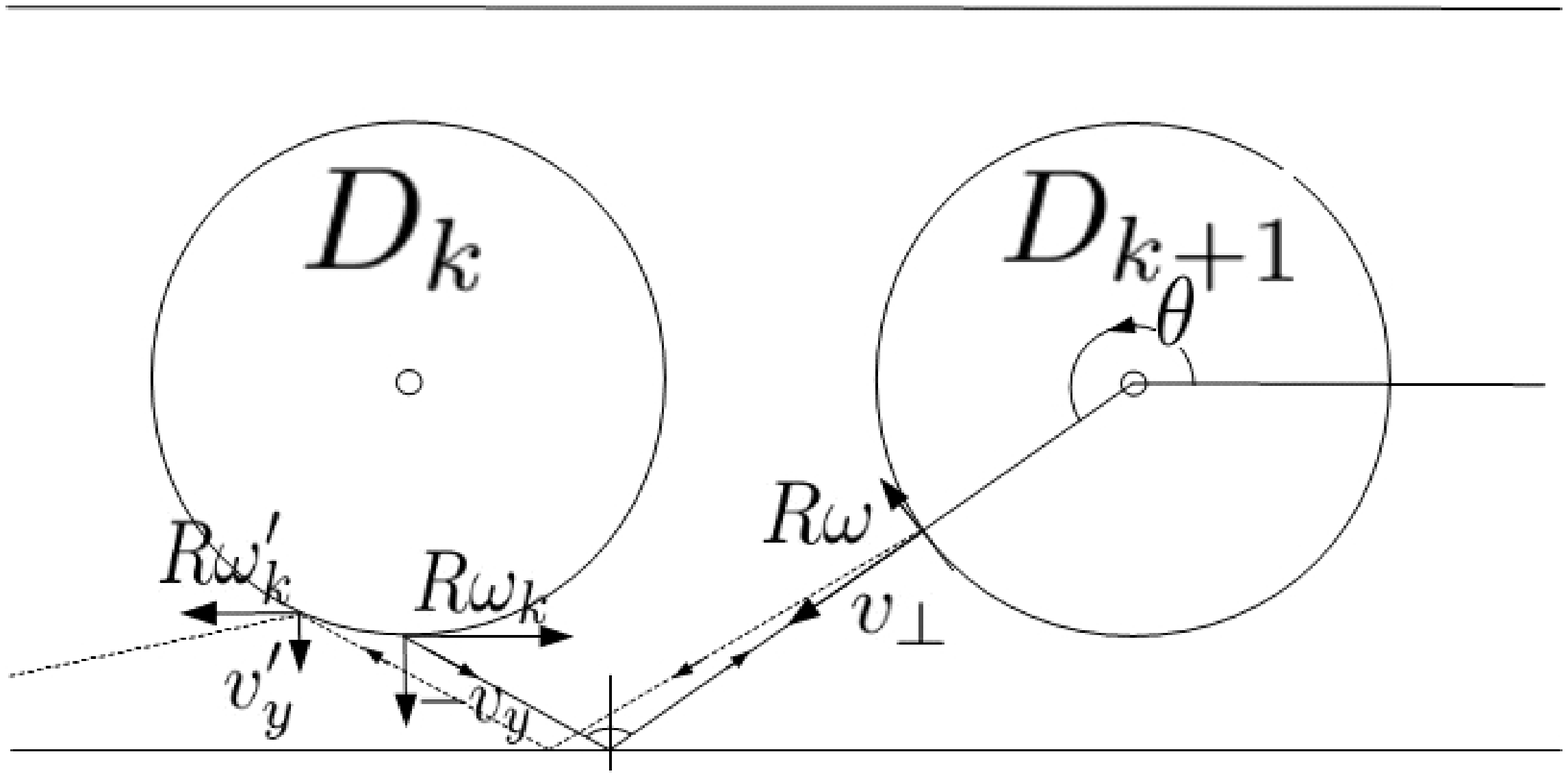}\\
  \caption{$\omega'=0$ case: the incoming trajectory $L$ is marked by a solid line; outgoing $L'$ - by a dashed line; velocities are labeled after collisions}\label{fig:DkDkplus1}
  \end{center}
\end{figure}
\end{center}

\textbf{Induction step:} Assume the lemma holds for all $j \leq k$. We would like to show that it also holds for $j=k+1$. We want to send a particle with velocity $v=(v_x,v_y)$ such that it first hits disk $D_k$ at $(2k-1,-R)$ without hitting $\partial \Gamma$ along the way. The velocity after collision is $(R \omega_k,-v_y)$, where $\omega_k$ is the angular velocity of disk $D_k$, which we can set to any value in arbitrarily short time by the induction assumption.

Consider first the case $\omega'=0$ illustrated in Figure \ref{fig:DkDkplus1}. Then we want to hit disk $D_{k+1}$ radially by following the unique trajectory that reflects from the lower boundary of the playground $\Gamma$, $[0,2N] \times \{-1\}$, once before hitting $D_{k+1}$; call this trajectory $L$. If $\theta$ is the angular position of the collision counting counterclockwise from the $x$-axis and $v_\perp$ denotes the normal component of the velocity pointing outwards, then:
$$R \omega'=-v_y \cos(\theta) + R \omega_k \sin(\theta)=0 \; \; \; \Rightarrow \; \; \; \tan(\theta)=\frac{v_y}{R \omega_k}$$
Also
$$v_\perp=-v_y \sin(\theta) - R \omega_k \cos(\theta)=-\tan(\theta) R \omega_k \sin(\theta) - R \omega_k \cos(\theta)=-\frac{R \omega_k}{\cos(\theta)}$$
implying that
$$v_x'=v_\perp \cos(\theta) + R \omega \sin(\theta)=-R \omega_k + R \omega \sin(\theta),$$
$$v_y'=v_\perp \sin(\theta) - R \omega \cos(\theta)=-v_y -R \omega \cos(\theta),$$
where $v_x'$ and $v_y'$ are the components of the velocity after the collision with disk $D_{k+1}$.

If we choose $v_y$ and $R \omega_k$ large compared to $R \omega$, the particle will follow a trajectory $L'$ very close to the reversed $L$ on the way from disk $D^{k+1}$ to disk $D_k$. By bounding the angle of reflection from disk $D_{k+1}$ (choosing $v_y$ and $R \omega_k$ as large as we need), we can ensure that $L'$ hits disk $D_k$ is a small neighborhood of $(2k-1,-R)$. During the flight of the particle to and from the disk $D_{k+1}$, we can reset the angular velocity of disk $D_k$ to a new value $R \omega_k'$ with $|R \omega_k'|$ large enough so that after the second collision with $D_k$, the particle leaves the system with no additional collisions. By choosing $v_x$, $v_y$, $R \omega_k$ and $R \omega_k'$ sufficiently large and large enough compared to $w$, we can ensure that this procedure can be done in an arbitrarily short time.

Suppose now that $\omega' \not = 0$. Then we have to hit the disk $D_{k+1}$ at a slightly different angular position $\theta'$ due to the playground geometry. Then
$$R \omega'=-v_y\cos(\theta')+R \omega_k \sin(\theta') \; \; \; \Rightarrow \; \; \; v_y=\frac{R \omega_k \sin(\theta')-R \omega'}{\cos(\theta')}$$
$$v_\perp=-v_y \sin(\theta') - R \omega_k \cos(\theta')=$$
$$=-(\frac{R \omega_k \sin(\theta')-R \omega'}{\cos(\theta')})\sin(\theta') - R \omega_k \cos(\theta')=-\frac{R \omega_k-R \omega'\sin(\theta')}{\cos(\theta')}$$
Implying that
$$v_x'=v_\perp \cos(\theta') + R \omega \sin(\theta')=-R \omega_k+(R \omega'+R \omega) \sin(\theta')$$
$$v_y'=v_\perp \sin(\theta') - R \omega \cos(\theta')=-v_y+(R \omega'-R \omega) \cos(\theta')$$
If we choose $\theta'$ sufficiently close to $\theta$, $v_y$ and $R \omega_k$ large compared to $R \omega$ and $R \omega'$, we can ensure that trajectories from and to disk $D_k$ are very close to the to $L$. Therefore, on the way back, the particle hits disk $D_k$ in a small neighborhood of $(2k-1,-R)$ and, by choosing $|R \omega_k'|$ large enough, we can send the particle out of the system without additional collisions. Again, by choosing $v_x$, $v_y$, $R \omega_k$, and $|R \omega_k|$ sufficiently large and large enough compared to $R \omega$ and $R \omega'$, we can do this procedure in an arbitrarily short time. \qed

\section{Proof of Proposition \ref{prop:flush particles out}} \label{sect: proof flush particles out}

In this section we complete the proof of Proposition \ref{prop:flush particles out}.
In section \ref{sect: flush particles out} we presented a construction of a sample path $\sigma_X$ from any admissible state $X$ to an arbitrary particle-less state $X_0 \in \Omega_0$. In subsection \ref{subsect:from any state to any state in Omega_0} we would extend this sample to a given state $Y_0 \in \Omega_0$, obtaining a sample path $\sigma$ from $X$ to $Y_0$ defined on some time interval $[0,T]$. To finish the proof of Proposition \ref{prop:flush particles out}, we are also required to show that there exists a canonical neighborhood $\Sigma$ of $\sigma$ such that each sample path in $\Sigma$ ends in $U_0$, a neighborhood of $Y_0$. In subsection \ref{subsect: proof prop 2 no tang collisions} we treat the situation when the particles in the initial state $X$ either collide a disk non-tangentially or exit the system. We finish the proof in subsection \ref{subsect: proof prop 2 with tang collisions} by treating the remaining situation with tangential collisions.

\subsection{From any state in $\Omega_0$ to any state in $\Omega_0$} \label{subsect:from any state to any state in Omega_0}

\begin{lemma} \label{lemma:from any state to any state in Omega_0}
Given $X_0, Y_0 \in \Omega_0$ and $T>0$, there exists a sample path $\sigma:X_0 \to Y_0$ on $[0,T]$.
\end{lemma}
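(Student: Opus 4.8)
The plan is to build $\sigma$ as a concatenation of $N$ ``disk-steering'' blocks, one per disk, processed in order of \emph{decreasing} index, with Lemma \ref{lemma: controllability} serving as the basic actuator. Since a point of $\Omega_0$ records both the angular position $\varphi_j$ and the angular velocity $\omega_j$ of every disk, I must steer both coordinates of each disk; Lemma \ref{lemma: controllability} controls the velocity directly, and I will extract position control by inserting a free ``coasting'' phase between two velocity resets.

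Write $X_0=(\varphi_j^0,\omega_j^0)_{j=1}^N$ and $Y_0=(\varphi_j^1,\omega_j^1)_{j=1}^N$, and partition $[0,T]$ into consecutive subintervals $I_N,I_{N-1},\dots,I_1$ of positive length, with $I_j=[s_j,e_j]$ reserved for steering $D_j$. I process disks from the highest index down because, applying Lemma \ref{lemma: controllability} from the left bath, the particles that reset $D_j$ collide only with $D_1,\dots,D_{j-1}$ besides their single collision with $D_j$; hence resetting $D_j$ leaves each higher-index disk $D_{j+1},\dots,D_N$ entirely untouched. Therefore, once $D_j$ has been set at time $e_j$, it coasts freely at constant velocity through every later block. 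Exploiting this, I prescribe the exit velocity of $D_j$ to be the true target $\omega_j^1$ and its exit position to be $\tilde\varphi_j:=\varphi_j^1-\omega_j^1\,(T-e_j)\pmod{2\pi R}$, so that undisturbed coasting on $[e_j,T]$ delivers $D_j$ to $(\varphi_j^1,\omega_j^1)$ at exactly time $T$. Between blocks the playground is empty, so the standing hypothesis of Lemma \ref{lemma: controllability} (no particle meets a disk before time $\tau$) holds vacuously whenever I invoke it.

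It remains to realize the single-disk maneuver inside one block $I_j=[s_j,e_j]$: steer $D_j$ from whatever state it occupies at time $s_j$ -- a state that is determined, though not prescribed, by the disturbances inflicted during the earlier higher-index blocks, which is harmless because the maneuver below works from an \emph{arbitrary} initial condition -- to the prescribed exit state $(\tilde\varphi_j,\omega_j^1)$ at time $e_j$. At time $s_j$ apply Lemma \ref{lemma: controllability} to reset $\omega_j$ to an intermediate value $\omega_*$ in an arbitrarily short time $\tau_1$; let $D_j$ coast at $\omega_*$ for the fixed duration $D_0=e_j-s_j-\tau_1-\tau_2$; then at time $e_j-\tau_2$ reset $\omega_j$ to $\omega_j^1$ in time $\tau_2$. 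The exit velocity is $\omega_j^1$ by construction. For the exit position, write it as a function $P(\omega_*)$ of the intermediate velocity. The position of $D_j$ immediately after the first reset lives on the compact circle $\partial D_j$, hence stays bounded as $\omega_*$ varies, while the coasting contributes the term $\omega_* D_0$; since $\tau_1,\tau_2$ can be taken far smaller than $|I_j|$, this linear term dominates and $P(\omega_*)\to\pm\infty$ as $\omega_*\to\pm\infty$. Because $P$ depends continuously on $\omega_*$ (the construction in Lemma \ref{lemma: controllability} can be chosen to vary continuously with its velocity target, and the subsequent flight depends continuously on initial data by Lemma \ref{lemma:cts dependence of particle and disk's position and velocity}), the intermediate value theorem shows $P$ is surjective onto $\mathbb{R}$, so some $\omega_*$ yields $P(\omega_*)\equiv\tilde\varphi_j\pmod{2\pi R}$, giving the required exit state.

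Concatenating the $N$ blocks produces a sequence of injections starting at $X_0$ that leaves the system in $Y_0$ at time $T$, i.e. a sample path $\sigma:X_0\to Y_0$ on $[0,T]$. The one remaining check is admissibility of the concatenation, namely the absence of simultaneous collisions of two particles with the same disk. Collisions belonging to different blocks, or to the two resets within one block, lie in disjoint time windows and cannot coincide, while within a single application of Lemma \ref{lemma: controllability} admissibility is already part of that lemma's conclusion; a small generic perturbation of the free parameters (the intermediate speeds, the injection data supplied by Lemma \ref{lemma: controllability}, and the durations $\tau_1,\tau_2$) removes any accidental coincidence without spoiling the endpoint conditions, since the value selected by the intermediate value theorem persists under small perturbations. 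I expect this bookkeeping to be routine; the conceptual core is the decreasing-index ordering, which keeps already-set disks frozen, together with the coast-phase device that upgrades the velocity control of Lemma \ref{lemma: controllability} into full control of position and velocity.
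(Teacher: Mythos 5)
Your proposal is correct and follows essentially the same route as the paper: the paper likewise processes the disks one per time block in decreasing index order (relying on the fact that Lemma \ref{lemma: controllability} touches only lower-indexed disks), and upgrades velocity control to position-and-velocity control by two applications of Lemma \ref{lemma: controllability} separated by a free coasting phase (its Lemma \ref{lemma: control position and velocity of one disk}). The only real difference is which parameter is tuned to hit the target position: the paper fixes a sufficiently large intermediate angular velocity and adjusts the coasting \emph{duration} so that the relative rotation sweeps a full revolution, whereas you fix the timing and adjust the intermediate \emph{velocity} via the intermediate value theorem --- the paper's choice keeps the post-reset position fixed and thereby sidesteps the continuity-in-the-velocity-target assumption your argument has to make about the construction in Lemma \ref{lemma: controllability}.
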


\begin{proof}
Denote the angular positions and velocities of the disks in $X_0$ and $Y_0$ by  $(\varphi_1, \omega_1)$, $\cdots$, $(\varphi_N, \omega_N)$ and $(\varphi_1', \omega_1')$, $\cdots$, $(\varphi_N', \omega_N')$ respectively.

Divide time interval $[0,T]$ into $N$ equal subintervals of length $\frac{T}{N}$. Suppose we can set the angular position and velocity of disk $D_N$ to $(\omega_N', \tilde{\varphi}_N')$ on time interval $(0,\frac{T}{N})$, where $\varphi_N'=\tilde{\varphi}_N'+\frac{(N-1)T}{N}\omega_N'$. Then if we guarantee that no collisions occur with disk $D_N$ on time interval $[\frac{T}{N},T]$, $D_N$ will have angular position and velocity $(\varphi_N', \omega_N')$ at time $T$. Similarly we can proceed with setting angular position and velocity of disk $D_{N-1}$ to $(\omega_{N-1}', \tilde{\varphi}_{N-1}')$ on time interval $(\frac{T}{N}, \frac{2T}{N})$, where $\varphi_{N-1}'=\tilde{\varphi}_{N-1}'+\frac{(N-2)T}{N}\omega_{N-1}'$, ensuring that if no collisions happen with disk $D_{N-1}$ on time interval $[\frac{2T}{N},T]$, its angular position and velocity would be $(\varphi_{N-1}', \omega_{N-1}')$ at time $T$. And so on. In order for this procedure to work, we need the following lemma:
\end{proof}

\begin{lemma} \label{lemma: control position and velocity of one disk}
Suppose disk $D_j$ rotates with angular velocity $\omega$ and there are no particles present in the system. Given time $t>0$, angular position $\varphi'$, and angular velocity $\omega'$, there exists a sequence of particle injections on time interval $(0,t)$ from the left bath such that:
\begin{itemize}
\item at time $t$ the disk $D_j$ has the angular position and velocity $(\varphi',\omega')$,
\item at time $t$ all the injected particles have left the system,
\item on time interval $(0,t)$ the injected particles only hit disks $D_1, \cdots, D_j$.
\end{itemize}
\end{lemma}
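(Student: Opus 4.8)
The plan is to upgrade Lemma \ref{lemma: controllability}, which controls only the angular velocity of $D_j$, into simultaneous control of angular position \emph{and} angular velocity by exploiting the free rotation of the disk between velocity resets. I would first isolate the two ingredients I need: from Lemma \ref{lemma: controllability}, starting from any configuration in which no particle meets a disk before a given time, one can reset the angular velocity of $D_j$ to any prescribed value within an arbitrarily short window using left-bath injections that touch only $D_1,\dots,D_{j-1}$ together with a single collision on $D_j$, with all injected particles gone by the end of the window; and from the free dynamics, on any interval during which $D_j$ is not struck its angular position advances at constant rate equal to its angular velocity (modulo the circumference $2\pi R$), the velocity being unchanged.

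I would then build the sample path from two resets. Fix $0<a<b<d<t$. On a short window inside $(0,a)$ apply Lemma \ref{lemma: controllability} to reset the velocity of $D_j$ to an auxiliary value $\omega_1$, to be chosen, clearing all particles by time $a$; let $D_j$ rotate freely on $(a,b)$; on a short window inside $(b,d)$ apply Lemma \ref{lemma: controllability} again to reset the velocity to the \emph{target} value $\omega'$, clearing particles by $d$; finally let $D_j$ rotate freely on $(d,t)$. By construction the velocity at time $t$ equals $\omega'$, all injected particles have left the system, and only $D_1,\dots,D_j$ are ever struck, so the second and third bullet points of the lemma hold regardless of $\omega_1$.

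It remains to choose $\omega_1$ so that the angular position at time $t$ equals $\varphi'$. Let $t_1\in(0,a)$ and $t_2\in(b,d)$ be the two distinct instants at which $D_j$ is actually struck. Since the disk rotates at $\omega$ on $[0,t_1]$, at $\omega_1$ on $[t_1,t_2]$, and at $\omega'$ on $[t_2,t]$, accumulating the contributions gives
$$\varphi(t)\equiv \varphi_0+\omega\,t_1+\omega_1\,(t_2-t_1)+\omega'\,(t-t_2)\pmod{2\pi R}.$$
Because the window $B$ procedure targets the fixed value $\omega'$, I can fix its injection trajectory so that $t_2$ is independent of $\omega_1$; the instant $t_1$ may depend on $\omega_1$ but always lies in the bounded interval $(0,a)$. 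The coefficient of $\omega_1$ is $t_2-t_1\ge b-a>0$, so the right-hand side (before reduction) is unbounded as $\omega_1$ ranges over $\mathbb{R}$ while the remaining terms stay bounded; letting $\omega_1$ sweep an interval of length $2\pi R/(b-a)$ therefore makes $\varphi(t)$ cover a full period, and by continuity of the construction in $\omega_1$ the intermediate value theorem yields a choice of $\omega_1$ with $\varphi(t)=\varphi'$ exactly.

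The main obstacle is decoupling position control from velocity control, and the point that makes it work is that the \emph{final} collision overwrites the disk's velocity: for $\eta=1$ the post-collision angular velocity equals $v_t/R$ independently of the prior velocity, so adjusting the auxiliary velocity $\omega_1$ to steer the position does not disturb the target velocity $\omega'$. The position, meanwhile, depends on $\omega_1$ through a term with fixed, strictly positive coefficient $t_2-t_1$, which furnishes the surjectivity onto the circle. (The argument carries over to general $0<\eta<\infty$ after solving the collision law for the required tangential speed, per the paper's convention of presenting technical lemmas for $\eta=1$.)
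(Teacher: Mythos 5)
Your proposal is correct and has the same skeleton as the paper's proof: two applications of Lemma \ref{lemma: controllability} separated by a free-rotation phase, with the second and third bullets holding automatically and the position bullet obtained by an intermediate-value argument on the accumulated rotation. The only real difference is the tuning knob. The paper fixes the intermediate velocity once and for all (taking $\omega_1>\frac{3+t\omega'}{t}$) and varies the \emph{waiting time} $\tau$ between the two resets, so the final position has the form $\tilde{\varphi}+\tau(\omega_1-\omega')$ with a fixed, explicitly large coefficient, and $\tau$ ranging over $(0,t/3)$ sweeps a full revolution. You instead fix the timing and vary $\omega_1$. Both work, but note one imprecision in your version: if the first collision time $t_1$ depends on $\omega_1$, as you allow, then the increment of $\varphi_0+\omega\,t_1(\omega_1)+\omega_1\bigl(t_2-t_1(\omega_1)\bigr)+\omega'(t-t_2)$ over an interval of $\omega_1$-values of length $2\pi R/(b-a)$ need not be at least $2\pi R$, because the cross term $\omega_1\,\Delta t_1$ is not controlled; so the specific claim about that interval length is unjustified. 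Your fallback argument does close the gap --- the expression is continuous in $\omega_1$ and tends to $+\infty$ since $t_2-t_1\ge b-a>0$ while the other terms stay bounded, so its range contains a full period --- and alternatively you can make $t_1$ independent of $\omega_1$ by delaying the injection to compensate for the shorter travel time at higher speeds. The paper's choice of varying $\tau$ sidesteps this issue entirely, at the cost of having to verify the lower bound on $\omega_1$ up front.
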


\begin{proof}
The result of Lemma \ref{lemma: control position and velocity of one disk} is achieved by the application of the Lemma \ref{lemma: controllability} twice:

Fix some $\omega_1>\frac{3+t \omega'}{t}$. Apply the Lemma \ref{lemma: controllability} to set the angular velocity of $D_j$ to $w_1$ in time $\frac{t}{3}$. Let $\tau_1 < \frac{t}{3}$ be the time of the unique collision with disk $D_j$. Suppose we wait for some time $\tau < \frac{t}{3}$ (to be defined later) after $\frac{t}{3}$ and then apply the Lemma \ref{lemma: controllability} again to set the angular velocity of $D_j$ to $\omega'$ in time $\frac{t}{3}$. Let $\tau_2 < \frac{t}{3}$ be the time of the unique collision with disk $D_j$ counted from the time $\frac{t}{3}+\tau$.

Then at time $t$, the angular position and velocity of $D_j$ are
$$([\varphi+\tau_1 \omega + (\frac{t}{3}-\tau_1+\tau+\tau_2) \omega_1+ (\frac{2t}{3}-\tau-\tau_2)\omega'] \mod 1,\omega')$$
Let $\tilde{\varphi}=[\varphi+\tau_1 \omega + (\frac{t}{3}-\tau_1+\tau_2) \omega_1+ (\frac{2t}{3}-\tau_2)\omega'] \mod 1$; this is a fixed number since all the variables in the expression are fixed. Then we want to pick $\tau < \frac{t}{3}$ such that $\varphi'=(\tilde{\varphi}+\tau(\omega_1-\omega)) \mod 1$. This is not a problem since $\frac{1}{\omega_1 - \omega'} < \frac{t}{3}$ by the choice of $w_1$, i.e. in time $\frac{t}{3}$, a disk rotating with angular velocity $\omega_1-\omega'$ makes full revolution and thus, starting at $\tilde{\varphi}$, passes through the angular position $\varphi'$ at some time $\tau<\frac{t}{3}$. \qed

This completes the proof of Lemma \ref{lemma:from any state to any state in Omega_0}. \qed
\end{proof}

\subsection{Proof of Proposition \ref{prop:flush particles out}: No Tangential Collisions} \label{subsect: proof prop 2 no tang collisions}

In the situation when all particles in $X$ either collide with a disk non-tangentially or exit the system under the evolution of the system with no particle injections, Proposition \ref{prop:flush particles out} follows from the lemma below:

\begin{lemma} \label{lemma:continuity around proper projected particle path}
Let $\sigma$ be a sample path from a state $X$ to a state $Y$ on time interval $[0,T]$ such that each particle present in the system at any time subinterval of $[0,T]$ follows a proper projected particle path. Then for any neighborhood $U$ of $Y$, there exists a canonical neighborhood $\Sigma$ of $\sigma$ such that each sample path in $\Sigma$ ends in $U$.
\end{lemma}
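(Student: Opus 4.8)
Lemma \ref{lemma:continuity around proper projected particle path} — proof proposal.

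The plan is to build the canonical neighborhood $\Sigma$ by perturbing the two independent sources of randomness in the sample path $\sigma$: the initial state $X$ and the injection sequence $c$ that realizes the angular-velocity settings dictated by Lemma \ref{lemma: controllability}. First I would set up notation: along $\sigma$, each particle present in the system follows a proper projected particle path with a finite sequence of non-tangential disk collisions at prescribed times, and between these collisions the disks are driven to the required angular velocities by injecting auxiliary particles from the baths. The key structural fact I would invoke is that every collision occurring along $\sigma$ (both the collisions of the original particles with the disks and the single steering collisions of the injected particles) is \emph{non-tangential}, so by the continuity of the billiard flow away from tangencies — precisely the content of Lemma \ref{lemma:cts dependence of particle and disk's position and velocity} and Lemma \ref{lemma: open set of paths} — the entire outgoing configuration depends continuously on the incoming data in a neighborhood of the nominal values.

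The main construction proceeds by forward induction over the finitely many ``events'' (injections, collisions, and exits) that make up $\sigma$ on $[0,T]$. Because $\sigma$ has only finitely many particles and finitely many collisions, there are finitely many events, and I can list them in increasing time order $0 < t_1 < \cdots < t_M < T$. At each event I would produce an open neighborhood of the relevant data — a time interval $T_i$ around the injection time, and neighborhoods $\Xi_i,\Delta_i,S_i$ of the injection position, angle, and speed — small enough that (a) the correct combinatorial type of collision sequence is preserved, (b) no collision becomes tangential or simultaneous with the same disk, and (c) each injected particle still exits the system after performing exactly its one steering collision. The admissibility of $X$ and the non-tangentiality assumption guarantee that the nominal collisions are transversal, so each of these conditions defines an open condition and can be met by shrinking the neighborhoods. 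Composing the continuous event maps, I obtain a neighborhood $U$ of $X$ and a canonical neighborhood $C$ of $c$ such that every sample path starting in $U$ and driven by an injection sequence from $C$ realizes the same event structure and lands inside any prescribed neighborhood of $Y$; this is exactly the canonical neighborhood $\Sigma$ demanded by the statement.

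The step I expect to be the main obstacle is controlling the \emph{accumulated} sensitivity of the final state to perturbations: since the injected steering particles determine the disk angular velocities, which in turn determine the subsequent trajectories of the original particles, a small error in one injection propagates through all later collisions, and I must verify that the composition of finitely many continuous maps can still be forced into the target neighborhood $U$ of $Y$ by shrinking the input neighborhoods. The resolution is that each individual map is continuous (Lemma \ref{lemma:cts dependence of particle and disk's position and velocity}) and there are only finitely many of them, so the composition is continuous and the preimage of $U$ under the full event map is open and contains the nominal data; shrinking $U,\,T_i,\,\Xi_i,\,\Delta_i,\,S_i$ to lie inside this preimage finishes the argument. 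A secondary technical point is to ensure the perturbed injection sequences remain \emph{admissible} in the sense of producing no simultaneous same-disk collisions — this is handled exactly as in subsection \ref{subsect: flush particles out: no tang collisions}, by keeping distinct the nominal collision times and invoking openness to preserve their separation. Finally, the positivity of all injection densities (assumed in subsection \ref{subsect: coupling}) guarantees that the set of sample paths landing in $C$ has positive probability, so $\Sigma$ is a genuine canonical neighborhood.
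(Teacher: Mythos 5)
Your proposal is correct and follows essentially the same route as the paper: both reduce the lemma to the continuity, at the nominal data $(X,c)$, of the map sending an initial state and injection sequence to the final state, and both establish that continuity by decomposing the sample path into finitely many events (free flight, wall reflections, non-tangential disk collisions via Lemma \ref{lemma:cts dependence of particle and disk's position and velocity}, exits, and injections via Lemma \ref{lemma: particle enters 4D}) and composing the finitely many continuous event maps. The additional points you raise about preserving the collision combinatorics and avoiding simultaneous same-disk collisions are already built into the paper's definition of a canonical neighborhood, so no new argument is needed there.
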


\begin{theopargself}
\begin{proof}[of Lemma \ref{lemma:continuity around proper projected particle path}]
Denote by $c$ the sequence of injections that generates $\sigma$. Let $\Sigma'$ be any canonical neighborhood of $\sigma$; denote by $U_X'$ the neighborhood of $X$ and by $C'$ the canonical neighborhood of $c$ such that each sample path in $\Sigma'$ starts with an initial condition in $U_X'$ and follows a sequence of injections from $C'$. Define $f:U_X' \times C' \to \Omega$ as follows: if $\sigma' \in \Sigma'$ is a sample path that starts at state $X' \in U_X'$, is generated by a sequence of injections $c' \in C'$, and ends at state $Y'$, then $f(X',\epsilon')=Y'$. To prove lemma \ref{lemma:continuity around proper projected particle path}, it is enough to show that $f$ is continuous at $(X,c)$.

We assumed that along sample path $\sigma$ each particle follows a proper projected particle path, i.e. it is only allowed to collide with disks non-tangentially. The continuity of $f$ follows from the following facts:
\begin{itemize}
\item If a particle does not collide with $\partial \Gamma$ on time interval $[\tau_1, \tau_2]$, its position and velocity change continuously.
\item If a particle collides with the wall on time interval $[\tau_1,\tau_2]$ and it is not involved in any other collisions with $\partial \Gamma$ on time interval $[\tau_1,\tau_2]$, then its final position and velocity depend continuously of its initial position and velocity. [This fact follows from the continuity of the billiard flow at collisions]
\item If a particle collides with a disk non-tangentially on $[\tau_1, \tau_2]$ and neither the particle nor the disk is involved in other collisions on time interval $[\tau_1,\tau_2]$, then the particle's position and velocity as well as the disk's position and angular velocity depend continuously on their initial positions and velocities. [Follows from Lemma \ref{lemma:cts dependence of particle and disk's position and velocity}]
\item If a particle exits through $\gamma_L$ or $\gamma_R$ on time interval $[\tau_1,\tau_2]$ and does not collide with $\partial \Gamma$ on time interval $[\tau_1,\tau_2]$, then the coordinates of the other particles and disks are independent from the coordinates of the exiting particle on time interval $[\tau_1,\infty)$.
\item The position and velocity of an injected particle depend continuously on the injected parameters. [Follows from Lemma \ref{lemma: particle enters 4D}]
\end{itemize}

\begin{flushright}
\qed
\end{flushright}
\end{proof}
\end{theopargself}

\subsection{Proof of Proposition \ref{prop:flush particles out}: Tangential Collisions} \label{subsect: proof prop 2 with tang collisions}

When we constructed a sample path $\sigma_X$ from $X$ to some particle-less state $X_0 \in \Omega_0$ in section \ref{sect: flush particles out}, we first assigned a path in $\Gamma$ to each particle in $X$ from its initial position to an exit. In order for a particle to follow such a path, the disks had to be set to unique angular velocities at collisions (with $R \omega=v=v_t$ at tangential collisions). Then we showed that setting the disks to nearby angular velocities at collisions makes particles follow nearby paths with collisions happening at nearby times; this was crucial for choosing appropriate particle paths such that no simultaneous collisions with same disks occur.

When choosing nearby paths in order to avoid simultaneous collisions with same disks,  we might have to require that at some tangential collisions, $R$ times the angular velocities of the disks are \emph{not} equal to the velocities of the colliding particles. And near a tangential collision with $R \omega \ne v=v_t$, final position and velocity of a particle does not depend continuously on initial position and velocity unlike in the situation with $R \omega=v=v_t$. That prevents us from direct extension of Lemma \ref{lemma:continuity around proper projected particle path} to the situation when tangential collisions might occur.

However, Proposition \ref{prop:flush particles out} only requires us to choose a sample path $\sigma$ and a canonical neighborhood $\Sigma$ of $\sigma$ such that each sample path in $\Sigma$ ends in the given neighborhood $U_0$. By making the size of the discontinuity small enough, we will still be able to ensure that every sample path in $\Sigma$ ends in $U_0$.

The following two Lemmas imply Proposition \ref{prop:flush particles out}:

\begin{lemma} \label{lemma:canonical neighborhood Sigma_X}
Given $\epsilon >0$, there exists time $T_X>0$, a sample path $\sigma_X$ on $[0,T_X]$ from $X$ to some particle-less state $X_0$ and a canonical neighborhood $\Sigma_X$ of $\sigma_X$, such that each sample path in $\Sigma_X$ ends in an $\epsilon$-neighborhood $U_\epsilon(X_0)$ of $X_0$.
\end{lemma}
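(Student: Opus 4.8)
The plan is to take for $\sigma_X$ exactly the sample path produced by Proposition~\ref{prop:sample path flush particles out}, which drives the admissible state $X$ to some $X_0 \in \Omega_0$ while forcing each particle along an assigned path to an exit. Recall that in that construction every tangential disk collision is arranged so that $R\omega = v = v_t$, so that along $\sigma_X$ itself the colliding particle passes straight through as if no collision had occurred, and every non-tangential disk collision uses an angular velocity drawn from an open set (Lemma~\ref{lemma: open set of paths tangential}) chosen so that no two particles meet the same disk simultaneously. The disk angular velocities are realized by the auxiliary injections of Lemma~\ref{lemma: controllability}, whose controllability trajectories hit disks radially or nearly radially, hence non-tangentially. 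First I would record that $\sigma_X$ involves only finitely many events (free flights, wall reflections, disk collisions, injections, exits) and partition $[0, T_X]$ into finitely many subintervals on each of which each particle and each disk undergoes at most one event.

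On every subinterval whose events are free flight, wall reflection, non-tangential disk collision, injection, or exit, the terminal data depends continuously on the initial data and on the injection parameters: this is precisely the content of the continuity facts assembled in the proof of Lemma~\ref{lemma:continuity around proper projected particle path}, resting on Lemmas~\ref{lemma:cts dependence of particle and disk's position and velocity} and~\ref{lemma: particle enters 4D} and on continuity of the billiard flow. Since the non-tangential collisions and wall bounces of $\sigma_X$ are stable under small perturbations (openness, Lemma~\ref{lemma: open set of paths tangential}), a perturbed sample path keeps these events non-tangential and transverse, so the continuous regime persists between tangential events.

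The only genuinely discontinuous events are the tangential disk collisions, and the heart of the argument is to bound the size of the resulting jump. Near such a collision a perturbed sample path either (i) misses the disk and flies straight, (ii) grazes it in a nearly tangential, hence small-$v_\perp$, collision, or (iii) meets it tangentially; in each case the outgoing velocity differs from the straight-through nominal velocity by at most $2|v_\perp'| + \tfrac{2\eta}{1+\eta}|v_t' - R\omega'|$. The first term is small because the collision is grazing; the second is small because, by continuity of the trajectory and of the disk-velocity-setting injections up to this moment, $v_t' \approx v_t$ and $R\omega' \approx R\omega = v_t$, so that $|v_t' - R\omega'|$ is of the order of the neighborhood size. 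Thus each tangential collision contributes a jump that tends to $0$ as the canonical neighborhood shrinks, after which the perturbed path re-enters the continuous regime with data within this small jump of the nominal and, in particular, still reaches its exit.

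Finally I would accumulate the contributions: there are finitely many events, each contributing either a continuous (hence uniformly small on a small neighborhood) deviation or a jump that can be forced below any prescribed threshold, so choosing the neighborhood of $X$ and the canonical neighborhood of the generating injections small enough makes the total deviation of the terminal state below $\epsilon$; the resulting canonical neighborhood $\Sigma_X$ then has every sample path ending in $U_\epsilon(X_0)$ and, in particular, ending particle-less. The main obstacle is precisely the third step: controlling the discontinuity at the near-tangential collisions and verifying that perturbed particles still exit the system, so that the terminal states genuinely lie in $\Omega_0$ near $X_0$. The quantitative bound $2|v_\perp'| + \tfrac{2\eta}{1+\eta}|v_t' - R\omega'|$ on the velocity jump, kept small by keeping $R\omega'$ close to $v_t'$, is what must replace the continuity that fails here.
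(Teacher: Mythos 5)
Your proposal is correct and follows essentially the same route as the paper: the paper likewise reduces the lemma to continuity of the end-state map $g_j$ at the nominal parameters (via Lemma \ref{lemma: open set of paths neighborhood tangential}), where continuity survives the tangential collisions precisely because $R\omega = v = v_t$ is arranged there — which is exactly the content of your explicit jump bound $2|v_\perp'| + \tfrac{2\eta}{1+\eta}|v_t' - R\omega'|$. The only cosmetic difference is that the paper phrases this as pointwise continuity of $g_j$ at $(q_j^0, v_j^0, \omega^j_1, \ldots, \omega^j_{n(j)})$ and inserts an $\epsilon/2$ to account for $\sigma_X$ itself being a perturbation of the nominal path, rather than writing out the quantitative velocity-jump estimate as you do.
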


\begin{remark}
Note that state $X_0$ depends on the choice of $\sigma_X$, while the size $\epsilon$ of the neighborhood $U_\epsilon(X_0)$ around $X_0$ does not.
\end{remark}

\begin{lemma} \label{lemma:canonical neighborhood Sigma_0}
Given $Y_0 \in \Omega_0$ and a neighborhood $U_0$ of $Y_0$, there exists $\epsilon>0$ such that for any state $X_0 \in \Omega_0$, there exists time $T_0>0$, a sample path $\sigma_0:X_0 \to Y_0$ on $[0,T_0]$, and a canonical neighborhood $\Sigma_0$ of $\sigma_0$ in which each sample path starts in $U_\epsilon(X_0)$ and ends in $U_0$ and for any point $Y \in U_\epsilon(X_0)$, there exists a sample path in $\Sigma_0$ that starts at $Y$.
\end{lemma}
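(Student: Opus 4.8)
The plan is to build the sample path $\sigma_0$ by the same scheme as in Lemma \ref{lemma:from any state to any state in Omega_0} — driving the disks one at a time, in decreasing index order, to their target angular positions and velocities by means of injected particles — and then to control the endpoint map. The essential new point compared with Lemma \ref{lemma:from any state to any state in Omega_0} is that the tolerance $\epsilon$ must be chosen \emph{uniformly} in $X_0 \in \Omega_0$, which is not a compact space. I would first fix the construction for a given $X_0$, producing $\sigma_0 : X_0 \to Y_0$ on a time interval $[0,T_0]$ whose length can be taken to be a fixed constant independent of $X_0$, since the controllability of Lemma \ref{lemma: controllability}, and hence of Lemma \ref{lemma: control position and velocity of one disk}, can be carried out in an arbitrarily short prescribed time. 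Along $\sigma_0$ every collision of an injected particle with a disk is non-tangential (the particles are sent radially or at controlled, bounded-away-from-tangent angles, exactly as in the proof of Lemma \ref{lemma: controllability}), and every injected particle exits before time $T_0$.

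Next I would set up the endpoint map $f(X_0', c')$ sending an initial particle-less state $X_0' \in \Omega_0$ and an injection sequence $c'$ to the resulting terminal state, exactly as in the proof of Lemma \ref{lemma:continuity around proper projected particle path}. Because no tangential disk collisions occur along $\sigma_0$, the continuity facts assembled there apply verbatim: free flights, specular wall reflections, non-tangential disk collisions (Lemma \ref{lemma:cts dependence of particle and disk's position and velocity}), exits, and injections (Lemma \ref{lemma: particle enters 4D}) all depend continuously on their inputs. Hence $f$ is continuous at $(X_0, c)$, where $c$ is the reference injection sequence generating $\sigma_0$, so for each fixed $X_0$ there is a canonical neighborhood $\Sigma_0$ — with initial set $U_\epsilon(X_0)$ and an open injection neighborhood $C'$ of $c$ — all of whose sample paths end in $U_0$. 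The covering condition is then automatic: for $\epsilon$ small enough the reference sequence $c$ produces, from every $Y \in U_\epsilon(X_0)$, a valid sample path (no simultaneous same-disk collisions, all collisions still non-tangential since non-tangentiality is an open condition), and this path lies in $\Sigma_0$ by construction.

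The main obstacle is to make $\epsilon$ independent of $X_0$, since $\Omega_0 \cong (\partial D_1 \times \cdots \times \partial D_N) \times \mathbb{R}^N$ is non-compact in the angular-velocity directions, and continuity at each point only yields $\epsilon = \epsilon(X_0)$. I would resolve this using the $\eta=1$ collision law, under which the disk's post-collision angular velocity is $R\omega' = v_t$, depending on the colliding particle's tangential velocity but \emph{not} on the disk's prior angular velocity. Thus perturbing the initial angular velocity $\omega_j$ by $\delta\omega_j$ does not change the velocity to which $D_j$ is reset at its first hit; it only alters the outgoing velocity of the striking particle (by $R\delta\omega_j$) and the pre-first-hit rotation of $D_j$. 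Since the first-hit times are fixed by the injection schedule and bounded by $T_0$, and since the injection speeds are taken large (as the construction of Lemma \ref{lemma: controllability} already requires), these effects propagate through the finitely many subsequent collisions with coefficients bounded \emph{uniformly in the magnitudes} $|\omega_j|$; the terminal state's sensitivity to $X_0$ is therefore $O(\epsilon)$ with constants independent of $X_0$. In this way the unbounded angular-velocity directions are absorbed, and uniformity reduces to uniform continuity over the compact torus of disk angular positions, so a single $\epsilon>0$ serves for every $X_0$. This produces the required $T_0$, $\sigma_0$, and $\Sigma_0$, completing the proof.
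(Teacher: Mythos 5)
Your proposal follows the same route as the paper: the paper's entire proof of this lemma is the single sentence that it ``follows directly from Lemmas \ref{lemma:from any state to any state in Omega_0} and \ref{lemma:continuity around proper projected particle path}'', i.e., build $\sigma_0:X_0\to Y_0$ by the disk-by-disk controllability construction and extract $\Sigma_0$ from the continuity of the endpoint map along sample paths with only non-tangential disk collisions, exactly as you do. The one place you go beyond the paper is the uniformity of $\epsilon$ over the non-compact family of initial angular velocities: the paper does not address this point, and your observation that the collision law resets (for general $\eta$, contracts by the factor $\frac{\eta-1}{\eta+1}$) the dependence on the initial $\omega_j$ at the first hit, so that the modulus of continuity of the endpoint map is controlled uniformly in $X_0$ once the injection speeds are taken large relative to $|\omega_j|$, is a correct and genuinely needed supplement to the stated quantifier order ``$\exists\,\epsilon\ \forall\, X_0$''. (One could alternatively sidestep uniformity by reordering the application of Lemmas \ref{lemma:canonical neighborhood Sigma_X} and \ref{lemma:canonical neighborhood Sigma_0} in the proof of Proposition \ref{prop:flush particles out}, since $\sigma_X$ and hence $X_0$ there are constructed independently of $\epsilon$; but as the lemma is stated, some argument like yours is required and the paper omits it.)
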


Lemma \ref{lemma:canonical neighborhood Sigma_0} follows directly from Lemmas \ref{lemma:from any state to any state in Omega_0} and \ref{lemma:continuity around proper projected particle path}.

\begin{theopargself}
\begin{proof}[of Lemma \ref{lemma:canonical neighborhood Sigma_X}]

As in section \ref{sect: flush particles out}, denote the initially assigned path in $\Gamma$ traced by the $j^{th}$ particle by $\gamma_j$. Let $\tau^j_1, \cdots, \tau^j_{n(j)}$ be the times of collisions with disks, $D_{k(j,1)}, \cdots, D_{k(j,n(j))}$ and let $\omega^j_1, \cdots, \omega^j_{n(j)}$ be the required angular velocities (with $R \omega^j_i = v=v_t$ at tangential collisions).

In the following Lemma, we assume that at time $0$ the $j^{th}$ particle is the only particle in the system in order to ensure that the system is defined at all times.

\begin{lemma} \label{lemma: open set of paths neighborhood tangential}
Assume the $j^{th}$ particle has $m(j) \geq 0$ tangential collisions before it has a non-tangential collision or exits the system. Then there exist open neighborhoods $V^j$ of $(q_j^0,v_j^0)$, $I^j_1$ of $\omega^j_1$, $\cdots$, $I^j_{n(j)}$ of $\omega^j_{n(j)}$ such that for any choice of angular velocities $(\omega^j_1)' \in I^j_1$, $\cdots$, $(\omega^j_{n(j)})' \in I^j_{n(j)}$, if a particle starts with a position and a velocity from $V^j$, it possibly collides with $D_{k(j,1)}, \cdots, D_{k(j,m(j))}$ set to $(\omega^j_1)'$, $\cdots$, $(\omega^j_{m(j)})'$, collides with $D^j_{k(j,m(j)+1)}, \cdots, D^j_{k(j,n(j))}$ set to $(\omega^j_{m(j)+1})'$, $\cdots$, $(\omega^j_{n(j)})'$, and exits the system.
\end{lemma}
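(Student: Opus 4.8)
The plan is to follow the proof of Lemma~\ref{lemma: open set of paths tangential} almost verbatim, the only new feature being that the initial condition $(q_j^0,v_j^0)$ is now allowed to range over a neighborhood $V^j$ rather than being held fixed. As in that lemma, I would split the trajectory at the last tangential collision, the $m(j)$-th one. The ``tail'', consisting of the non-tangential collisions with $D_{k(j,m(j)+1)},\ldots,D_{k(j,n(j))}$ followed by the exit, is governed entirely by continuity: Lemma~\ref{lemma:cts dependence of particle and disk's position and velocity}, together with continuity of the billiard flow at non-tangential disk collisions and at wall reflections, gives a version of Lemma~\ref{lemma: open set of paths} in which both the angular velocities and the incoming particle state are free to vary. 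Concretely, the first step is to produce an open neighborhood $W$ of the particle's state immediately after the $m(j)$-th tangential collision and open neighborhoods $I^j_{m(j)+1},\ldots,I^j_{n(j)}$ such that every state in $W$, evolved with any angular velocities drawn from these neighborhoods, collides non-tangentially with the prescribed disks and exits the system.

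The real work is to propagate a neighborhood backwards through the tangential collisions $1,\ldots,m(j)$, each performed with $R\omega^j_i=v=v_t$. The essential observation is that at a tangential collision the outgoing state is a small perturbation of ``continue straight with unchanged velocity'': since the incoming normal component vanishes there, the collision rules reduce the particle velocity to its unchanged value, so in the unperturbed configuration the particle passes through the tangent point undeflected. The map sending (incoming state, $\omega$) to the outgoing state is genuinely discontinuous at this grazing configuration, but its jump is controlled. If a perturbed trajectory misses the disk it simply continues straight and its outgoing state agrees, up to the size of the perturbation, with the unperturbed one. If instead it strikes the disk transversally near the tangent point, then the incoming normal component $v_\perp$ is small, and the resulting jumps $2v_\perp$ in the normal velocity, $\frac{2\eta}{1+\eta}(v_t-R\omega)$ in the tangential velocity, and $\frac{2}{1+\eta}(v_t-R\omega)$ in the disk velocity are all of size $O(v_\perp)+O(|v_t-R\omega|)$. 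Hence, given any $\varepsilon>0$, by taking the neighborhood of the incoming state small (forcing $v_\perp$ small) and the neighborhood $I^j_i$ of $v_t/R$ small (forcing $|v_t-R\omega|$ small, and in particular preserving the sign of $\omega$), the outgoing state is forced to within $\varepsilon$ of the unperturbed outgoing state in all three cases: miss, graze, or slight transversal hit.

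With this controlled-jump estimate in hand, I would argue by downward induction on $i=m(j),m(j)-1,\ldots,1$: given a target neighborhood of the state just before the $i$-th collision that suffices for the remaining construction, the estimate above together with continuity of the free flight and wall reflections between collisions produces a neighborhood of the state just after the $(i-1)$-th collision, and a neighborhood $I^j_i$ of $\omega^j_i$, that map into the target. Running this induction down to $i=1$ yields the required neighborhood $V^j$ of $(q_j^0,v_j^0)$ and the neighborhoods $I^j_1,\ldots,I^j_{m(j)}$; for the first $m(j)$ disks the word ``possibly collides'' in the statement absorbs exactly the ambiguity between the ``miss'' and ``transversal hit'' alternatives. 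I expect the main obstacle to be the controlled-jump estimate itself: one must check, uniformly over the perturbations, that the collision point of any transversal hit stays close to the tangent point, so that the local tangent/normal frame (and hence the directions into which the velocity components are resolved) varies continuously, and that no spurious re-collision with the same disk occurs, which holds because after such a hit the outgoing normal velocity points away from the disk.
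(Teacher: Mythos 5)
Your proposal is correct and takes essentially the same route as the paper's proof: first reduce the non-tangential tail to Lemma~\ref{lemma: open set of paths} by continuity, then relax the tangential collisions one at a time, using the fact that near a grazing collision with $R\omega\approx v_t$ both the ``miss'' and the ``transversal hit near the tangency point'' branches land within $\varepsilon$ of the unperturbed straight-through outcome, with ``possibly collides'' absorbing the hit/miss dichotomy. The paper packages your controlled-jump estimate as a split of the neighborhood into colliding and non-colliding subsets with continuity of the outcome on each piece, but the content is the same.
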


We will prove Lemma \ref{lemma: open set of paths neighborhood tangential} after finishing the proof of Lemma \ref{lemma:canonical neighborhood Sigma_X}.

Let $T^j_X$ be an upper bound on the time it takes for the $j^{th}$ particle to exit $\Gamma$ along all possible paths described in Lemma \ref{lemma: open set of paths neighborhood tangential}; and let $T_X=\max_{1 \leq j \leq k}\{T^j_X\}$.
Define $g_j:V^j \times I^j_1 \times \cdots \times I^j_{n(j)} \to \Omega_0$ as follows: if a particle starts with a position and a velocity from $V^j$, and $D_{k(j,1)}, \cdots, D_{k(j,n(j))}$ are set to $(\omega^j_1)' \in I^j_1$, $\cdots$, $(\omega^j_{n(j)})' \in I^j_{n(j)}$ at potential collision times, let $g_j(q,v,(\omega^j_1)', \cdots, (\omega^j_{n(j)})')$ be the state of the system at time $T_X$. $g^j$ is continuous at $(q^j_0,v^j_0,\omega^j_1, \cdots, \omega^j_{n(j)})$; so there exist sub-neighborhoods $V^j_\epsilon$ of $V^j$, $I^j_{1,\epsilon}$ of $I^j_1$ $\cdots$, $I^j_{n(j),\epsilon}$ of $I^j_{n(j)}$ such that for any $(q_j,v_j) \in V^j_\epsilon$, $(\omega^j_1)' \in I^j_\epsilon$, $\cdots$, $(\omega^j_{n(j)})' \in I^j_{1,\epsilon}$, $$|g_j(q_j,v_j,(\omega^j_1)', \cdots, (\omega^j_{n(j)})')-g_j(q_j^0,v_j^0,\omega^j_1, \cdots, \omega^j_{n(j)})| < \epsilon/2.$$

Now we are ready to deal with $k$-particle system. As in section \ref{sect: flush particles out}, we can choose a path for each particle in $X$ such that upon each disk collision, $(\omega^j_i)' \in I^j_{i,\epsilon}$ and no simultaneous collisions with same disks occur. That defines $\sigma_X$ on $[0,T_X]$; let $X_0$ be the state where $\sigma_X$ ends. To define $\Sigma_X$ choose further sub-neighborhoods of $I^j_{i,\epsilon}$'s to ensure that each sample path is $\Sigma_X$ is defined up to time $T_X$. Then each sample path in $\Sigma_X$ ends in an $\epsilon$-neighborhood $U_\epsilon(X_0)$ of $X_0$ by the above inequality. \qed
\end{proof}

\begin{proof}[of Lemma \ref{lemma: open set of paths neighborhood tangential}]

By Lemma \ref{lemma: open set of paths} and the fact the near a tangential collision with $R \omega=v=v_t$, particle's final position and velocity depend continuously on its initial position and velocity, there exist open neighborhoods $V^j_0$ of $(q^0_j,v^0_j)$, $I^j_{m(j)+1,0}$ of $\omega^j_{m(j)+1}$, $\cdots$, $I^j_{n(j),0}$ of $\omega^j_{n(j)}$ such that for any choice of angular velocities $(\omega^j_{m(j)+1})' \in I^j_{m(j)+1}$, $\cdots$, $(\omega^j_{n(j)})' \in I^j_{n(j)}$ if a particle starts with a position and a velocity from $V^j_0$, it possibly collides with $D_{k(j,1)}, \cdots, D_{k(j,m(j))}$ set to $\omega^j_1, \cdots, \omega^j_{m(j)}$, collides with $D_{k(j,m(j)+1)}$, $\cdots$, $D_{k(j,n(j))}$ set to $(\omega^j_{m(j)+1})'$, $\cdots$, $(\omega^j_{n(j)})'$, and exits the system.

Now we would like to allow an open neighborhood of angular velocities around the $m^{th}$ tangential collision. The neighborhood $V^j_0$ can be split into two parts: $(V^j_0)_c \sqcup (V^j_0)_{nc} = V^j_0$, where $(V^j_0)_c$ denotes the set of initial positions and velocities such that a particle with a position and a velocity from $(V^j_0)_c$ will collide with $D^j_{k(j,m(j))}$ provided that $D^j_{k(j,1)}, \cdots, D^j_{k(j,m(j)-1)}$ are set to angular velocities $\omega^j_1, \cdots, \omega^j_{m(j)-1}$ before potential collisions.

Then the position and velocity of a particle after collision $D_{k(j,m(j))}$ depend continuously on its initial position and velocity in $(V^j_0)_c$ even if the angular velocity of $D_{k(j,m(j))}$ is not equal to $\omega^j_{m(j)}$. Also, since particles in $(V^j_0)_{nc}$ do not collide with $D_{k(j,m(j))}$, they exit the system provided the angular velocities of the disks $D_{k(j,1)}$, $\cdots$, $D_{k(j,m(j)-1)}$ are set to $\omega^j_1$, $\cdots$, $\omega^j_{m(j)-1}$ at appropriate times and angular velocities of the disks $D_{k(j,m(j)+1)}$, $\cdots$, $D_{k(j,n(j))}$ are set to values from $I^j_{m(j)+1,0}$, $\cdots$, $I^j_{n(j),0}$ before collisions.

Therefore there exists an open neighborhood $I^j_{m(j)}$ of $\omega^j_{m(j)}$ and open subneighborhoods $V^j_{m(j)}$ of $V^j_0$, $I^j_{m(j)+1,m(j)}$ of $I^j_{m(j)+1}$, $\cdots$, $I^j_{n(j),m(j)}$ of $I^j_{n(j)}$ such that if a particle starts with a position and a velocity from $V^j_{m(j)}$, it possibly collides with $D^j_{k(j,1)}$, $\cdots$, $D^j_{k(j,(m(j)-1))}$ with angular velocities $\omega^j_1$, $\cdots$, $\omega^j_{m(j)-1}$, possibly collides with $D^j_{k(j,m(j))}$ with angular velocity $(\omega^j_{m(j)})' \in I^j_{m(j)}$, collides with $D_{k(j,m(j)+1)}$, $\cdots$, $D_{k(j,n(j))}$ with angular velocities $(\omega^j_{m(j)+1})' \in I^j_{m(j)+1,m(j)}$,   $\cdots$, $(\omega^j_{n(j)})' \in I^j_{n(j),m(j)}$, and exits the system. The remaining tangential collisions are treated similarly. \qed
\end{proof}
\end{theopargself}

\begin{acknowledgement}
I would like to thank my Ph.D. thesis advisor Lai-Sang Young for proposing the problem, fruitful discussions, effective criticism, and useful comments on many drafts of this paper. This work was partially supported by the National Science Postdoctoral Research Fellowship.
\end{acknowledgement}

\end{document}